\synctex=1
\newif\ifmhotti\mhottitrue

\documentclass[english,letterpaper,cleveref,autoref,thm-restate,colorlinks=true,linkcolor=blue,citecolor=red,nameinlink]{lipics-v2021}
\hideLIPIcs
\usepackage[LGR,OT1]{fontenc}
\usepackage[utf8]{inputenc}
\usepackage[disable]{todonotes}
\usepackage[inline]{enumitem}
\usepackage{graphicx,xspace,amsmath,wasysym}
\nolinenumbers

\newcommand{\prooflink}[1]{\hypersetup{linkcolor=magenta}\hyperref[#1]{$\triangledown$}}
\newcommand{\statlink}[1]{\hypersetup{linkcolor=blue}\hyperref[#1]{$\vartriangle$}}
\newcommand{\mylink}{}

\usepackage{subcaption}
\usepackage{dsfont}
\usepackage{booktabs}
\usepackage{xcolor}
\usepackage[nofancy]{svninfo}
\svnInfo $Id: newArcs.tex 7566 2018-10-25 15:48:32Z eth.foersth $

\usepackage{amsfonts,amssymb,bm,thm-restate}



 %
 %
 %
 %
\newcommand{\R}{\ensuremath{\mathds R}} %

\newcommand{\textgreek}[1]{\begingroup\fontencoding{LGR}\selectfont#1\endgroup}



\newcommand{\todosc}[1]{\par\noindent\begin{nolinenumbers}\todo[color=yellow!20!white,inline]{SC: #1}\end{nolinenumbers}\par\noindent}

\newcommand{\todomh}[1]{\par\noindent\begin{nolinenumbers}\todo[color=green!20!white,inline]{MH: #1}\end{nolinenumbers}\par\noindent}


\graphicspath{{figures/}}


\title{Monotone Arc Diagrams with few Biarcs} 

\date{Revision \svnInfoRevision\ --- \svnToday}
\authorrunning{Chaplick, F\"orster, Hoffmann, and Kaufmann}

\author{Steven Chaplick}{Maastricht University, The Netherlands\and \url{https://www.maastrichtuniversity.nl/sa-chaplick}}{s.chaplick@maastrichtuniversity.nl}{https://orcid.org/0000-0003-3501-4608}{supported by DFG grant WO 758/11-1}
\author{Henry F\"orster}{Universit\"{a}t T\"{u}bingen, Germany \and \url{https://uni-tuebingen.de/fakultaeten/mathematisch-naturwissenschaftliche-fakultaet/fachbereiche/informatik/lehrstuehle/algorithmik/team/dr-henry-foerster/}}{henry.foerster@uni-tuebingen.de}{https://orcid.org/0000-0002-1441-4189}{}
\author{Michael Hoffmann}{Department of Computer Science, ETH Z\"urich, Switzerland \and\url{https://people.inf.ethz.ch/hoffmann/}}{hoffmann@inf.ethz.ch}{https://orcid.org/0000-0001-5307-7106}{supported by the Swiss National Science Foundation within the collaborative D-A-CH project \emph{Arrangements and Drawings} as SNSF project 200021E-171681.}
\author{Michael Kaufmann}{Universit\"{a}t T\"{u}bingen, Germany \and \url{https://uni-tuebingen.de/fakultaeten/mathematisch-naturwissenschaftliche-fakultaet/fachbereiche/informatik/lehrstuehle/algorithmik/team/prof-dr-michael-kaufmann/}}{michael.kaufmann@uni-tuebingen.de}{https://orcid.org/0000-0001-9186-3538
}{}

\Copyright{S. Chaplick, H. F\"orster, M. Hoffmann, and M. Kaufmann}

\ccsdesc[500]{Mathematics of computing~Combinatorics}
\ccsdesc[500]{Mathematics of computing~Graph theory}
\ccsdesc[500]{Human-centered computing~Graph drawings}

\acknowledgements{This work started at the workshop on \emph{Graph and Network Visualization} (GNV~2017) in Heiligkreuztal, Germany.  Preliminary results were presented at the 36th European Workshop on Computational Geometry (EuroCG~2020). We thank Stefan Felsner and Stephen Kobourov for useful discussions.}  



\hideLIPIcs

\begin{document}
  

\maketitle

\begin{abstract}
  We show that every planar graph has a monotone topological $2$-page book embedding where at most $(4n-10)/5$ (of potentially $3n-6$) edges cross the spine, and every edge crosses the spine at most once; such an edge is called a \emph{biarc}. We can also guarantee that all edges that cross the spine cross it in the same direction (e.g., from bottom to top). For planar $3$-trees we can further improve the bound to~$(3n-9)/4$, and for so-called Kleetopes we obtain a bound of at most~$(n-8)/3$ edges that cross the spine. The bound for Kleetopes is tight, even if the drawing is not required to be monotone. A \emph{Kleetope} is a plane triangulation that is derived from another plane triangulation~$T$ by inserting a new vertex~$v_f$ into each face~$f$ of~$T$ and then connecting~$v_f$ to the three vertices of~$f$. 
\keywords{planar graph, topological book embedding, linear layouts}
\end{abstract} 

\section{Introduction}\label{sec:intro}

\emph{Arc diagrams} (\cref{fig:0}) are drawings of graphs that represent
vertices as points on a horizontal line, called \emph{spine}, and edges as
\emph{arcs}, consisting of a sequence of halfcircles centered on the spine. A
\emph{proper arc} consists of one halfcircle. In \emph{proper arc
  diagrams} all arcs are proper (see \cref{fig:1:c}). In \emph{plane} arc
diagrams no two edges cross. Note that proper plane arc diagrams are also known
as \emph{$2$-page book embeddings}. 
Bernhard and Kainen~\cite{bk-btg-79} characterized the graphs that admit proper plane arc
diagrams: subhamiltonian planar graphs, i.e., subgraphs of planar graphs with a
Hamiltonian cycle. In particular, non-Hamiltonian maximal planar graphs do not
admit proper plane arc diagrams.

\begin{figure}[bhtp]
  \centering%
  {\captionsetup{singlelinecheck=true}
  \begin{subfigure}[b]{.3\textwidth}
    \centering
    \includegraphics{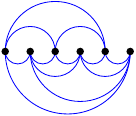}
    \subcaption{}	
    \label{fig:1:c}
  \end{subfigure} 
  \hfill
  \begin{subfigure}[b]{.3\textwidth}
    \centering
    \includegraphics{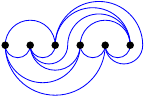} 
    \subcaption{}	
    \label{fig:1:a}  
  \end{subfigure}\hfill
  \begin{subfigure}[b]{.3\textwidth}
    \centering
    \includegraphics{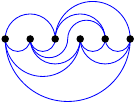}
    \subcaption{}	
    \label{fig:1:b} 
  \end{subfigure}\hfill}
  
  \caption{Arc diagrams of the octahedron: (a)~proper, (b)~general, and (c)~monotone.\label{fig:0}}
\end{figure}

To represent all planar graphs as a plane arc diagram, it suffices to allow each edge to cross the spine 
once~\cite{kw-evpfb-02}. The resulting arcs composed of two halfcircles are called \emph{biarcs} (see \cref{fig:1:a}). Additionally, all edges can be drawn as \emph{monotone} curves w.r.t.~the spine~\cite{ddlw-ccdpg-05}; such a drawing is called a \emph{monotone topological ($2$-page) book embedding} (see \cref{fig:1:b}). A monotone biarc is either \emph{up-down} or \emph{down-up}, depending on whether the left halfcircle is drawn above or below the spine, respectively. Note that a \emph{monotone topological} book embedding is not necessarily a book embedding, even though the terminology suggests it.

In general, biarcs are needed, but \emph{many} edges can be drawn as proper
arcs. Cardinal, Hoffmann, Kusters, T{\'o}th, and
Wettstein~\cite{chktw-adfdht-18} gave bounds on the required number of biarcs by
showing that every planar graph on~$n\ge 3$ vertices admits a plane arc diagram
with at most~$\lfloor(n-3)/2\rfloor$ biarcs and 
how this quantity is related to the diameter of the so-called combinatorial flip graph of triangulations. However, they allow general, not necessarily monotone biarcs. When requiring biarcs to be monotone, Di Giacomo, Didimo, Liotta, and Wismath~\cite{ddlw-ccdpg-05} gave an algorithm to construct a monotone plane arc diagram that may create close to~$2n$ biarcs for an $n$-vertex planar graph. Cardinal, Hoffmann, Kusters, T{\'o}th, and Wettstein~\cite{chktw-adfdht-18} improved this bound to at most~$n-4$ biarcs.

As a main result, we improve the upper bound on the number of monotone biarcs.

\begin{restatable}{theorem}{mainthm}\label{thm:1}
  Every $n$-vertex planar graph admits a plane arc diagram with at most
  $\left\lfloor \frac{4}{5}n \right\rfloor - 2$ biarcs that are all down-up
  monotone. 
\end{restatable}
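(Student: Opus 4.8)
The plan is to reduce to triangulations and then build the arc diagram incrementally along a well-chosen canonical ordering, so that a constant fraction of the insertion steps need no biarc. First I would observe that it suffices to prove \cref{thm:1} for plane triangulations: an arbitrary $n$-vertex planar graph $G$ is a spanning subgraph of some plane triangulation $\widehat{G}$, and erasing edges from a plane arc diagram of $\widehat{G}$ in which every biarc is down-up monotone produces such a diagram of $G$ with no additional biarcs. So assume $G$ is a plane triangulation on $n\ge 5$ vertices, where we are still free to choose the outer triangle (and, with it, the canonical ordering).

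Next I would set up the incremental construction. Fix a canonical ordering of $G$ in the form of an ordered partition $V_1,\dots,V_K$ (in the sense of Kant): $V_1$ is an edge of the outer triangle, $V_K$ its third vertex, and each intermediate part $V_\ell$ is either a single vertex adjacent to at least two consecutive vertices of the outer contour of $G_{\ell-1}:=G[V_1\cup\dots\cup V_{\ell-1}]$, or a \emph{chain} of at least two vertices (an induced path attached to the contour only at its two end vertices). Processing the parts in this order, append the new vertices at the right end of the spine and maintain a monotone plane arc diagram of $G_\ell$ with the invariant that every biarc is down-up monotone and that the current outer contour is laid out so that a vertex appended next sees all contour vertices by pairwise non-crossing monotone arcs, one page free above and one below. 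A chain of $p$ vertices then costs at most one new biarc (its $p-1$ path edges and one of its two contour edges being proper arcs), i.e.\ at most one biarc per two new vertices; a single-vertex part with contour neighbours $w_1,\dots,w_k$ costs at most one new down-up biarc (place it next to $w_1$ or $w_k$ so that one end edge is proper, and route the remaining $k-1$ edges by a careful choice of pages and spine-crossing points), and costs nothing when $k=2$. Summing over all parts already yields a bound of $n-O(1)$.

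The crux --- and the step I expect to be the main obstacle --- is to gain a further $\tfrac15 n$. The idea is to declare a step \emph{cheap} if it introduces no new biarc (single-vertex steps with $k=2$, chain steps amortised over their vertices, and suitably ``aligned'' pairs of consecutive single-vertex steps that can share one biarc) and to prove that, for an appropriately chosen outer triangle and canonical ordering, cheap steps account for at least about $n/5$ of the vertices while every remaining step keeps costing at most one biarc per new vertex; this gives at most $n-\tfrac15 n-O(1)=\tfrac45 n-O(1)$ biarcs, after which the additive term is tuned down to $2$. The combinatorial heart is then a structural lemma --- every plane triangulation admits a canonical ordering with sufficiently many cheap steps, presumably via a discharging argument that exploits the abundance of low-degree vertices and short ``ears'' in triangulations --- or, equivalently, an amortised charging argument that shifts the unit cost of an expensive step onto nearby cheap vertices. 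The difficulty is that each biarc-avoiding rearrangement at step $\ell$ must be carried out without destroying the layout invariant needed at step $\ell+1$, and under the global restriction that \emph{all} biarcs be down-up monotone, which discards half of the natural re-routing options and couples the choices made at different steps; the charging has to be set up with enough slack to survive this, and pinning the fraction to exactly $1/5$ and the constant to exactly $2$ --- with the planar $3$-trees and Kleetopes from the introduction as the relevant tighter instances --- is the quantitatively delicate part.
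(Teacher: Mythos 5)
Your high-level plan --- reduce to triangulations, process a canonical ordering incrementally, amortize biarcs by a charging/discharging argument --- is the same general framework as the paper, and the reduction to triangulations is fine. But the proposal has a genuine gap and one false claim.

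The false claim: you assert that a single-vertex step with contour degree $k=2$ ``costs nothing.'' This is only true when the one covered contour edge is drawn on the lower page (a pocket). If it is drawn on the upper page (a mountain), then a new vertex placed above it must either cross the spine to reach both neighbours or force that mountain --- and every arc nested underneath it --- to be pushed to the other page. This is exactly the type $\mathcal{T}(2,\frown)$, which the paper singles out (together with $\mathcal{T}(3,\frown^2)$, $\mathcal{T}(4,\frown^3)$, $\mathcal{T}(3,\frown\smile)$) as the four \emph{problematic} insertions that drive the whole difficulty of the bound. Treating $k=2$ as automatically cheap makes the subsequent accounting unsound.

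The gap: the entire $\tfrac{1}{5}n$ saving is deferred to ``a structural lemma --- presumably via a discharging argument.'' That lemma is precisely the content of the paper, and it is not a routine count of low-degree vertices: the paper does \emph{not} use Kant's chain decomposition but rather exploits the freedom to choose which eligible vertex to insert next (via an \emph{extensible} canonical ordering), sets up a credit invariant (\,$\chi=1/5$ per pocket, $1$ per outer mountain, $1$ per biarc\,), shows default insertion stays within $1-\chi$ unless all eligible vertices are problematic, and then, in the problematic case, identifies a minimally non-eligible ``pivot'' vertex $u$ and recursively drains the regions it spans, sometimes recursing into separating triangles with a separately-stated lemma (\cref{lem:mainadapt}/\cref{lem:mainstrong}) tailored to keep the outer contour usable. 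None of that machinery appears in your sketch; the chain-based canonical ordering you propose does not obviously support the same case analysis, because the ``cheap'' and ``expensive'' configurations are defined by the current pocket/mountain profile of the contour, not by the degree of the inserted part, and the requirement that all biarcs be down-up couples consecutive steps exactly as you anticipate but do not resolve. So the proposal correctly names the difficulty but does not surmount it.
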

\todosc{Where do we justify this runtime? with the recursive construction in regions it seems not obvious. I think we could claim polynomial time without needing to really explain it. I see the construction likely leading to quadratic time, but indeed there is no space to explain. }
\todomh{Right. Do we want to try to make an argument or just drop the runtime claim (for now)? I'll comment it out for now because I don't see room to argue about this in the main text.}

It is an intriguing open question if there is a \emph{monotonicity penalty}, that is, 
is there a graph~$G$ and a plane arc diagram of~$G$ with~$k$ biarcs such that every monotone plane arc diagram of~$G$ has strictly more than~$k$ biarcs? No such graph is known, even if for the stronger condition that all biarcs are 
monotone of the same type, such as down-up.

For general plane arc diagrams, in some cases $\lfloor (n-8)/3\rfloor$ biarcs are required~\cite{chktw-adfdht-18}. The (only) graphs for which this lower bound is known to be tight belong to the class of Kleetopes.
A \emph{Kleetope} is a plane triangulation\footnote{A \emph{plane} triangulation is a triangulation associated with a combinatorial embedding. For the scope of this paper, we also consider the outer face to be fixed.} that is derived from another plane triangulation~$T$ by inserting a new vertex~$v_f$ into each face~$f$ of~$T$ and then connecting~$v_f$ to the three vertices of~$f$. One might think that Kleetopes are good candidates to exhibit a monotonicity penalty. 
However, we show that this is not the case, but instead the known lower bound is tight. 

\begin{restatable}{theorem}{kleetopes}\label{thm:kleetope}
  Every Kleetope on~$n$ vertices admits a monotone plane arc diagram with at most~$\lfloor(n-8)/3\rfloor$ biarcs, where every biarc is down-up.
\end{restatable}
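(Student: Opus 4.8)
Write $K$ as the Kleetope of a plane triangulation $T$ and put $m := |V(T)|$. Since $T$ has $2m-4$ faces, $K$ has $n = 3m-4$ vertices; in particular $n \equiv 2 \pmod 3$, so $\lfloor (n-8)/3\rfloor = m-4$, and we may assume $m\ge 4$ (for $m=4$ the bound is $0$). The plan is to fix a layout of the $m$ vertices of $T$ on the spine and then insert the $2m-4$ capping vertices $v_f$. Each $v_f$ has degree $3$, its neighbours being exactly the three vertices of the face $f$, and the $v_f$ form an independent set; so the task is to route the $3(2m-4)$ new edges on two pages while crossing the spine only $m-4$ times in total, each crossing down-up. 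The organizing device I would use is to pair up the $2m-4$ faces of $T$ into $m-2$ \emph{matched pairs} of faces sharing an edge --- equivalently, a perfect matching of the dual $T^{\ast}$, which exists by Petersen's theorem since $T^{\ast}$ is a bridgeless cubic planar graph. The aim is then to realize each matched pair with at most one down-up biarc, while two carefully chosen pairs (near the outer face and the start of the construction) are realized by proper arcs only, giving at most $(m-2) - 2 = m-4$ biarcs.

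Concretely, I would take a canonical (shelling) ordering $v_1,\dots,v_m$ of $T$ with $v_1 v_2$ on the outer face and place the $v_i$ on the spine in this order. When $v_i$ is added ($i \ge 3$) it is joined to a contiguous arc $w_l,\dots,w_r$ of the current outer boundary, creating the faces $F_j = v_i w_j w_{j+1}$; I would insert each capping vertex $v_{F_j}$ on the spine right next to $v_i$, in the order along the boundary, so that $v_{F_j}$ has a short arc to $v_i$ and two arcs reaching back to $w_j$ and $w_{j+1}$. For a matched pair $\{f,g\}$ with $f = abc$, $g = abd$ sharing the edge $ab$, one edge of $v_f$ and one of $v_g$ land on $\{a,b\}$, and I would nest the four vertices $a,b,v_f,v_g$, the arc for $ab$, and these edges on the two pages without a biarc; the only edges that can then force a spine crossing are the two ``far'' edges $v_f c$ and $v_g d$, and by choosing the pages of the gadget arcs suitably I would draw one of them as a proper arc and the other as a single down-up biarc. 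Doing this uniformly keeps the diagram monotone with all biarcs down-up, while the two pairs attached to the outer face and to the central triangle $v_1 v_2 v_3$ are arranged to avoid biarcs entirely, which is what produces the additive constant.

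The main obstacle is making all these local decisions globally consistent: the ``far'' edges of different pairs, together with the edges of $T$ themselves, may cross one another, so one must prove that the whole two-page assignment can be chosen coherently --- each page an outerplanar family of arcs --- while still using at most one biarc per pair. I would attack this by an inductive construction that processes the vertices $v_i$ (and the capping vertices they create) in canonical order, maintaining the invariant that the current diagram is a monotone plane arc diagram of the sub-Kleetope on the processed vertices and already-capped faces, and that the current outer boundary has enough structure to guarantee that the next ``far'' edge can always be inserted as a proper arc or as a down-up biarc without disturbing earlier arcs. The delicate point is the insertion of a vertex $v_i$ of large back-degree, which creates and must immediately cap several faces at once: verifying that the invariant survives such a step, and that the one-biarc-per-pair budget is respected there (and exactly tuned at the two free pairs, so the total comes out to $m-4$), is the heart of the argument.
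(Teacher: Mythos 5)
Your approach is genuinely different from the paper's, but as written it has a gap precisely at the place you yourself identify as ``the heart of the argument.'' The paper obtains Theorem~\ref{thm:kleetope} as an immediate corollary of a more general statement (Theorem~\ref{thm:degreethree}): every triangulation on $n$ vertices with $d$ degree-three vertices has a monotone plane arc diagram with at most $n-d-4$ down-up biarcs. The key idea there is to first draw the inner triangulation $T$ (the graph after deleting the degree-$3$ vertices) using at most $|V(T)|-4$ biarcs \emph{and with the extra property that every triangle of the drawing intersects the spine}. Once that holds, each capping vertex can be dropped onto the spine segment inside its containing triangle and joined by three proper arcs, so the entire second phase is biarc-free. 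In other words, in the paper the biarcs live only among the edges of $T$, while in your proposal they live among the ``far'' edges $v_f c$, $v_g d$, and the accounting unit is a Petersen matching of $T^\ast$. Your arithmetic ($n=3m-4$, target $m-4$) is correct, and both routes would yield the same bound if completed.

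The gap is that the central claim --- each matched pair $\{f,g\}$ can be realized with at most one \emph{down-up} biarc, this can be done consistently across all pairs and jointly with the drawing of $T$, and exactly two pairs can be made biarc-free --- is asserted but not established, and it is not clear it is even attainable in the setup you describe. The two faces of a pair share an edge $ab$, but in a canonical ordering of $T$ the third vertices $c$ and $d$ may enter at very different times, so $v_f$ and $v_g$ are inserted in different steps and $a,b,c,d$ sit on the spine in an order you do not control; the local gadget picture (``nest $a,b,v_f,v_g$'') presupposes adjacency on the spine that the shelling does not grant you. You also give no invariant ensuring the single surviving far edge of a pair can always be drawn as a down-up (rather than up-down) biarc, nor an argument singling out the two free pairs. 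The paper's decoupling --- make every triangle of $T$ cross the spine, then all $2m-4$ capping insertions are automatically free of biarcs --- is exactly the structural lemma your sketch is missing, and it is what turns the local picture into a global proof. Unless you can prove an analogous global consistency lemma for the matched-pair scheme, the proposal does not yet constitute a proof.
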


So, to discover a monotonicity penalty we have to look beyond Kleetopes. We investigate another class of planar graphs: planar $3$-trees. A \emph{planar~$3$-tree} is built by starting from a (combinatorial) triangle. At each step we insert a new vertex~$v$ into a (triangular) face~$f$ of the graph built so far, and connect~$v$ to the three vertices of~$f$. As a third result we give an improved upper bound on the number of monotone biarcs needed for planar $3$-trees.
\begin{restatable}{theorem}{threetrees}\label{thm:3tree}
  Every planar $3$-tree admits a plane arc diagram with at most $\left\lfloor \frac{3}{4}(n-3) \right\rfloor$ biarcs that are all down-up monotone. 
\end{restatable}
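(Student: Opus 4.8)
The plan is to build the diagram by recursing on the stacking structure of the planar $3$-tree $G$. The stacking operations form a ternary tree $\mathcal{T}$ on $n-3$ nodes: its root is the first vertex stacked into the outer triangle, and the (at most) three faces created by a stacking host the children of the corresponding node. I would process $\mathcal{T}$ top-down, maintaining for every not-yet-filled triangular face $f=(a,b,c)$ a \emph{state} that records the left-to-right order of $a,b,c$ on the spine, on which side of the spine the region currently representing $f$ lies near each corner, and which proper/biarc directions are still free there. The goal is to prove by induction over $\mathcal{T}$ the strengthened statement: if $f$ has state $\sigma$ and its interior contains $k$ stacked vertices, then the interior can be drawn inside the region of $f$ using at most $\tfrac{3}{4}k + w(\sigma)$ down-up biarcs, where $w$ is a fixed nonnegative potential on the finite set of states. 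Applied to the outer triangle, placed in a state with $w(\sigma)=0$, this gives at most $\tfrac{3}{4}(n-3)$ biarcs, hence at most $\lfloor\tfrac{3}{4}(n-3)\rfloor$ since the count is an integer.

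First I would treat the base case (empty interior: $0$ biarcs), and then carry out the inductive step by a case analysis over the states $\sigma$ of the face $(a,b,c)$ that receives the next stacked vertex $v$. For each $\sigma$ I must choose a spine position for $v$ (always possible, positions being reals), route $va,vb,vc$ — as proper top or bottom arcs wherever the current visibilities allow, and otherwise as down-up biarcs whose crossing point is placed inside the region of $f$ — and then read off the states induced on the three new faces $(v,a,b),(v,b,c),(v,c,a)$. Writing $c\in\{0,1\}$ for the number of biarcs spent on $v$, the induction closes provided $c+\sum_{i=1}^3 w(\sigma_i)\le \tfrac{3}{4}+w(\sigma)$ for every transition (sub-faces that are never stacked into contribute $0$ and need no state, so it suffices to handle the case where all three are used). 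The case analysis should be engineered so that there is a ``relaxed'' state $\sigma_0$ with $w(\sigma_0)=0$ from which one always stacks with $c=0$ and can hand two of the three sub-faces the state $\sigma_0$ again and the third a ``charged'' state; a charged state then relaxes back to $\sigma_0$ after at most three further stackings, each costing one biarc and each again releasing two relaxed sub-faces. Thus along any root-to-leaf path of followers the biarc pattern is $0,1,1,1,0,1,1,1,\dots$, i.e.\ at most three biarcs per four stackings, while every side branch restarts in the relaxed state; summing over $\mathcal{T}$ gives the $\tfrac{3}{4}$ ratio. Concretely one takes $w$ with values in $\{0,\tfrac{1}{4},\tfrac{1}{2},\tfrac{3}{4}\}$ and checks the displayed inequality for each transition, which is the tedious but routine heart of the argument.

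The main obstacle is designing this set of states: it has to be closed under stacking (so the recursion does not escape it), fine enough that no transition ever needs two biarcs for a single vertex, yet restrictive enough that a $0$-cost stacking never produces more than one charged sub-face — equivalently, that placing $v$ and its three edges necessarily ``pinches'' at most one of the three new regions. This last point is exactly where an arc diagram is less flexible than the plane (one cannot carve three equally accessible sub-regions out of a region bounded by arcs), and pinning it down is what separates the bound $\tfrac{3}{4}(n-3)$ from the weaker $\Theta(n)$ bounds; making it work while keeping \emph{every} biarc down-up, rather than also allowing the up-down orientation, is the delicate part of the case analysis. A secondary bookkeeping point is to confirm that the visibilities promised by a state are never spoiled by arcs drawn elsewhere, which holds because $\mathcal{T}$ is processed top-down and each face is always drawn strictly inside the region allotted to it, so the recursions in different faces do not interfere.
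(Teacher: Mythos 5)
Your high-level plan---recursing on the stacking tree $\mathcal{T}$, carrying a per-face ``state'' describing spine positions and visibilities, and amortizing with a nonnegative potential $w$ on states so that $c+\sum_i w(\sigma_i)\le\tfrac34+w(\sigma)$ at every stacking---is the natural first thing to try, but the arithmetic you describe is not realizable geometrically, and the paper's proof has two ingredients that your plan does not (and, I believe, cannot) reproduce in a purely local, one-vertex-at-a-time potential argument.

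The concrete obstruction is the relaxation chain you posit, $w(\sigma_0)=0\to\tfrac34\to\tfrac12\to\tfrac14\to0$. Start from a ``relaxed'' triangle $f=uvw$ (the paper's \emph{ottifant-shaped} face with transformable belly). A zero-biarc insertion of $x$ partitions $f$ into three sub-faces, of which at most one inherits the full relaxed geometry: one sub-face keeps a transformable belly, a second has a belly whose transformation would drag a sibling mountain along (cost $2$, not $\le\tfrac32$), and a third lies entirely on one side of the spine, i.e.\ its interior does not even touch the spine. Any later stacking into that last sub-face must first pay (at least) one biarc just to create a spine access point; and once you do, one of \emph{its} three children is again a face whose interior avoids the spine. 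So the ``charged'' state does \emph{not} decay; it reproduces itself at cost $\ge1$ per step. With nonnegative $w$ bounded by $\tfrac34$ this forces $1+w(\text{bad})\le\tfrac34+w(\text{bad})$, which is false. Equivalently, a vertex of grand-degree $\ge2$ (two active sub-faces) genuinely needs a biarc, while all three of its sub-faces are as relaxed as the parent, so the local inequality degenerates to $1\le\tfrac34$.

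What the paper does instead is precisely what is missing from your outline. First, it performs a \emph{look-ahead}: when exactly one child face is active and it is the ``wrong'' geometric child, the paper inserts the face vertex $x$ \emph{together with} the face vertex $y$ of that child (Case~3 of the proof of \cref{thm:3tree}), choosing the drawing of $x$ based on the shape of the subtree below it; a state machine that fixes the drawing from the geometric state alone cannot do this. Second, and more importantly, the paper uses a \emph{global} charging scheme: each grand-degree-$0$ vertex (one whose three sub-faces are all leaves) is inserted at zero cost and lends up to $\tfrac12$ of its $\tfrac34$ budget to a ``preferred ancestor,'' an arbitrary grand-degree-$\ge2$ vertex possibly far up $\mathcal{T}$; the existence of a suitable surjection is guaranteed by \cref{lem:inactivenodes}, the combinatorial fact that $n_{\mathrm{inact}}\ge 2f_3+f_2$. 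This is credit flowing \emph{up} the tree from descendants to ancestors, which a top-down potential with $w\ge0$ cannot encode: the surplus at a gd-$0$ leaf is created after its expensive ancestor has already been paid for, and cannot be stored as a nonnegative potential handed down through the intervening faces. Unless you rework the plan to incorporate look-ahead and an explicit subtree-counting argument of this kind, the ``tedious but routine'' case analysis will not close.
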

 
\subparagraph{Related work.}

Giordano, Liotta, Mchedlidze, Symvonis, and Whitesides~\cite{GiordanoLMSW15}
showed that every upward planar graph admits an \emph{upward topological book
  embedding} in which all edges are either proper arcs or biarcs. These
embeddings are also monotone arc diagrams that respect the orientations of the
edges and use at most one spine crossing per edge. One of their directions for
future work is to 
minimize the number of spine crossings. We believe that our approach for
undirected graphs may provide some insights. Everett, Lazard, Liotta, and
Wismath~\cite{DBLP:journals/dcg/EverettLLW10} used monotone arc diagrams to
construct small universal point sets for $1$-bend drawings of planar graphs,
heavily using the property that all biarcs have the same \emph{shape} (e.g., all
are down-up biarcs). This result has been extended by L\"offler and
T\'oth~\cite{LofflerT15} by restricting the set of possible bend positions. They
use the existence of monotone arc diagrams with at most $n-4$ biarcs to build
universal point sets of size $6n-10$ (vertices and bend points) for $1$-bend
drawings of planar graphs on $n$ vertices. Using \cref{thm:1}, we can 
decrease the number of points by about~$n/5$. 


\subparagraph{Outline.} 
We sketch the proof of \cref{thm:1} in \cref{sec:overview,sec:default,sec:nondefault}, then in \cref{sec:kleetopes} the proof of \cref{thm:kleetope}, and finally, in \cref{sec:3trees} the proof of \cref{thm:3tree}.
Due to space constraints, some proofs are provided in the appendix only; their statements are marked with \textcolor{blue}{$\vartriangle$}\textcolor{magenta}{$\triangledown$}. In the PDF, \textcolor{blue}{$\vartriangle$} 
links to the statement in the main text and \textcolor{magenta}{$\triangledown$}  links to the proof in the appendix.

\section{Overview of our Algorithm}\label{sec:overview}

To prove \cref{thm:1} we describe an algorithm to incrementally construct an arc
diagram for a given planar graph~$G=(V,E)$ on~$n\ge 4$~vertices. Without loss of
generality we assume that~$G$ is a combinatorial \emph{triangulation}, that is,
a maximal planar graph. Further, we consider~$G$ to be embedded, that is,
$G$~is a \emph{plane} graph. As every triangulation on~$n\ge 4$ vertices is
$3$-connected, by Whitney's Theorem selecting one facial triangle as the \emph{outer face} embeds it
into~the plane. This choice also determines a unique outer face for every
biconnected subgraph. For a biconnected plane graph~$G$ denote the outer face
(an open subset of~$\R^2$) by~$F_\circ(G)$ and denote by~$C_\circ(G)$ the cycle
that bounds~$F_\circ(G)$. A plane graph is \emph{internally triangulated} if it
is biconnected and every inner face is a triangle. A central tool for our
algorithm is the notion of a canonical ordering~\cite{fpp-hdpgg-90}.
Consider an internally triangulated plane graph~$G$ on the
vertices~$v_1,\ldots,v_n$, and let~$V_k=\{v_j\colon 1\le j\le k\}$. The
sequence~$v_1,\ldots,v_n$ forms a \emph{cano\-nical ordering} for~$G$ if the
following conditions hold for every~$i\in\{3,\ldots,n\}$:
\begin{enumerate}[label=(C\arabic*),left=\labelsep]
\item\label{co:1} the induced subgraph $G_i=G[V_i]$ is internally triangulated;
\item\label{co:2} the edge~$v_1v_2$ is an edge of~$C_\circ(G_i)$; and
\item\label{co:3} for all~$j$ with~$i<j\le n$, we have~$v_{j}\in F_\circ(G_i)$. 
\end{enumerate}
Every internally triangulated plane graph admits a canonical ordering, for any
starting pair~$v_1,v_2$ where~$v_1v_2$ is an edge
of~$C_\circ(G)$~\cite{fpp-hdpgg-90}. Moreover, such an ordering can be computed
by iteratively selecting~$v_i$, for~$i=n,\ldots,3$, to be a vertex
of~$C_\circ(G_i)\setminus\{v_1,v_2\}$ that is not incident to a chord
of~$C_\circ(G_i)$. This computation can be done in~$O(n)$
time~\cite{cp-ltadp-95}. In general, a triangulation may admit many canonical
orderings. We will use this freedom to adapt the canonical ordering we work with
to our needs.
To this end, we compute a canonical ordering for~$G$ incrementally, starting
with~$v_1,v_2,v_3$, where~$v_1v_2$ is an arbitrary edge of~$C_\circ(G)$,
and~$v_3$ is the unique vertex of~$G$ such that~$v_1v_2v_3$ bounds a triangular
face of~$G$ and~$v_3$ is not a vertex of~$C_\circ(G)$. A canonical
ordering~$v_1,\ldots,v_i$ for~$G_i$, where~$3\le i\le n$, is \emph{extensible}
if there exists a sequence~$v_{i+1},\ldots,v_n$ such that~$v_1,\ldots,v_n$ is a
canonical ordering for~$G$.

\begin{restatable}{rlemma}{lemextend}\label{lem:extend}
  \renewcommand{\mylink}{\statlink{lem:extend}\prooflink{Plemextend}}
  A canonical ordering~$v_1,\ldots,v_i$ for~$G_i$ is extensible 
  $\iff$ $V\setminus V_i\subset F_\circ(G_i)$.
\end{restatable}

\begin{figure}[htbp]
  \centering%
  \includegraphics{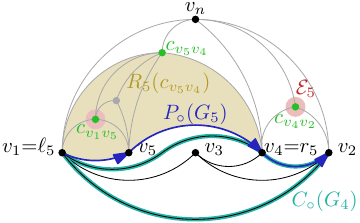}
  \caption{Overview of notation used throughout the paper.\label{fig:notation}}
\end{figure}

We set up some terminology used throughout the paper; refer also to \cref{fig:notation}.
Consider an extensible canonical ordering~$v_1,\ldots,v_i$ for~$G_i$ and some vertex~$v\in V\setminus V_i$. Let~$P_\circ(G_i)$ denote the path~$C_\circ(G_i)-v_1v_2$ and direct it from~$v_1$ to~$v_2$. 
As~$G_i$ is an induced subgraph of the plane graph~$G$ and~$v\in F_\circ(G_i)$ (by extensibility), all neighbors of~$v$ in~$G_i$ are on~$P_\circ(G_i)$. We associate a planar region~$R_i(v)$ to~$v$ as follows. If~$d_i(v)=\deg_{G_i}(v)\le 1$, then~$R_i(v)=F_\circ(G_i)$; else, let~$R_i(v)$ be the open bounded region bounded by the simple closed curve formed by the part of~$P_\circ(G_i)$ between~$\ell$ and~$r$ together with the edges~$\ell v$ and~$rv$ of~$G$,
where~$\ell$ and~$r$ are the first and last, respectively, neighbor (in~$G$)
of~$v$ on~$P_\circ(G_i)$. We partially order the vertices in~$V\setminus V_i$ by
defining~$v\prec v'$ if~$R_i(v)\subseteq R_i(v')$. 

A vertex~$v\in V\setminus V_i$ is \emph{eligible} (for~$G_i$) if setting~$v_{i+1}=v$ yields an extensible canonical ordering~$v_1,\ldots,v_{i+1}$ for~$G_{i+1}$. Denote the set of vertices eligible for~$G_i$ by~$\mathcal{E}_i$. 
Let~$e=uw$ be an arbitrary edge of~$P_\circ(G_i)$, for~$i<n$. As~$G$ is a
triangulation, there exists a unique vertex~$c_e\in V\setminus V_i$ such
that~$uwc_e$ bounds a triangular face of~$G$; we say that~$c_e$
\emph{covers}~$e$. 
Given a canonical ordering $v_1,\ldots,v_n$, vertex $v_i$ covers exactly the edges of $P_\circ(G_{i-1})$ that are not on~$P_\circ(G_{i})$. Similarly, we say that $v_i$ covers a vertex $v$ of $P_\circ(G_{i-1})$ if $v$ is not part of $P_\circ(G_{i})$. 
The following observations are direct consequences of these definitions and \cref{lem:extend}.

\begin{corollary}\label{lem:eligible}
  A vertex~$v\in V\setminus V_i$ is eligible $\iff R_i(v)\cap V=\emptyset\iff R_i(v)\cap\mathcal{E}_i=\emptyset$.
\end{corollary}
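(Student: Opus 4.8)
The plan is to prove the two equivalences in turn; the first, $v\text{ eligible}\iff R_i(v)\cap V=\emptyset$, carries essentially all of the work, and the second reduces to it. For the forward direction of the first equivalence, suppose $v$ is eligible, so $v_1,\dots,v_i,v$ is an extensible canonical ordering for $G_{i+1}=G_i+v$. By extensibility of $v_1,\dots,v_i$ and \cref{lem:extend} we have $v\in F_\circ(G_i)$, so all neighbours of $v$ in $G_{i+1}$ lie on $P_\circ(G_i)$; adding $v$ inside $F_\circ(G_i)$ splits that region into the new outer face plus one bounded region between each pair of consecutive neighbours of $v$ along $P_\circ(G_i)$. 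Condition~\ref{co:1} forces every such bounded region to be a triangle, hence the neighbours of $v$ on $P_\circ(G_i)$ must be consecutive, say $\ell=w_0,\dots,w_k=r$, and the new bounded faces are exactly $vw_0w_1,\dots,vw_{k-1}w_k$, whose union is $\overline{R_i(v)}$. Thus $F_\circ(G_{i+1})=F_\circ(G_i)\setminus\overline{R_i(v)}$, so $R_i(v)$ contains no vertex of $G_{i+1}$, and \cref{lem:extend} applied to $G_{i+1}$ gives $V\setminus V_{i+1}\subseteq F_\circ(G_{i+1})$, which is disjoint from $R_i(v)$; hence $R_i(v)\cap V=\emptyset$.

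For the converse, assume $R_i(v)\cap V=\emptyset$. Then $d_i(v)\ge 2$, since otherwise $R_i(v)=F_\circ(G_i)\ni v$. So $R_i(v)$ is bounded by the $G$-edges $\ell v,vr$ and a subpath $w_0=\ell,\dots,w_k=r$ of $P_\circ(G_i)$; its boundary is a cycle of $G$, and since (by extensibility) $V\setminus V_i\subseteq F_\circ(G_i)$, the region $R_i(v)$ is a union of faces of $G$. Having no interior vertex, it is triangulated as the fan $vw_0w_1,\dots,vw_{k-1}w_k$: walking along $vw_0$, the incident face has third vertex forced to $w_1$, and so on, because any chord $w_0w_j$ of $C_\circ(G_i)$ lies inside $G_i$, not in $R_i(v)$. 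Consequently, adding $v$ to $G_i$ creates exactly these triangular faces, leaves all inner faces triangulated, keeps $v_1v_2$ on the outer cycle, and replaces the arc $w_0\cdots w_k$ of $C_\circ(G_i)$ by $w_0,v,w_k$; so $v_1,\dots,v_i,v$ is a canonical ordering for $G_{i+1}$ with $F_\circ(G_{i+1})=F_\circ(G_i)\setminus\overline{R_i(v)}$. Since $V\setminus V_{i+1}=(V\setminus V_i)\setminus\{v\}$ lies in $F_\circ(G_i)$ and avoids $\overline{R_i(v)}$ (whose only $V$-vertices lie in $V_{i+1}$), \cref{lem:extend} yields extensibility, so $v$ is eligible.

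For the second equivalence, one direction is immediate as $\mathcal E_i\subseteq V\setminus V_i\subseteq V$. For the other, assume $R_i(v)\cap V\neq\emptyset$ and produce an eligible vertex inside $R_i(v)$; we may assume $d_i(v)\ge 2$ (if $d_i(v)\le 1$ then $R_i(v)=F_\circ(G_i)$, and any eligible vertex — one exists since $G_i$ is extensible — lies in $V\setminus V_i=F_\circ(G_i)\cap V$, so in $R_i(v)\cap\mathcal E_i$). I use two facts. \textbf{Nesting:} if $u\in V\setminus V_i$ lies strictly inside $R_i(v)$ with $d_i(u)\ge2$, then $R_i(u)\subsetneq R_i(v)$, because $u$'s $G_i$-neighbours cannot be separated from $u$ by the boundary cycle of $R_i(v)$, hence all lie on $w_0\cdots w_k$, so the boundary cycle of $R_i(u)$ lies in $\overline{R_i(v)}$ and bulges inward, giving $R_i(u)\subseteq R_i(v)$, properly since $u\notin R_i(u)$. \textbf{Fan:} for any $w$ with $d_i(w)\ge2$, let $H$ be the subgraph of $G$ formed by all vertices, edges and faces of $G$ lying in $\overline{R_i(w)}$; if $H$ has an interior vertex, then one of them is adjacent to two consecutive vertices of the base subpath of $R_i(w)$ — for the face of $H$ just inside each base edge has its third vertex not in $V_i$ (a triangular face of $H$ with all three vertices in $V_i$ would be a face of $G$ inside $F_\circ(G_i)$ bounded by edges of $G_i$, at least one of which is a chord of $C_\circ(G_i)$ and thus lies inside $G_i$, not in $F_\circ(G_i)$; the case $i=3$ is ruled out since $n\ge 4$), and not all these third vertices can equal $w$, for then the faces $ww_jw_{j+1}$ tile $\overline{R_i(w)}$ and $H$ has no interior vertex. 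Now apply Fan with $w=v$: some vertex of $R_i(v)$ has $d_i\ge2$. Choose $u^\ast$ among all vertices of $R_i(v)$ with $d_i\ge 2$ so that $R_i(u^\ast)$ is $\subseteq$-minimal. If $R_i(u^\ast)$ contained a vertex of $V$ (necessarily in $V\setminus V_i$, inside $R_i(u^\ast)\subseteq R_i(v)$ by Nesting), then Fan with $w=u^\ast$ gives such a vertex $z$ with $d_i(z)\ge2$, and Nesting gives $R_i(z)\subsetneq R_i(u^\ast)$ with $z\in R_i(v)$, contradicting minimality. Hence $R_i(u^\ast)\cap V=\emptyset$, so $u^\ast$ is eligible by the first equivalence, and $u^\ast\in R_i(v)\cap\mathcal E_i$.

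The main obstacle I anticipate is not the combinatorial skeleton — which is short once \cref{lem:extend} is in hand — but the planar-topology bookkeeping: verifying that $R_i(v)$ is a union of faces of $G$, that $F_\circ(G_{i+1})=F_\circ(G_i)\setminus\overline{R_i(v)}$, that the regions $R_i(\cdot)$ form a laminar family (the Nesting fact), and the chord argument underlying the Fan fact. These are intuitively clear from the drawing but require care to state cleanly.
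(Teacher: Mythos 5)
The paper states this corollary without proof, asserting it to be a direct consequence of the definitions and \cref{lem:extend}; your detailed argument is correct and relies on the same planar-topology ingredients (the fan decomposition of $R_i(v)$ when $R_i(v)\cap V=\emptyset$, the identity $F_\circ(G_{i+1})=F_\circ(G_i)\setminus\overline{R_i(v)}$, the laminarity of the regions $R_i(\cdot)$, and descent to a $\subseteq$-minimal region) that the paper itself develops in its proof of \cref{lem:extend}. The only stylistic remark is that for the second equivalence you could have reused that proof's $\prec$-minimality argument on the covers of the base edges of $R_i(v)$ rather than re-deriving the Nesting and Fan facts from scratch, but the substance and the route are the same.
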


While computing a canonical ordering~$v_1,\ldots,v_n$, we also maintain an arc diagram, for short, \emph{diagram} of~$G_i$. This diagram must satisfy certain properties to be considered valid, as detailed below. In some cases we apply
induction to handle a whole induced subgraph of~$G$, for instance, within a (separating) triangle, at once. As a result, in certain steps, subgraph $G_i$ may not correspond to a valid arc diagram. 

Every vertex~$v_i$ arrives with $1-\chi$~credits, for some constant~$\chi\ge 0$.\footnote{For \cref{thm:1} we will set~$\chi=1/5$. But we think it is instructive to keep~$\chi$ as a general constant in our argumentation. For instance, this way it is easier to see in which cases our analysis is tight.}
For these credits we can either create biarcs (at a cost of one
credit per biarc), or we place them on edges of the outer face of the diagram
for later use. The \emph{costs}~$\mathrm{cost}(D)$ of a diagram~$D$ is the sum
of credits on its edges.
An edge in the diagram can be one of three different types: \emph{mountain}
(proper arc above the spine), \emph{pocket} (proper arc below the spine), or
\emph{down-up biarc}. So the diagram is determined by (1)~the spine order
(left-to-right) of the vertices and crossings along with (2)~for every edge, its
type and number of credits. The \emph{lower envelope} of a diagram consists of
all vertices and edges that are vertically visible from below, that is, there is
no other vertex or edge of the diagram vertically below it. Analogously, the
\emph{upper envelope} consists of all vertices and edges that are vertically
visible from above.

A 
diagram for~$v_1,\ldots,v_i$ and~$i\in\{3,\ldots,n\}$, is \emph{valid}
if it satisfies the following invariants: 
\begin{enumerate}[label={(I\arabic*)},series=inv]\setlength{\itemindent}{\labelsep}
\item\label{i:biarcTypes} Every edge is either a proper arc or a down-up biarc. Every edge on the upper envelope is a proper arc.
\item\label{i:mountainMoney} Every mountain whose left endpoint is
  on~$C_\circ(G_i)\setminus\{v_2\}$ carries one credit.
\item\label{i:biarcMoney} Every biarc carries (that is, is paid for with) one
  credit.
\item\label{i:pocketMoney} Every pocket on~$P_\circ(G_i)$
  carries~$\chi$~credits\footnote{
  As in the Greek word for pocket money: \textgreek{χαρτζιλίκι}.}.
\end{enumerate}

Moreover, a valid drawing is \emph{extensible} if it also satisfies
\begin{enumerate}[label={(I\arabic*)},series=inv]\setlength{\itemindent}{\labelsep}
  \setcounter{enumi}{4}
\item\label{i:contour} Vertex~$v_1$ is the leftmost and $v_2$ is the rightmost
  vertex on the spine. Edge~$v_1v_2$ forms the lower envelope of
  $C_\circ(G_i)$. The edges of~$P_\circ(G_i)$ form the upper envelope.
  \end{enumerate}


\noindent To prove \cref{thm:1} it suffices to prove the following.

\begin{lemma}\label{lem:main}
  Let~$G$ be a maximal plane graph on~$n\ge 3$ vertices, let~$v_1,\ldots,v_i$ be
  an extensible canonical ordering for~$G_i$, for some~$3\le i<n$, and let~$D$ be
  an extensible arc diagram for~$G_i$. Then, for any $\chi \leq \frac{1}{5}$, $D$ can be extended to an extensible
  arc diagram~$D'$ for~$G$
  with~$\mathrm{cost}(D')\le\mathrm{cost}(D)+(n-i)(1-\chi)+\xi$, for
  some~$\xi\le 2\chi$.
  %
\end{lemma}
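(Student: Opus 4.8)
The plan is to prove \cref{lem:main} by induction on $n-i$, the number of vertices not yet placed. The base case $n-i = 0$ is vacuous since we require $i < n$; more precisely the trivial observation is that when $i = n-1$ a single step suffices. For the inductive step, given the extensible canonical ordering $v_1,\dots,v_i$ and the extensible diagram $D$ for $G_i$, I would examine the set $\mathcal{E}_i$ of eligible vertices and the structure of the outer path $P_\circ(G_i)$, and select a next vertex (or a whole batch of vertices, e.g.\ inside a separating triangle handled by a nested induction) to add. The new vertex $v_{i+1}$ arrives with $1-\chi$ credits. The heart of the argument is a case analysis on how $v_{i+1}$ attaches to $P_\circ(G_i)$ — in particular on $d_i(v_{i+1})$, the number of neighbors of $v_{i+1}$ already present, and on where these neighbors sit relative to $v_1,v_2$ and relative to the mountains/pockets currently on the envelope. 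In each case I must (a) describe explicitly how to redraw: which edges $\ell v_{i+1},\dots, r v_{i+1}$ become mountains, which become pockets, which become down-up biarcs, and how $v_{i+1}$ is inserted into the spine order; (b) verify that the resulting diagram is again valid and extensible, i.e.\ re-establish \ref{i:biarcTypes}--\ref{i:contour}; and (c) do the credit bookkeeping, showing that the $1-\chi$ new credits plus credits harvested from edges that leave the outer envelope suffice to pay for the new biarcs, the one credit demanded by each new mountain with left endpoint on $C_\circ\setminus\{v_2\}$ (\ref{i:mountainMoney}), and the $\chi$ per new pocket (\ref{i:pocketMoney}), with at most $\xi \le 2\chi$ left over as slack.

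The key structural points I would exploit are: by \cref{lem:extend} and \cref{lem:eligible}, a vertex is eligible exactly when its region $R_i(v)$ contains no remaining vertex, so eligible vertices are the ones whose full neighborhood in $G_i$ forms a contiguous subpath of $P_\circ(G_i)$ with no "hidden" vertices underneath; and invariant \ref{i:contour} guarantees the drawn picture mirrors the combinatorics — $v_1v_2$ is the lower envelope, $P_\circ(G_i)$ is the upper envelope — so adding $v_{i+1}$ on top of a contiguous chunk of $P_\circ$ is geometrically realizable by arcs that stay above everything. When $v_{i+1}$ has only one or two neighbors on $P_\circ$ (the "default" case, presumably \cref{sec:default}), one can typically route its edges as mountains plus at most one biarc, and the single pocket or the reused credits cover the cost; the amortized gain comes from the fact that edges of $P_\circ(G_i)$ that get covered by $v_{i+1}$ drop off the upper envelope and release their stored credits. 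When $v_{i+1}$ has high degree in $G_i$, or when inserting it would create a separating triangle, one recurses: apply the lemma (or \cref{thm:1} itself) inductively to the subgraph enclosed, paying the $\xi$ overhead once per recursive call and arguing the overheads do not accumulate beyond $2\chi$ at the top level.

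The main obstacle I expect is the credit analysis in the non-default cases: proving that the constant $\chi = 1/5$ is exactly what makes the books balance. One must show that over any maximal sequence of "expensive" insertions the amortized cost per vertex never exceeds $1-\chi$, which requires pinning down precisely how many mountains versus pockets are created, how many biarcs are forced, and how many credits are recovered from the shrinking envelope — and then checking that the worst configuration still closes with the inequality $\text{cost}(D') \le \text{cost}(D) + (n-i)(1-\chi) + \xi$ for some $\xi \le 2\chi$. The $4n/5$ bound in \cref{thm:1} is tight (the footnote about keeping $\chi$ general hints the authors verified tightness), so there is no room to lose even a constant amount per vertex; the proof must allocate the $\chi = 1/5$ "pocket money" so that pockets created in one step are paid for, yet enough is saved to later convert pockets back into mountains or to absorb a biarc when the envelope geometry forces it. Getting the invariant \ref{i:mountainMoney} exception for left-endpoint $v_2$ and the $\{v_2\}$ exclusion in \ref{i:contour} to interact correctly at the right end of the spine is the kind of boundary subtlety where the constant is most likely to be pinched.
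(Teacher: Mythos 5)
Your outline captures the overall framework the paper actually uses: incremental construction along a canonical ordering, a credit scheme with $1-\chi$ credits per new vertex and stored credits on envelope edges, a case analysis on how the next vertex attaches, and recursion into separating triangles. You also correctly locate where the $\xi \le 2\chi$ slack comes from (the final vertex $v_n$ may be problematic, costing up to $1+\chi$ instead of $1-\chi$). But as a proof this has a genuine gap, and it is exactly where you flag "the main obstacle": you never resolve what to do when \emph{no} eligible vertex admits a cheap insertion.

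The content of the paper's proof is not the default insertion (\cref{lem:defaultApproach1,lem:defaultApproach2} handle the easy cases quickly); it is what happens when every eligible vertex is one of the four "problematic" types $\mathcal{T}(2,\frown)$, $\mathcal{T}(3,\frown^2)$, $\mathcal{T}(4,\frown^3)$, $\mathcal{T}(3,\frown\smile)$. Your proposal offers no mechanism for this situation. The paper's mechanism is nontrivial: it introduces a pivot vertex and pivot type for each problematic vertex, then (when \cref{lem:degreetwo,lem:samepivot} do not apply) selects a $\prec$-minimal vertex $u$ in $\mathcal{U}=\{\mathrm{pc}(v):v\in\mathcal{E}_{i-1}\}\setminus\mathcal{E}_{i-1}$, and proves via \cref{lem:u,lem:onepivot,lem:rightpivot,lem:leftpivot,lem:regions,lem:manyleft} that the regions $X_1,\ldots,X_{k-1}$ spanned by $u$ each have a highly constrained structure (left/right/both-pivot). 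Only then can $u$ be batch-inserted together with all of $R_{i-1}(u)\cap V$ at $1-\chi$ per vertex, via the extensive case analysis of \cref{app:regions}. None of this is anticipated or replaceable by "examine $\mathcal{E}_i$ and do a case analysis." Without the choice of $u$ and the structure theorems, the amortization does not close.

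A second gap: you treat \cref{lem:main} as a self-contained induction, but the paper proves it \emph{jointly} with \cref{lem:mainadapt}, which has a different conclusion (the last vertex is rightmost, \ref{i:contour} is dropped, no $\xi$ slack). The recursive calls into separating triangles and into the regions $X_j$ need one or the other depending on where the enclosing triangle's apex must sit on the spine; the strengthened \cref{lem:mainstrong} is also needed in some subcases of \cref{app:regions}. Your proposal recurses using "\cref{lem:main} or \cref{thm:1} itself," which would not give the required placement of the last vertex and would incur an extra $\xi$ per recursion, accumulating beyond $2\chi$. Recognizing that a companion invariant is necessary — and formulating it so that the two lemmas prove each other — is itself a key idea your sketch is missing.
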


\begin{proof}[Proof of \cref{thm:1} assuming \cref{lem:main}]
  We may assume~$n\ge 4$, as the statement is trivial for~$n\le
  3$. Let~$C_\circ(G)=v_1v_2v_n$, and let~$v_3$ be the other (than~$v_n$) vertex
  that forms a triangle with~$v_1v_2$ in~$G$. Then~$v_1,v_2,v_3$ is an
  extensible canonical ordering for~$G_3$ in~$G$. To obtain an extensible 
  diagram~$D$ for~$G_3$, place~$v_1v_3v_2$ on the spine in this order from left
  to right. All three edges are drawn as pockets so that~$v_1v_2$ is
  below~$v_1v_3$ and~$v_3v_2$. On the latter two edges we put~$\chi$~credits
  each. It is easily verified that~$D$ is extensible
  and~$\mathrm{cost}(D)=2\chi$. By \cref{lem:main} we obtain an extensible
  diagram~$D'$ for~$G$
  with~$\mathrm{cost}(D')\le 2\chi+(n-3)(1-\chi)+2\chi=n(1-\chi)+7\chi-3$. 
  Setting $\chi=1/5$ yields $\mathrm{cost}(D')\le \frac{4}{5}n-\frac{8}{5}$.
  As $v_n$ is incident to a mountain on the outer face by \ref{i:contour} which carries a credit by \ref{i:mountainMoney}, $\mathrm{cost}(D')-1$ is an 
  upper bound for the number of biarcs in~$D'$ and the theorem follows. 
\end{proof}


We can avoid the additive term~$\xi$ in \cref{lem:main} by dropping \ref{i:contour} for $D'$:

\begin{lemma}\label{lem:mainadapt}
  Let~$G$ be a maximal plane graph on~$n\ge 4$ vertices, let~$v_1,\ldots,v_i$ be
  an extensible canonical ordering for~$G_i$, for~$3\le i<n$, and let~$D$ be an
  extensible arc diagram for~$G_i$. Then, for any $\chi \leq \frac{1}{5}$, $D$ can be extended to a valid arc
  diagram~$D'$ for~$G$ such that
  (1)~$\mathrm{cost}(D')\le\mathrm{cost}(D)+(n-i)(1-\chi)$, (2)~Vertex~$v_1$ is
  the leftmost and~$v_n$ is the rightmost vertex on the spine. The
  mountain~$v_1v_n$ forms the upper envelope, and the pocket~$v_1v_2$ along with
  edge $v_2v_n$ forms the lower envelope of~$D'$, and
  (3)~$v_2v_n$ is not a pocket. 
\end{lemma}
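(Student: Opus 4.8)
The plan is to re-use the construction behind \cref{lem:main} essentially verbatim for the vertices $v_{i+1},\dots,v_{n-1}$ and to replace only the treatment of the terminal vertex $v_n$. The underlying observation is that in that construction the additive surplus $\xi\le 2\chi$ is paid \emph{exclusively} in the last insertion, the one of $v_n$: there \ref{i:contour} forces $v_2$ to stay the rightmost vertex and both outer edges $v_1v_n$ and $v_nv_2$ onto the upper envelope, which requires a local reconfiguration; for every intermediate graph $G_j$ with $j<n$ the construction maintains \ref{i:biarcTypes}--\ref{i:contour} with running cost exactly $\mathrm{cost}(D)+(j-i)(1-\chi)$. So I would first run it to obtain an \emph{extensible} diagram $D_{n-1}$ for $G_{n-1}$ with $\mathrm{cost}(D_{n-1})\le\mathrm{cost}(D)+(n-1-i)(1-\chi)$ (no surplus), and then finish $v_n$ by hand in a way that never needs that reconfiguration, hence never pays $\xi$.

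For the last step, recall that $G=G_n$ is a triangulation whose outer triangle is $v_1v_2v_n$, so $v_n$ is adjacent to every vertex of $P_\circ(G_{n-1})$; write $P_\circ(G_{n-1})=v_1,w_1,\dots,w_k,v_2$ from left to right, with $k\ge 1$ since $C_\circ(G_{n-1})$ has at least three vertices. Place $v_n$ as the new rightmost vertex, to the right of $v_2$. Draw $v_1v_n$ as one mountain spanning the whole diagram; since by \ref{i:contour} the old upper envelope is $P_\circ(G_{n-1})$ (proper arcs), it now lies below $v_1v_n$, which becomes the upper envelope, giving the mountain $v_1v_n$ of (2). Draw the other new edges $v_nw_1,\dots,v_nw_k$ as mountains nested under $v_1v_n$, and $v_nv_2$ as a short mountain over the spine interval $(v_2,v_n)$; then $v_nv_2$ is the lowest diagram element over $(v_2,v_n)$, so together with the untouched pocket $v_1v_2$ (still the lower envelope over $[v_1,v_2]$) it forms the lower envelope, giving (2), and it is not a pocket, giving (3). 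Invariant \ref{i:biarcTypes} holds since all new edges are proper arcs and no new biarc is created.

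The substance is the credit accounting, which is where $\xi$ disappears. Vertex $v_n$ brings $1-\chi$ credits. Once $v_n$ is inserted, every edge of $P_\circ(G_{n-1})$ becomes internal, so by \ref{i:mountainMoney} and \ref{i:pocketMoney} it need no longer carry a credit --- except $v_1w_1$ if that edge is a mountain, since its left endpoint $v_1$ still lies on $C_\circ(G_n)$ --- and likewise every mountain of $D_{n-1}$ whose left endpoint is some $w_j$ may release its credit; let $R$ be the total credit released, and note $R\ge\chi$ because the edge $w_kv_2$ alone releases $\ge\chi$ (it is a mountain carrying $1$ by \ref{i:mountainMoney}, or a pocket carrying $\chi$ by \ref{i:pocketMoney}). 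Among the new edges, $v_nv_2$ and all $v_nw_j$ are mountains with left endpoint $v_2$, respectively $w_j\notin C_\circ(G_n)$, so they require no credit, whereas $v_1v_n$ is a mountain with left endpoint $v_1\in C_\circ(G_n)\setminus\{v_2\}$ and requires exactly one credit by \ref{i:mountainMoney}; no new pocket appears on $P_\circ(G_n)$, which consists of the two mountains $v_1v_n,v_nv_2$, so \ref{i:pocketMoney} is vacuous there, and \ref{i:biarcMoney} holds trivially. Paying the single required credit on $v_1v_n$ from $v_n$'s $1-\chi$ fresh credits plus $\chi$ of the released $R$, and letting the remaining $R-\chi\ge 0$ lapse, the net cost increase in this step is $1-R\le 1-\chi$, so $\mathrm{cost}(D')\le\mathrm{cost}(D_{n-1})+(1-\chi)\le\mathrm{cost}(D)+(n-i)(1-\chi)$, i.e.\ (1).

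The main obstacle --- the only point that is not routine --- is the structural claim used in the first paragraph: that the proof of \cref{lem:main} can be organized so that the surplus $\xi$ is localized to the final \ref{i:contour}-restoring reconfiguration rather than accumulated across the recursion. Once that is established (and it is essentially what the proof of \cref{lem:main} will show), the rest is the verification sketched above, plus a glance at the degenerate cases $i=n-1$ (where $D_{n-1}=D$ and only the $v_n$-step is performed) and $n=4$ (where $G_{n-1}$ is the starting triangle and the base diagram of the proof of \cref{thm:1} already has the required shape).
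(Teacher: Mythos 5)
Your proposal is correct and follows essentially the same route as the paper. The paper proves Lemma~\ref{lem:main} and Lemma~\ref{lem:mainadapt} together by a mutual induction in which the generic insertion machinery (\cref{sec:default,sec:nondefault}, \cref{app:regions}) is shared, the base case of Lemma~\ref{lem:mainadapt} ($n=4$, $i=3$) places $v_4$ to the right of $v_2$ with all incident edges as mountains, and the treatment of the final vertex $v_n$ diverges (\cref{sec:lemiisn}, \cref{fig:lastvertexadapt}): for the Lemma~\ref{lem:mainadapt} variant, $v_n$ is placed as the rightmost vertex and all of its incident edges are drawn as mountains at a cost of at most $1-\chi$, which is precisely your final step. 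Your credit accounting for that step (pay one credit on $v_1v_n$; release at least $\chi$ credits because $w_kv_2$ leaves $P_\circ(\cdot)$ and its left endpoint leaves $C_\circ(\cdot)$) matches the paper's. The ``main obstacle'' you flag---that the $\xi$ surplus can be localized to the final insertion---is indeed exactly what the paper records as the stronger Lemma~\ref{lem:mainstrong}, and it holds because every batch of the shared machinery respects the $(1-\chi)$-per-vertex budget, and because $v_n$ (being on $C_\circ(G)$ and adjacent to both $v_1$ and $v_2$) is only ever eligible at step $n-1$ and can never be swept up into an earlier batch; so the construction really does pass cleanly through $G_{n-1}$ within budget. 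The only small slip is that you claim the running cost is ``exactly'' $\mathrm{cost}(D)+(j-i)(1-\chi)$ at each $G_j$; it is only ``at most,'' but nothing in your argument depends on equality.
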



\section{Default vertex insertion}\label{sec:default}

We prove both \cref{lem:main} and \cref{lem:mainadapt} together by induction
on~$n$. For \cref{lem:main}, the base case~$n=3$ is trivial, with~$D'=D$. For
\cref{lem:mainadapt}, the base case is~$n=4$ and~$i=3$. We place~$v_4$ as
required, to the right of~$v_2$, and draw all edges incident to~$v_4$ as
mountains. To establish \ref{i:mountainMoney} it suffices to put one credit
on~$v_1v_4$ because~$v_3$ is covered by~$v_4$ and mountains with left
endpoint~$v_2$ are excluded in \ref{i:mountainMoney}. The edge of~$D$ with left
endpoint~$v_3$ is covered by~$v_4$; thus, we can take the at least~$\chi$
credits on it. The invariants \ref{i:biarcTypes}, \ref{i:biarcMoney}, and
\ref{i:pocketMoney} are easily checked to hold, as well as the statements in
\cref{lem:mainadapt}.

In order to describe a generic step of our
algorithm, assume that we already have an extensible arc diagram for~$G_{i-1}$,
for~$i=4,\ldots,n$. We have to select an eligible
vertex~$V_i\in V\setminus V_{i-1}$ and add it using at most~$1-\chi$ credits obtaining an extensible diagram for~$G_i$. In this section we discuss
some cases where a suitable vertex exists that can easily be added to the arc
diagram, using what we call a \emph{default insertion}. Let~$v_i$ be any vertex
in~$\mathcal{E}_{i-1}$.


We call the sequence
of (at least one) edges of~$P_\circ(G_{i-1})$ between the leftmost neighbor $\ell_i$ of $v_i$ and the rightmost neighbor $r_i$ of $v_i$ the
\emph{profile}~$\mathrm{pr}(v_i)$ of~$v_i$. By~\ref{i:biarcTypes} each edge on the
profile is a pocket or a mountain, i.e., writing~$\smile$ and~$\frown$
for pocket and mountain, respectively, each profile can be described by a string
over~$\{\smile,\frown\}$. 
For a set~$A$ of characters, let~$A^*$, $A^k$ and~$A^+$ denote the set of all strings, all strings of length exactly $k$  and all strings of length at least one, respectively, formed by characters from~$A$.
Let~$d_i$ denote the degree of~$v_i$ in~$G_i$.

\begin{restatable}{rlemma}{lemdefaultApproachone}\label{lem:defaultApproach1}
  \renewcommand{\mylink}{\statlink{lem:defaultApproach1}\prooflink{PlemdefaultApproach1}}
  If~$\mathrm{pr}(v_i)\in\{\smile,\frown\}^*\smile\frown^*$, then we can
  insert~$v_i$ and use~$\le 1$ 
  credit to obtain an extensible arc diagram for~$G_i$. At most~$1-\chi$ credits suffice, unless~$\mathrm{pr}(v_i)=\;\frown\smile$.
\end{restatable}
\begin{proof}[Proof Sketch]
  We place~$v_i$ into the rightmost pocket~$p_{\ell}p_r$ it covers, draw~$p_{\ell}v_i$ and~$v_ip_r$ as pockets and all other new edges as mountains; see \cref{fig:naive2}. We take the~$\chi$ credits from~$p_{\ell}p_r$. If~$d_i=2$, then we place~$\chi$ credits on each of the two pockets incident to~$v_i$ so as to establish~\ref{i:pocketMoney}, for a cost of~$\chi\le 1-\chi$, assuming~$\chi\le 1/2$. 
  
  \begin{figure}[htbp]
    \centering%
    \begin{minipage}[b]{.45\textwidth}
      \centering
      \includegraphics[page=3]{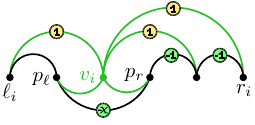}
    \end{minipage}\hfill
    \begin{minipage}[b]{.45\textwidth}
      \centering
      \includegraphics[page=1]{arcDiagramsFigures}
    \end{minipage}
    \caption{Inserting a vertex $v_i$ into a pocket, using $1-\chi$ credits
      (\cref{lem:defaultApproach1}).\label{fig:naive2}}
  \end{figure}

  For~$d_i\ge 3$ 
  each new mountain~$m$ from~$v_i$ to the right covers a mountain~$m'$ of~$P_\circ(G_{i-1})$ whose left endpoint is covered by~$v_i$, Thus, we can take the credit from~$m'$ and place it on~$m$. Among all mountains from~$v_i$ to the left, a credit is needed for the leftmost one only. If there is such a mountain, then we do not need the~$\chi$ credits on~$p_\ell v_i$. And if~$v_i$ covers two or more edges to the left of~$p_\ell$, we gain at least~$\chi$ credits from the rightmost such edge.
\end{proof}

It is more difficult to insert~$v_i$ if it covers mountains only, at least
if~$d_i$ is small. But if the degree of~$v_i$ is large, then we can actually
gain credits by inserting~$v_i$ (see \cref{fig:mountains2}).

\begin{restatable}{rlemma}{lemdefaultApproachtwo}\label{lem:defaultApproach2}
  \renewcommand{\mylink}{\statlink{lem:defaultApproach2}\prooflink{PlemdefaultApproach2}}
  If~$\mathrm{pr}(v_i)\in\;\frown^+$ and~$d_i\ge 5$, then we can insert~$v_i$
  and gain at least~$d_i-5$ credits to obtain an extensible arc diagram for~$G_i$.
\end{restatable}
 
\begin{figure}[htbp]
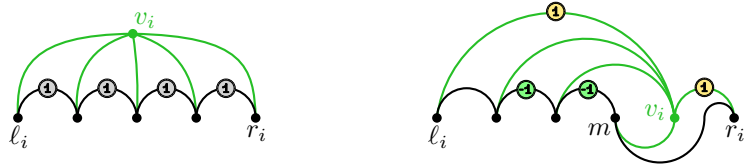

  \hfill%
  \includegraphics[page=8]{arcDiagramsFigures}\hfill
  \includegraphics[page=7]{arcDiagramsFigures}\hfill\hfill
  \caption{Inserting a vertex~$v_i$ into mountains, using $5-d_i$ credits
    (\cref{lem:defaultApproach2}).\label{fig:mountains2}}
\end{figure}

An eligible vertex is \emph{problematic} if it is of one of the four specific
types depicted in~\cref{fig:open}. Using Lemmas~\ref{lem:defaultApproach1}
and~\ref{lem:defaultApproach2} we 
insert vertices using at most~$1-\chi$ credits per vertex, unless all eligible vertices are problematic. This specific situation is discussed in the next section.

\begin{figure}[htbp]
  \centering
  {\captionsetup{singlelinecheck=true}
  \begin{minipage}[b]{.2\textwidth}
    \centering
    \includegraphics[page=1]{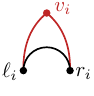}
    \subcaption{$\mathcal{T}(2,\frown)$}  
    \label{fig:open:2}  
  \end{minipage}\hfill
  \begin{minipage}[b]{.24\textwidth}
    \centering
    \renewcommand{\thelinenumber}{}%
    \includegraphics[page=2]{open}
    \subcaption{$\mathcal{T}(3,\frown^2)$}  
    \label{fig:open:3}  
  \end{minipage}\hfill
  \begin{minipage}[b]{.28\textwidth}
    \centering
    \renewcommand{\thelinenumber}{}%
    \includegraphics[page=3]{open}
    \subcaption{$\mathcal{T}(4,\frown^3)$}  
    \label{fig:open:4}  
  \end{minipage}\hfill
  \begin{minipage}[b]{.24\textwidth}
    \centering
    \renewcommand{\thelinenumber}{}%
    \includegraphics[page=4]{open}
    \subcaption{$\mathcal{T}(3,\frown\smile)$}  
    \label{fig:open:1}  
  \end{minipage}\hfill}
  \caption{The four types of problematic vertices where default insertion
    fails.\label{fig:open}}
\end{figure}

\section{When default insertion fails}
\label{sec:nondefault}

In this section we discuss how to handle the case where all eligible vertices are problematic, that is, they cannot be handled by our default insertion. Let~$v$ be an arbitrary vertex in~$\mathcal{E}_{i-1}$, and let~$\ell$ and~$r$ denote the leftmost and rightmost neighbor of~$v$ on~$P_\circ(G_{i-1})$, respectively.

A special case arises if~$v=v_n$ is the last vertex of the canonical ordering. This case is easy to resolve, see \cref{sec:lemiisn} for details. Otherwise, we have~$i<n$ and pick a \emph{pivot vertex}~$\mathrm{p}(v)$ as
follows: If~$v$ is~$\mathcal{T}(3,\frown\smile)$ we set~$\mathrm{p}(v)=r$ and
say that~$v$ has \emph{right pivot type}, in the three remaining cases we
set~$\mathrm{p}(v)=\ell$ and say that~$v$ has \emph{left pivot type}.
Let~$\mathrm{pc}(v)\in V\setminus V_i$ denote the unique vertex that covers the
\emph{pivot edge}~$v\mathrm{p}(v)$.

\begin{lemma}\label{lem:degreetwo}
  Assume there is a vertex~$v\in\mathcal{E}_{i-1}$ such that~$\mathrm{pc}(v)$
  has only one neighbor on~$P_\circ(G_{i-1})$. Then we can set~$v_i=v$
  and~$v_{i+1}=\mathrm{p}(v)$ and spend at most~$1+2\chi$ credits to obtain an
  extensible arc diagram for~$G_{i+1}$.
\end{lemma}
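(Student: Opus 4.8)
The plan is to exploit the special structure granted by the hypothesis: since $\mathrm{pc}(v)$ has a unique neighbor on $P_\circ(G_{i-1})$, after we insert $v_i=v$ the pivot edge $v\mathrm{p}(v)$ becomes a boundary edge of $P_\circ(G_i)$, and $\mathrm{pc}(v)$ covers only that single edge. So setting $v_{i+1}=\mathrm{p}(v)$ amounts to a degree-$2$ insertion of $\mathrm{p}(v)$ (it sees $v$ and one more neighbor along $P_\circ(G_i)$), which is precisely the cheap regime of \cref{lem:defaultApproach1}. First I would verify eligibility: $v\in\mathcal{E}_{i-1}$ is given, and I need to check that after inserting $v$, the vertex $\mathrm{p}(v)$ is eligible for $G_i$ — this follows because $R_i(\mathrm{p}(v))$ is the region between $v$ and the other neighbor of $\mathrm{pc}(v)$, and by \cref{lem:extend}/\cref{lem:eligible} this region contains no uninserted vertex once we know $\mathrm{pc}(v)$ alone lies in it and has the asserted degree. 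Then I would split into the two pivot-type cases.

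For the \emph{left pivot type} cases ($\mathcal{T}(2,\frown)$, $\mathcal{T}(3,\frown^2)$, $\mathcal{T}(4,\frown^3)$) we have $\mathrm{p}(v)=\ell$, the leftmost neighbor. Here $v$ covers only mountains, so a naive insertion of $v$ alone costs too much; but I would insert $v$ together with $\ell' := \mathrm{p}(v)=\ell$ and redraw locally. Concretely: place $v$ just to the left of the profile it covers, route the edge to $\ell$ appropriately, and use the fact that all edges covered by $v$ were mountains carrying one credit each — these freed credits pay for the new structure. The key bookkeeping is that the mountains from $v$ going right each cover a mountain of $P_\circ(G_{i-1})$ whose left endpoint $v$ covers (as in the proof of \cref{lem:defaultApproach1}), so they are self-financing; the cost concentrates on the at most two "new" leftmost mountains plus possibly restoring $\chi$-credits on one or two new pockets, and restoring the mountain credit of $\ell$ after it is covered, which is exactly where the $1+2\chi$ budget is consumed. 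For the \emph{right pivot type} case $\mathcal{T}(3,\frown\smile)$ we have $\mathrm{p}(v)=r$, and the profile ends in a pocket; I would mirror the argument, placing $v$ into that rightmost pocket (taking its $\chi$ credits) and then inserting $r$ as a degree-$2$ vertex via \cref{lem:defaultApproach1}, which costs $\le 1-\chi$, and accounting separately for the mountain(s) on the left of the profile.

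After the construction I would re-establish the invariants \ref{i:biarcTypes}–\ref{i:pocketMoney} (and \ref{i:contour}, since we want an \emph{extensible} diagram): \ref{i:biarcTypes} holds because we create no biarcs and the new upper-envelope edges are mountains; \ref{i:biarcMoney} is vacuous; \ref{i:mountainMoney} and \ref{i:pocketMoney} are handled by the credit transfers described above; and \ref{i:contour} is maintained because $v_1,v_2$ remain the extreme vertices and the new edges $v\ell$, $v r$, and the edges incident to $\mathrm{p}(v)$ lie on the new upper envelope $P_\circ(G_{i+1})$ while $v_1v_2$ stays as the lower envelope. Summing the contributions, the two insertions together consume at most $1+2\chi$ credits, as claimed.

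The main obstacle I expect is the right accounting for credits that must be \emph{reinstated} rather than merely transferred: when $\mathrm{p}(v)$ is covered by its new neighbors, any mountain or pocket incident to it on the old outer face may need to acquire fresh credit to satisfy \ref{i:mountainMoney}/\ref{i:pocketMoney} with respect to the new outer cycle, and simultaneously the edges freed by covering $\mathrm{p}(v)$ release only $\chi$ (if pockets) or $1$ (if mountains) each. Showing that the net of released minus reinstated credits, over both $v$ and $\mathrm{p}(v)$ and across all four problematic types, never exceeds the stated $1+2\chi$ — and in particular that the worst case is genuinely $1+2\chi$ and not larger — is the delicate part; it requires a careful case distinction on whether $\mathrm{p}(v)$'s outer-face neighbor is a mountain or a pocket and on which side it lies, exactly the kind of tight analysis the paper flags by keeping $\chi$ symbolic.
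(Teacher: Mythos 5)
Your overall plan (insert $v$ together with the unique vertex covering the pivot edge, then do credit bookkeeping by type) is the same as the paper's, but there are a few genuine problems worth flagging.

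First, throughout you conflate $\mathrm{p}(v)$ with $\mathrm{pc}(v)$. The paper's statement of the lemma has the same typo (the figure caption correctly says $v_{i+1}=\mathrm{pc}(v_i)$), but you propagate it into the body of the argument in a way that becomes literal nonsense: you write ``insert $v$ together with $\ell' := \mathrm{p}(v)=\ell$'', and later ``inserting $r$ as a degree-$2$ vertex via \cref{lem:defaultApproach1}''. Both $\ell$ and $r$ are neighbors of $v$ on $P_\circ(G_{i-1})$, hence already placed on the spine; neither can be ``inserted''. The vertex that is actually being inserted as $v_{i+1}$ is $\mathrm{pc}(v)$, the (not yet placed) third vertex of the facial triangle on the pivot edge $v\,\mathrm{p}(v)$; its two neighbours in $G_{i+1}$ are exactly $v$ and $\mathrm{p}(v)$. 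Related, the sentence ``restoring the mountain credit of $\ell$ after it is covered'' is wrong: $\mathrm{p}(v)$ is never covered in this step. After adding $v$ and $\mathrm{pc}(v)$, the contour is $\ldots\,\mathrm{p}(v)\,\mathrm{pc}(v)\,v\,\ldots$ (left pivot) or $\ldots\,v\,\mathrm{pc}(v)\,\mathrm{p}(v)\,\ldots$ (right pivot), so $\mathrm{p}(v)$ stays on the outer face. The only thing covered is the pivot edge itself.

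Second, and more substantively, you cannot simply insert $v$ first and then invoke \cref{lem:defaultApproach1} for $\mathrm{pc}(v)$. The whole premise of this lemma is that $v$ is \emph{problematic}: its profile consists of mountains (or $\frown\smile$ with low degree), which is precisely the regime where the default insertion of $v$ alone either fails or is too expensive. So the step ``place $v$ just to the left of the profile it covers, route the edge to $\ell$ appropriately'' glosses over the exact content of the lemma: you must exhibit a joint placement of $v$ and $\mathrm{pc}(v)$ that, taken together, costs $\le 1+2\chi$, paying attention to which of the new edges $\mathrm{p}(v)\,\mathrm{pc}(v)$, $\mathrm{pc}(v)\,v$, $\ell v$, $vr$, and the additional mountains for $\mathcal{T}(3,\frown^2)$/$\mathcal{T}(4,\frown^3)$ become mountains (each needing one credit if its left endpoint lies on the new contour other than $v_2$), which become pockets on the contour (each needing $\chi$), and which covered mountains of $P_\circ(G_{i-1})$ release credits. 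The paper resolves this by exhibiting explicit diagrams (\cref{fig:degreetwo}), and the cost statement ($1+\chi$ for $\mathcal{T}(3,\frown\smile)$, $1+2\chi$ otherwise) is read off those pictures. Your sketch does not pin down these constructions, and the heuristic ``the mountains going right are self-financing, the cost concentrates on two leftmost mountains plus pockets'' is not obviously correct without a figure or a careful enumeration, since for example both $\ell v$ and $vr$ can end up as mountains with left endpoints on the new contour. You do flag this as the ``delicate part'', but flagging it is not the same as doing it — and the lemma's entire content is exactly that bookkeeping.

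So: right strategy, same decomposition as the paper, but a notational confusion carried over literally, one factually incorrect claim about $\mathrm{p}(v)$ being covered, and a missing case-by-case construction that the paper supplies via pictures.
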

\begin{proof}
  The resulting diagram is shown in~\cref{fig:degreetwo}. The costs to establish
  are~$1+\chi$ for~$\mathcal{T}(3,\frown\smile)$ and~$1+2\chi$ for the other types. Note that~$1+2\chi\le 2(1-\chi)$, for~$\chi\le 1/4$.
\end{proof}


\begin{figure}[htbp]
  \centering
  {\captionsetup{singlelinecheck=true}
  \begin{minipage}[b]{.48\textwidth}
    \centering
    \includegraphics[page=5]{open}
    \subcaption{$\mathcal{T}(2,\frown)$}  
    \label{fig:degreetwo:1}  
  \end{minipage}\hfill
  \begin{minipage}[b]{.48\textwidth}
    \centering
    \renewcommand{\thelinenumber}{}%
    \includegraphics[page=6]{open}
    \subcaption{$\mathcal{T}(3,\frown^2)$}  
    \label{fig:degreetwo:2}  
  \end{minipage}\hfill\\[\baselineskip]
  \begin{minipage}[b]{.48\textwidth}
    \centering
    \includegraphics[page=7]{open}
    \subcaption{$\mathcal{T}(4,\frown^3)$}  
    \label{fig:degreetwo:3}  
  \end{minipage}\hfill
  \begin{minipage}[b]{.48\textwidth}
    \centering
    \renewcommand{\thelinenumber}{}%
    \includegraphics[page=8]{open}
    \subcaption{$\mathcal{T}(3,\frown\smile)$}  
    \label{fig:degreetwo:4}  
  \end{minipage}\hfill}
  \caption{Insertion of~$v_i$ and~$v_{i+1}$ if~$v_{i+1}=\mathrm{pc}(v_i)$ has
    degree two in~$G_{i+1}$.\label{fig:degreetwo}}
\end{figure}

\begin{lemma}\label{lem:samepivot}
  Assume that there are~$v,v'\in\mathcal{E}_{i-1}$ such that~$\mathrm{pc}(v)=v'$
  and at least one of~$v,v'$ has right pivot type. Then we can set~$v_i=v$
  and~$v_{i+1}=v'$ and spend at most one credit to obtain an extensible arc
  diagram for~$G_{i+1}$.
\end{lemma}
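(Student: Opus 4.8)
The plan is to insert $v_i=v$ and $v_{i+1}=v'$ in two consecutive default-style steps, exploiting the fact that one of them has right pivot type $\mathcal{T}(3,\frown\smile)$ to harvest enough credit. Recall that a right-pivot-type vertex $v$ is $\mathcal{T}(3,\frown\smile)$: it covers exactly a mountain followed by a pocket, its pivot is the rightmost neighbor $r$, and $\mathrm{pc}(v)$ covers the pivot edge $vr$. I would split into two cases according to whether the right-pivot vertex among $\{v,v'\}$ is the one whose pivot-cover is the other, or vice versa.

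First consider the case $v$ has right pivot type and $\mathrm{pc}(v)=v'$. Then $v'$ lies in $F_\circ(G_{i-1})$, and after inserting $v=v_i$ the vertex $v'$ becomes eligible and covers the pivot edge $v_i r$ together with whatever it already covered on $P_\circ(G_{i-1})$ to the right of $r$; by the assumption that $v'$ is problematic (all eligible vertices are), its profile in $G_i$ is still one of the four types. The key observation is that when we insert $v_i$ as $\mathcal{T}(3,\frown\smile)$ we draw the right edge $v_i r$ as a pocket sitting directly under the old pocket $\ell r$-portion, so that in $G_i$ the edge $v_i r$ is a pocket on $P_\circ(G_i)$; then $v'$ covers this pocket, and when we insert $v'$ (default insertion into that pocket, via \cref{lem:defaultApproach1} applied inside $G_i$) we recover its $\chi$ credits. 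A careful bookkeeping shows that the combined cost of the two insertions is at most one credit: the first insertion of the $\mathcal{T}(3,\frown\smile)$ vertex costs $1-\chi$ plus $\chi$ for the new pocket money on $v_i r$ (total $1$), but we then reclaim that $\chi$ when $v'$ swallows the pocket, and $v'$'s own insertion is free because it covers at least a pocket and thus falls under the cheap regime of \cref{lem:defaultApproach1}, or — if $v'$ covers only mountains of high degree — even gains credits by \cref{lem:defaultApproach2}.

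The symmetric case is $v'$ has right pivot type and $\mathrm{pc}(v)=v'$ while $v$ is of one of the left-pivot types. Here I would first insert $v'$? — no: $\mathrm{pc}(v)=v'$ means $v'$ covers the pivot edge of $v$, so $v'$ is not yet eligible until $v$ is inserted. So again $v_i=v$ is inserted first; since $v$ has left pivot type its pivot is $\ell$ and the pivot edge is $v\ell$, covered by $v'=\mathrm{pc}(v)$. We insert $v$ using one of the diagrams for its type (as in \cref{fig:open}), paying up to $1$ credit but arranging that the edge $v\ell$ that $v'$ will cover is drawn so that, after $v'$ is inserted on top of it, we recoup $\chi$; since $v'$ is $\mathcal{T}(3,\frown\smile)$ it covers a pocket, so its insertion by \cref{lem:defaultApproach1} costs at most $1-\chi$, and the credit we saved on $v$'s side together with this slack brings the total to at most one.

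The main obstacle I expect is the bookkeeping in the first case: making sure that after inserting the $\mathcal{T}(3,\frown\smile)$ vertex $v_i$, the profile of $v'=\mathrm{pc}(v_i)$ in $G_i$ genuinely lets us apply the cheap branch of \cref{lem:defaultApproach1} (i.e.\ that $v'$ still covers a pocket, or has degree $\ge 5$ covering only mountains), rather than being forced into an expensive sub-case — and that the $\chi$ credits are never double-counted between the pocket money on $v_i r$ and the mountain/pocket money demanded by the invariants for $v'$'s new edges. I would handle this by drawing the two insertions simultaneously in a single figure for each of the (few) combinations of types of $v'$, reading off the cost directly, much as in the proof of \cref{lem:degreetwo}.
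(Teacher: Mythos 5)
The paper proves this lemma by drawing the two-vertex insertion directly as one combined step (the two diagrams in \cref{fig:rightpivot}) and reading the costs off the figures. Your plan — two consecutive invocations of \cref{lem:defaultApproach1} — is a genuinely different route, but as written it has several gaps.

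\textbf{Eligibility misreading.} In your ``symmetric case'' you assert that ``$\mathrm{pc}(v)=v'$ means $v'$ covers the pivot edge of $v$, so $v'$ is not yet eligible until $v$ is inserted.'' This contradicts the hypothesis $v'\in\mathcal{E}_{i-1}$ of the lemma. Eligibility in $G_{i-1}$ is about $R_{i-1}(v')\cap V=\emptyset$; it has nothing to do with whether $v'$ happens to cover a future spine edge once $v$ is inserted. The hypothesis is quite deliberate: both $v$ and $v'$ are eligible simultaneously.

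\textbf{Wrong insertion order in the mixed case, and the missed symmetry.} Because of the eligibility error you insert the left-pivot vertex $v$ first. But inserting a $\mathcal{T}(z,\frown^{z-1})$ vertex is exactly what the default lemmas \emph{cannot} do cheaply (that is why it is problematic); you must push down a mountain and then pay for a newly exposed mountain and a newly exposed pocket, which already costs $1+\chi>1$, leaving no room for $v'$. The paper sidesteps this with a short structural observation you do not make: in the mixed case $\mathrm{p}(v)=\mathrm{p}(v')$ and $\mathrm{pc}(v')=v$, so the hypothesis is symmetric in $v,v'$ and one may w.l.o.g.\ relabel so that the right-pivot ($\mathcal{T}(3,\frown\smile)$) vertex is inserted first. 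With that relabeling your ``symmetric case'' collapses into your first case, but without it your argument overshoots.

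\textbf{Unjustified ``free'' insertion of $v'$.} In your first case you appeal to \cref{lem:defaultApproach1} for the second insertion and claim it is free. But the lemma, used as a black box, only gives $\le 1-\chi$, which together with the cost $1$ for $v$ gives $2-\chi$, far too much. It \emph{is} true that in this specific situation the cost of inserting $v'$ is $0$ (it is the sub-case $\ell_{i'}=p_\ell$ inside the proof of \cref{lem:defaultApproach1}, where the would-be leftmost mountain is actually the pocket $p_\ell v'$, so the entire credit for a left mountain is saved), and one can similarly do a little better than $1-\chi$ when $v'$ also has right pivot type. But your proposal does not isolate this sub-case; it just cites the lemma. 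Without opening up that proof, the bound you get is not $\le 1$.

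\textbf{The both-right-pivot sub-case.} The lemma only assumes ``at least one'' of $v,v'$ has right pivot type, so both may. Your writeup does not distinguish this from the mixed case, and your phrase ``$v'$'s own insertion is free because it covers at least a pocket'' glosses over the different profile $v'$ acquires in $G_i$ in the two cases ($\smile\frown^*$ versus $\smile\frown\smile$), each of which needs its own trace through the proof of \cref{lem:defaultApproach1}.

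You anticipate all of this with ``the main obstacle I expect is the bookkeeping'' and end by saying you would ``draw the two insertions simultaneously in a single figure \dots much as in the proof of \cref{lem:degreetwo}.'' That fallback is precisely the paper's proof. So the gap here is not conceptual — the sequential approach can be made to work once the symmetry $\mathrm{pc}(v')=v$ is established and the proof of \cref{lem:defaultApproach1} is unpacked to get the sharp $0$ (or negative) cost for the second insertion — but as a proof the proposal is incomplete, and contains the eligibility error and the wrong insertion order in the mixed case.
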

\begin{proof}
  If both~$v$ and~$v'$ have right pivot type, then we use the diagram shown
  in~\cref{fig:rightpivot}~(left). The costs are~$1-\chi\le 2(1-\chi)$, for~$\chi\le 1$.
  Otherwise, one of~$v,v'$ has left pivot type and the other has right pivot
  type, then~$\mathrm{p}(v)=\mathrm{p}(v')$ and~$\mathrm{pc}(v')=v$. As the
  roles of~$v$ and~$v'$ are symmetric, we may assume w.l.o.g. that~$v$ has right pivot type and~$v'$ has left pivot type. We use
  the diagram shown in~\cref{fig:rightpivot}~(right) for the case where~$v'$
  is~$\mathcal{T}(3,\frown^2)$; other types are handled analogously. The
  costs to establish the invariants are~$1\le 2(1-\chi)$, for~$\chi\le 1/2$.
\end{proof}

\begin{figure}[htbp]
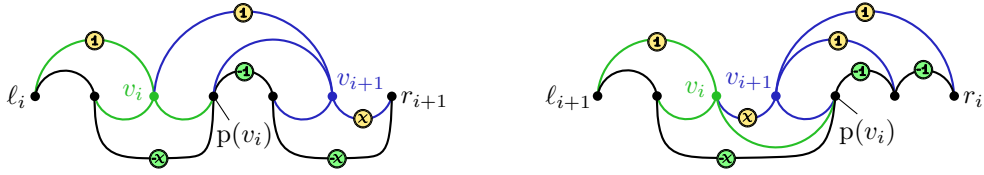

  \begin{minipage}[b]{.49\textwidth}
    \centering\includegraphics[page=15]{open}
    \label{fig:rightpivot1}
  \end{minipage}\hfill
  \begin{minipage}[b]{.49\textwidth}
    \centering\includegraphics[page=9]{open}
    \label{fig:rightpivot2}  
  \end{minipage}
  \caption{Insertion of~$v_i$ and~$v_{i+1}=\mathrm{pc}(v_i)\in\mathcal{E}_{i-1}$
    if~$v_i$ has right pivot type.\label{fig:rightpivot}}
\end{figure}

If we can apply one of \cref{lem:degreetwo,lem:samepivot}, we make progress by inserting two vertices $v_i$ and $v_{i+1}$. Hence, from now on, we assume that neither of \cref{lem:degreetwo,lem:samepivot} can be applied. Our goal in the remainder of this section is to show that in this case we can find a vertex~$u$ that is not eligible but sufficiently close to being eligible---in a way described in the following---that we can aim to insert~$u$ next, along with some other vertices. 

More specifically, the vertex~$u$ has neighbors $w_1,\ldots,w_k$ on~$P_\circ(G_{i-1})$, for~$k\ge 2$, and each subregion $X_j$ of~$R_{i-1}(u)$ bounded by the edges~$uw_j$ and~$uw_{j+1}$ has a particularly simple structure. First of all, there exists an integer~$s=s(X_j)$ such that
we have~$X_j\cap\mathcal{E}_{i-1}=\{c_1,\ldots,c_s\}$, and every~$c_\ell$, for~$1\le\ell\le s$, is adjacent to~$u$ in~$G$.
We distinguish three types of regions, depending on whether~$X_j$ contains eligible vertices of left, right, or both pivot types.

\subparagraph*{Left-pivot region.}
(see \cref{fig:overview:1})
\begin{itemize}
\item Every~$c_\ell$, for $1\le\ell\le s$, has left pivot type. 
\item We have $\mathrm{pc}(c_1)=u$ and $\mathrm{pc}(c_\ell)=c_{\ell-1}$, for all~$2\le\ell\le s$.
\item All vertices in~$(V\setminus\mathcal{E}_{i-1})\cap X_j$ lie inside the face bounded by~$uc_sw_{j+1}$.
\end{itemize}

\begin{figure}[htbp]
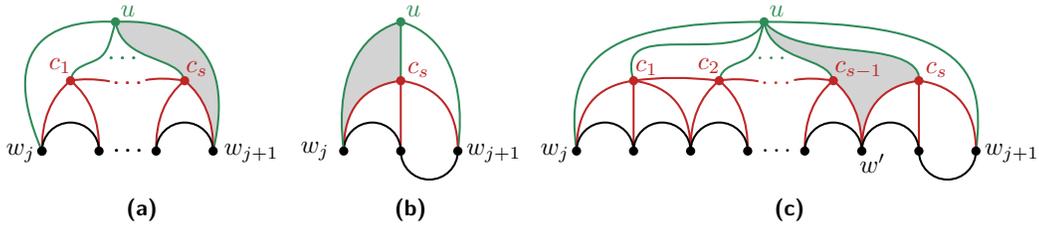

  \centering
  {\captionsetup{singlelinecheck=true}
  \begin{subfigure}[b]{0.28\textwidth}
    \centering
    \includegraphics[scale=.96,page=46]{open}
    \subcaption{}
    \label{fig:overview:1}
  \end{subfigure}\hfill
  \begin{subfigure}[b]{0.23\textwidth}
    \centering
    \renewcommand{\thelinenumber}{}%
    \includegraphics[scale=.96,page=47]{open}
    \subcaption{}
    \label{fig:overview:2}
  \end{subfigure}\hfill
  \begin{subfigure}[b]{0.49\textwidth}
    \centering
    \renewcommand{\thelinenumber}{}%
    \includegraphics[scale=.96,page=45]{open}
    \subcaption{}
    \label{fig:overview:3}
  \end{subfigure}\hfill}
  \caption{Structure of regions that our to-be-inserted-next vertex~$u$ spans with~$P_\circ(G_{i-1})$. All eligible vertices (shown red) are adjacent to~$u$, all other vertices lie inside the shaded region. }
  \label{fig:overview}
\end{figure}

\subparagraph*{Right-pivot region.}
(see \cref{fig:overview:2})
\begin{itemize}
\item We have~$s=1$, the vertex~$c_1$ has right pivot type, and $\mathrm{pc}(c_1)=u$.
\item All vertices in~$(V\setminus\mathcal{E}_{i-1})\cap X_j$ lie inside the face bounded by~$uw_jc_1$. 
\end{itemize}

\subparagraph*{Both-pivot region.}
(see \cref{fig:overview:3})
\begin{itemize}
\item Every~$c_\ell$, for $1\le\ell\le s-1$, has left pivot type and~$c_s$ has right pivot type.
\item We have~$\mathrm{pc}(c_1)=\mathrm{pc}(c_s)=u$ and~$\mathrm{pc}(c_\ell)=c_{\ell-1}$, for all~$2\le\ell\le s-1$.
\item The rightmost neighbor of~$c_{s-1}$ on~$P_{\circ}(G_{i-1})$ is the same as the leftmost neighbor of~$c_s$ on~$P_{\circ}(G_{i-1})$; denote this vertex by~$w'$.
\item All vertices in~$(V\setminus\mathcal{E}_{i-1})\cap X_j$ lie inside the quadrilateral~$uc_{s-1}w'c_s$. 
\end{itemize}

\subparagraph*{How to select u.}
In the remainder of this section we will sketch how to select a suitable vertex~$u$ such that all regions spanned by~$u$ and~$P_\circ(G_{i-1})$ have the nice structure explained above.
The first part of the story is easy to tell: We select~$u$ to be a minimal (w.r.t.~$\prec$) element of the set~$\mathcal{U}:=\{\mathrm{pc}(v)\colon v\in\mathcal{E}_{i-1}\}\setminus\mathcal{E}_{i-1}$. 
Such a vertex always exists because 

\begin{restatable}{rlemma}{lemu}\label{lem:u}
  \renewcommand{\mylink}{\statlink{lem:u}\prooflink{Plemu}}
  We have~$\mathcal{U}\ne\emptyset$.
\end{restatable}

As there is a vertex~$v\in\mathcal{E}_{i-1}$ with~$u=\mathrm{pc}(v)$, we know
that~$u\in \mathcal{U}$ has at least one neighbor on~$P_\circ(G_{i-1})$, which
is~$\mathrm{p}(v)$. By \cref{lem:degreetwo} we may assume~$d_{i-1}(u)\ge
2$. Let~$w_1,\ldots,w_k$ denote the sequence of neighbors of~$u$
along~$P_\circ(G_{i-1})$. The edges~$uw_j$, for~$2\le j\le k-1$,
split~$R_{i-1}(u)$ into~$k-1$ subregions; let~$X_j$ denote the (open) region
bounded by~$w_juw_{j+1}$ and the part of~$P_\circ(G_{i-1})$ between~$w_j$
and~$w_{j+1}$, for~$1\le j<k$.

\begin{restatable}{rlemma}{lemonepivot}\label{lem:onepivot}
 \renewcommand{\mylink}{\statlink{lem:onepivot}\prooflink{Plemonepivot}}
  In every region~$X_j$, for~$1\le j<k$, there is at most one eligible
  vertex~$v$ of each pivot type for which~$\mathrm{pc}(v)=u$.
\end{restatable}


\begin{restatable}{rlemma}{lemrightpivot}\label{lem:rightpivot}
 \renewcommand{\mylink}{\statlink{lem:rightpivot}\prooflink{Plemrightpivot}}
  In every region~$X_j$, at most one eligible vertex has right pivot type. If
  there exists a vertex~$v\in X_j\cap\mathcal{E}_{i-1}$ that has right pivot
  type, then~$\mathrm{pc}(v)=u$.
\end{restatable}

\begin{restatable}{rlemma}{lemleftpivot}\label{lem:leftpivot}
 \renewcommand{\mylink}{\statlink{lem:leftpivot}\prooflink{Plemleftpivot}}
  Let~$Q$ denote the set of vertices in~$X_j\cap\mathcal{E}_{i-1}$ that have
  left pivot type. If~$Q\ne\emptyset$, then the vertices in~$Q$ form a
  sequence~$x_1,\ldots,x_q$, for some~$q\ge 0$, such that
  $x_j=\mathrm{pc}(x_{j+1})$, for~$1\le j\le q-1$, and~$\mathrm{pc}(x_1)=u$.
\end{restatable}


\begin{restatable}{rlemma}{lemregions}\label{lem:regions}
 \renewcommand{\mylink}{\statlink{lem:regions}\prooflink{Plemregions}}
  Let~$e\in P_\circ(G_{i-1})\cap\partial X_j$, for some~$1\le j<k$, and
  let~$c_e\in V\setminus V_{i-1}$ denote the vertex that covers~$e$. Then either
  $c_e=u$ or~$c_e\in\mathcal{E}_{i-1}$.
\end{restatable}

We process the regions~$X_1,\ldots,X_{k-1}$ together
with~$u$. Consider region~$X_j$ such that~$X_j\cap V\ne\emptyset$, and
denote~$E_j=P_\circ(G_{i-1})\cap\partial X_j$.  By \cref{lem:regions} the
vertices that cover one or more edges of~$E_j$ are exactly the vertices
in~$\mathcal{E}_{i-1}\cap X_j$. Thus, we can order these vertices from left to
right, according to the edge(s) in~$E_j$ they cover.  Denote this sequence
by~$c_1,\ldots,c_s$. By \cref{lem:rightpivot} the only vertex in~$X_j\cap V$ that may have right pivot type is~$c_s$. Denote~$s'=s-1$ if~$c_s$ has
right pivot type, and~$s'=s$, otherwise; i.e., $c_{s'}$ is the rightmost vertex of the sequence that has left pivot type. By \cref{lem:leftpivot} we
have~$c_h=\mathrm{pc}(c_{h+1})$, for~$1\le h \le s'-1$,
and~$\mathrm{pc}(c_1)=u$. It follows that the rightmost vertex~$w'$
of~$P_\circ(G_{i-1})$ that is adjacent to~$c_{s'}$ is the only vertex
of~$P_\circ(G_{i-1})$ that can be adjacent to a vertex
in~$(X_j\cap V)\setminus\mathcal{E}_{i-1}$. So the general situation
inside~$X_j$ can be summarized as depicted in \cref{fig:regionsummary}. Neither
the sequence of left pivot vertices nor the right pivot vertex may exist, but if
neither is present, then~$X_j\cap V=\emptyset$.

\begin{figure}[htbp]
  \centering\includegraphics[page=19]{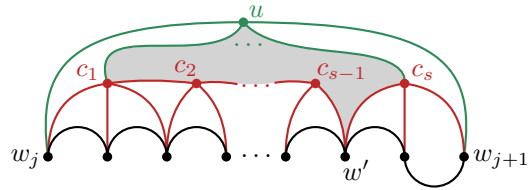}
  \caption{The structure of eligible vertices within a region~$X_j$. All
    triangular faces here are empty, only the central face (shaded) may contain other
    vertices or edges~$uc_h$, for~$2\le h<s$. The left pivot vertices could be of any
    type~$\mathcal{T}(z,\frown^{z-1})$.\label{fig:regionsummary}}
\end{figure}

The following lemma allows us to assume that the central face in each region~$X_j$ is subdivided into empty (of vertices) triangles and at most one---not necessarily empty--- triangle or quadrilateral (the latter if~$X_j$ contains eligible vertices of both pivot types).

\begin{restatable}{lemma}{lemmanyleft}
    \label{lem:manyleft}
  Let $X_j$ be a region s.t. there exist~$v,v'\in\mathcal{E}_{i-1}\cap X_j$ with~$\mathrm{pc}(v)=v'$, let~$v''$ be the vertex that covers $vv'$. If~$v''\ne u$ and~$\chi\le 1/5$, there exist~$v_i,\ldots,v_{i+h-1}$ with~$h\ge 3$ 
  s.t.~a valid 
  diagram for~$G_{i+h-1}$ can be obtained by spending at most~$(1-\chi)h$ credits.
\end{restatable}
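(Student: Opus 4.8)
\textbf{Setup and strategy.} The plan is to exploit the fact that, inside a region~$X_j$, a chain of covering relations $\mathrm{pc}(v)=v'$ with the vertex $v''$ covering $vv'$ satisfying $v''\neq u$ gives us a \emph{separating triangle} or a small separated subgraph on which we can recurse. Concretely, if $v,v'\in\mathcal{E}_{i-1}\cap X_j$ with $\mathrm{pc}(v)=v'$, then by \cref{lem:leftpivot,lem:rightpivot} the vertices $v,v'$ are consecutive eligible vertices of the left-pivot chain (or $v$ is the right-pivot vertex and $v'$ its predecessor), so $v'\mathrm{p}(v)\,v$ bounds a face, and $v v' v''$ bounds a triangle $\Delta$ whose interior contains all the ``hidden'' vertices associated to that part of the chain. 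Since $v''\neq u$, this triangle $\Delta$ lies strictly inside $R_{i-1}(u)$ and is a proper separating triangle of~$G$, so the subgraph $H$ consisting of $\Delta$ together with everything inside it is a maximal plane graph on some number $h+3$ of vertices with $h\ge 1$ interior vertices; since $v''$ itself is one of the covered-but-hidden vertices and the face $v v' v''$ is not the only one, we actually get $h\ge 3$.

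\textbf{The recursive insertion.} First I would insert $v$ and $v'$ (the two endpoints of the pivot chain segment we are cutting off, together with $\mathrm{p}(v)$ if needed) into the current diagram so that the triangle $v v' v''$ appears as a face of the partial diagram with its bounding edges drawn as pockets/mountains in the ``canonical'' position — essentially the base configuration used in the proof of \cref{thm:1}, i.e. $v v'$ as the lower envelope edge of a sub-diagram and the two other edges carrying $\chi$ credits each. This costs $O(1)$ credits, which I would account against the $(1-\chi)h$ budget using the slack from $h\ge 3$. Then I would apply the induction hypothesis of \cref{lem:main}/\cref{lem:mainadapt} to the triangulation $H$ with starting triangle $v v' v''$: this produces, at cost at most $(h-3)(1-\chi)+2\chi$ (or the $\xi$-free version via \cref{lem:mainadapt}), an extensible sub-diagram for all of $H$, which we splice into the global diagram in the region between $v$ and $v'$ on the spine. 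Combining: total cost $\le 3(1-\chi) + (h-3)(1-\chi)+O(\chi) = (1-\chi)h + O(\chi) - \text{something}$, and here is where the constraint $\chi\le 1/5$ is used to make the constants work out, exactly as in the base-case arithmetic $1+2\chi\le 2(1-\chi)$ and $n(1-\chi)+7\chi-3$ computations elsewhere.

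\textbf{Maintaining validity.} After splicing, I need to re-establish the invariants \ref{i:biarcTypes}--\ref{i:pocketMoney} for the combined diagram $G_{i+h-1}$: the spliced-in sub-diagram for $H$ is extensible on its own, so its upper envelope is the path $P_\circ$ of $H$ minus $v v'$, and this glues correctly onto $P_\circ(G_{i+h-1})$ because the vertices $\mathrm{p}(v),\dots$ that were between them have now been absorbed; the edges $v v''$ and $v' v''$ become interior and their $\chi$-credit or biarc status must be checked against \ref{i:pocketMoney} and \ref{i:biarcMoney}. The only subtlety is that $H$ may share $v''$ with the rest of $X_j$ — but $v''\neq u$ guarantees $v''$ is not on the outer boundary path that interacts with $u$, so the chain $c_1,\dots,c_s$ structure for the \emph{reduced} region (after removing $v,v',v''$ and the interior of $\Delta$) is still of one of the three types described.

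\textbf{Main obstacle.} The hard part will be verifying that after cutting off $\Delta$ and its interior, the remaining configuration inside $X_j$ still satisfies all the structural properties (the left-pivot chain, right-pivot, or both-pivot classification, and in particular \cref{lem:regions} and \cref{lem:leftpivot}), so that the process can be iterated and so that $u$ remains a minimal element of the relevant set $\mathcal{U}$; equivalently, one must show that removing $\Delta$ does not create new eligible vertices or destroy the covering relations $\mathrm{pc}(c_\ell)=c_{\ell-1}$ for the surviving $c_\ell$'s. This is a combinatorial case analysis on where $v,v'$ sit in the chain (both left-pivot; or $v'$ left-pivot and $v$ the right-pivot vertex $c_s$) combined with the face-structure of $X_j$, and it is the step where one genuinely uses that the region was ``clean'' to begin with (every eligible vertex adjacent to $u$, all hidden vertices confined to the central face).
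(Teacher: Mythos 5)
Your central structural claim is wrong, and it derails the rest of the argument. Since $v''$ is defined as the vertex that covers the edge $vv'$, the triangle $vv'v''$ is by definition a \emph{face} of~$G$; it has no interior vertices, it is not a separating triangle, and there is nothing inside it to recurse on. Consequently your justification for $h\ge 3$ (``$\Delta$ has interior vertices'') is unfounded; in the paper's proof $h\ge 3$ simply because the three vertices $v,v',v''$ are all inserted (and possibly more). Likewise your plan to ``apply \cref{lem:mainadapt} to the triangulation $H$ with starting triangle $vv'v''$'' has nothing to apply to.

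Two further issues. First, by \cref{lem:rightpivot} and \cref{lem:leftpivot}, \emph{both} $v$ and $v'$ must have left pivot type: if $v$ had right pivot type then $\mathrm{pc}(v)=u$, contradicting $\mathrm{pc}(v)=v'\in X_j$; and then $v'=\mathrm{pc}(v)$ is the predecessor of $v$ in the left-pivot chain of \cref{lem:leftpivot}, hence also of left pivot type. Your parenthetical allowing $v$ to be the right-pivot vertex $c_s$ is therefore impossible. Second, and more importantly, you are missing the case distinction that drives the actual proof: one must check whether $v''$ is adjacent to $w'$. If not, $v''$ becomes eligible after inserting $v,v'$, and one can simply insert $v_i=v$, $v_{i+1}=v'$, $v_{i+2}=v''$ directly for a cost of $2+2\chi\le 3(1-\chi)$ (this is where $\chi\le 1/5$ is tight). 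If $v''$ \emph{is} adjacent to $w'$, one first shows that without loss of generality $v=c_{s'}$ and $v'=c_{s'-1}$ (by choosing a deeper triple $(c_{s'-1},c_{s'},\tilde v)$ satisfying the lemma's hypotheses if necessary), and then the potentially non-empty triangle to recurse on via \cref{lem:mainadapt} is $vv''w'$, not $vv'v''$. Your proposal never identifies $vv''w'$, never uses $w'$, and never performs the WLOG reduction, so the recursion step you sketch does not match any subgraph of~$G$ that can carry interior vertices.
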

\begin{proof}
  By \cref{lem:rightpivot} both~$v$ and~$v'$ have left pivot type. In particular, if $c_s \neq c_{s'}$, this implies that we have $v,v' \neq c_s$ (see also \cref{fig:regionsummary}). By planarity
  and as~$v''\ne u$, we have~$v''\in X_j$. If~$v''$ is not adjacent to~$w'$,
  then~$v''$ is eligible after adding~$v$ and~$v'$ and we can set~$v_i=v$,
  $v_{i+1}=v'$, and~$v_{i+2}=v''$ and use the diagram for~$G_{i+2}$ shown
  in~\cref{fig:twoleft}~(left), for a cost of~$2+2\chi\le 3-3\chi$,
  for~$\chi\le 1/5$. The figure shows the drawing where both~$v$ and~$v'$
  are~$\mathcal{T}(2,\frown)$; it easily extends to the
  types~$\mathcal{T}(3,\frown^2)$ and~$\mathcal{T}(4,\frown^3)$ because more
  mountains to the right of~$v$ can be paid for by the corresponding mountains
  whose left endpoint is covered by~$v$ and for more mountains to the left
  of~$v'$ their left endpoint is covered by~$v'$.

  Otherwise, $v''$ is adjacent to~$w'$. We claim that we may assume~$v=c_{s'}$
  and~$v'=c_{s'-1}$. To see this let~$\tilde{v}\ne v''$ be the vertex that
  covers~$c_{s'-1}c_{s'}$ and observe that~$\tilde{v}$ is enclosed by a cycle
  formed by~$vv''w'$ and the part of~$P_\circ(G_{i-1})$ between the right
  neighbor of~$v$ and~$w'$. In particular, we have~$\tilde{v}\ne u$ and
  so~$c_{s'-1},c_{s'},\tilde{v}$ satisfy the conditions of the lemma, as
  claimed. We set~$v_i=v$ and $v_{i+1}=v'$, and use the diagram shown
  in~\cref{fig:twoleft}~(right). If~$v''$ is eligible in~$G_{i+1}$, that is, the
  triangle~$vv''w'$ is empty of vertices, then we set~$v_{i+2}=v''$ and have a
  diagram for~$G_{i+2}$ for a cost of~$2+\chi\le 3-3\chi$, for~$\chi\le 1/4$.

  Otherwise, by \cref{lem:mainadapt} we inductively obtain a valid diagram~$D$
  for the subgraph of~$G$ induced by taking~$vv''w'$ as an outer triangle
  together with all vertices inside, with~$v''v$ as a starting edge and~$w'$ as
  a last vertex. Then we plug~$D$ into the triangle~$vv''w'$ as shown in
  \cref{fig:twoleft}~(right). All mountains of~$D$ with left endpoint~$v''$
  carry a credit by \ref{i:mountainMoney} for~$D$.  Thus, the resulting diagram
  is extensible. For the costs we have to account for the fact that~$w'$ is
  considered to contribute~$1-\chi$ credits to~$D$, whereas we had already
  accounted for~$w'$ in the diagram for~$G_{i-1}$. On the other hand, the
  edge~$v''w'$ is paid for as a part of~$D$. Thus, the additional costs to
  handle~$v,v',v''$ are~$(1-\chi)+1+\chi=2\le 3-3\chi$, for~$\chi\le 1/3$.
\end{proof}

\begin{figure}[htbp]
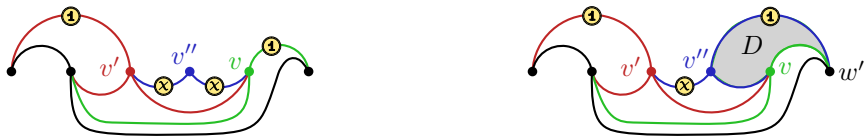

  \begin{minipage}[b]{.49\textwidth}
    \centering\includegraphics[page=16]{open}
  \end{minipage}\hfill
  \begin{minipage}[b]{.49\textwidth}
    \centering\includegraphics[page=18]{open}
  \end{minipage}
  \caption{Two vertices~$v,v'$ that have left pivot
    type and~$v''\ne u$ covers the edge~$vv'$.\label{fig:twoleft}}
\end{figure}


To complete the proof of \cref{lem:main,lem:mainadapt} it remains to insert~$u$ along with the set~$\mathcal{V}_u:=V\cap R_{i-1}(u)$ of all vertices inside~$X_1,\ldots,X_{k-1}$, at a cost of~$1-\chi$ credits per vertex. We process these regions from right to left in two phases: In Phase~1, we select a suitable collection~$X_j,\ldots,X_{k-1}$ of regions, for some~$j\in\{1,\ldots,k-1\}$, so that we can insert~$u$ together with all the vertices inside these regions. Then in Phase~2, we process the remaining regions, assuming that~$u$ is already placed on the spine, somewhere to the right. To achieve this we do a case analysis, depending on the four types of regions: left, right, both pivot, or empty. In \cref{app:regions}, we show that in all cases~$u\cup\mathcal{V}_u$ can be inserted as required. 

\section{Triangulations with many degree three vertices}\label{sec:kleetopes}


\begin{restatable}{rtheorem}{thmdegreethree}
    \label{thm:degreethree}
     \renewcommand{\mylink}{\statlink{thm:degreethree}\prooflink{p:degreethree}}
  Let~$G$ be a triangulation with~$n$ vertices, and let~$d$ denote the number of degree three vertices in~$G$. Then~$G$ admits a monotone plane arc diagram with at most~$n-d-4$ biarcs, where every biarc is down-up.
\end{restatable}
\begin{proof}[Proof Sketch]
    Let~$T$ denote the triangulation that results from removing all degree-$3$ vertices from~$G$, i.e., ~$T$ has~$k=n-d$ vertices. We proceed in two steps; see \cref{app:kleetopes} for details. 

\begin{figure}[htbp]
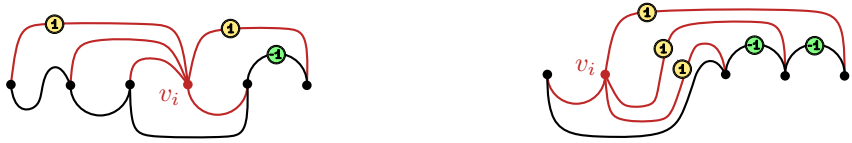

  \begin{minipage}[b]{.49\textwidth}
    \centering\includegraphics[page=44]{open}
  \end{minipage}\hfill
  \begin{minipage}[b]{.49\textwidth}
    \centering\includegraphics[page=43]{open}
  \end{minipage}
  \caption{Insert a vertex using at most one credit and make every triangle cross the spine.\label{fig:klee}}
\end{figure}

    \subparagraph*{First step.} We draw $T$ while maintaining Invariants \ref{i:biarcTypes}--\ref{i:biarcMoney} and \ref{i:contour} using the following modifications of our default insertion rules; see \cref{fig:klee}. First, if we insert $v_i$ into a pocket, we always ensure that the leftmost edge incident to $v_i$ is a mountain. Second, if all edges covered by~$v_i$ are mountains, we \emph{push down} the leftmost such mountain $m$, that is, we redraw $m$ and all mountains having the same left endpoint as $m$ into down-up biarcs. Third, instead of assigning credits to covered mountains whose left endpoint remains on the outer face, we immediately transform them into biarcs. Fourth, each vertex aside from $v_1,v_2,v_3,v_n$ contributes $1$ credit to the charging scheme. As a result, the arc diagram of $T$ has at most $n-d-4$ biarcs and all created faces have a non-empty intersection with the spine---{note that the latter property does not follow from from the result by Cardinal et al.~\cite{chktw-adfdht-18}.}

    \subparagraph*{Second step.} We insert each degree-three vertex~$v$ in its containing face~$f$ of~$T$. Using that~$f$ crosses the spine we can place~$v$ there and then realize each edge to a vertex of~$f$ as a proper arcs. Thus, no new biarcs are created in the second step.
\end{proof}

\kleetopes*
\begin{proof}
  Let~$G$ be a Kleetope on~$n$ vertices, and let~$d$ denote the number of degree three vertices in~$G$. By \cref{thm:degreethree} the graph~$G$ admits a monotone plane arc diagram with at most~$n-d-4$ biarcs, where every biarc is down-up. Removing the degree three vertices from~$G$ we obtain a triangulation~$T$ on~$n-d$ vertices, which by Euler's formula has~$2(n-d)-4$ triangular faces. As~$G$ is a Kleetope, it is obtained by inserting a vertex into each of these faces, that is, we have~$n=(n-d)+2(n-d)-4$ and thus~$d=(2n-4)/3$. So there are at most~$n-d-4=(n-8)/3$ biarcs in the diagram.
\end{proof}

\section{Planar 3-Trees}\label{sec:3trees}

For~$3$-trees it is natural to follow their recursive construction sequence and build a corresponding diagram incrementally. A planar~$3$-tree~$G$ is built by starting from a (combinatorial) triangle. At each step we insert a new vertex~$v$ into a (triangular) face~$f$ of the graph built so far, and connect~$v$ to the three vertices of~$f$. Every planar~$3$-tree~$G$ on at least four vertices is~$3$-connected. So its combinatorial embedding is unique, and for each triangle of the abstract graph we know whether it is facial or separating. In the former case, there is exactly one vertex of~$G$ that is adjacent to all vertices of the triangle, in the latter case there are exactly two such vertices. In particular, we can pick any facial triangle to be the starting triangle of our construction sequence 
for~$G$ and become the outer face of our diagram.

Let~$v_1,\ldots,v_n$ be such a construction sequence for~$G$. For~$i\in\{3,\ldots,n\}$, let~$V_i=\{v_1,\ldots,v_i\}$ and~$G_i=G[V_i]$. 
%
%
Each vertex~$v_i$, for~$i\in\{4,\ldots,n\}$, is inserted into a face~$\mathrm{F}(v_i)=uvw$ of~$G_{i-1}$, creating three \emph{child faces} $uvv_i$, $vwv_i$ and $wuv_i$ of~$uvw$ in $G_{i}$. 
We also say that~$v_i$ is the \emph{face vertex} $\mathrm{v}(uvw)$ of face~$uvw$.
We call a face~$f$ of~$G_i$ \emph{active} if it has a face vertex in~$V\setminus V_i$; otherwise, it is \emph{inactive}. The \emph{grand-degree}~$\mathrm{gd}(f)$ is the maximum number of active child faces of~$f$ in all of~$G_3,\ldots, G_n$. Observe that by construction~$\mathrm{gd}(f)\in\{0,\ldots,3\}$ and that $f$ is active for some~$G_i$ if and only if~$\mathrm{gd}(f) > 0$. Similarly, a vertex is a gd-$i$ vertex, for~$i\in\{0,1,2,3\}$, if it is the face vertex of a face~$f$ with~$\mathrm{gd}(f)=i$. For a construction sequence we define its dual \emph{face tree} $\mathcal{T}$ on the faces of all~$G_i$ such that
the root of $\mathcal{T}$ is $v_1v_2v_3$, and
each active face~$uvw$ has three children: 
the faces~$uvz$, $vwz$, and~$wuz$, where~$z=\mathrm{v}(uvw)$. 
%
Note that the leaves of $\mathcal{T}$ are inactive for all~$G_i$. Let us first observe that no biarcs are needed if all faces have small grand-degree. To this end, also recall that $G$ admits a plane proper arc diagram if and only if it is subhamiltonian and planar.

\begin{theorem}\label{thm:gd2}
  Let~$G$ be a planar $3$-tree that has a construction sequence~$v_1,\ldots,v_n$ such that for each face~$f$ in its dual tree $\mathrm{gd}(f)\le 2$. Then~$G$ admits a plane proper arc diagram.
\end{theorem}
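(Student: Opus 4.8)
The plan is to exhibit a Hamiltonian cycle in $G$ (or a supergraph of $G$ obtained by adding edges while staying planar), so that by the Bernhard--Kainen characterization mentioned in the introduction, $G$ admits a plane proper arc diagram. Since $G$ is a planar $3$-tree, it is already a triangulation, so we cannot add edges; instead I would directly construct a Hamiltonian cycle. I would do this by induction along the construction sequence, maintaining the invariant that each active face $f$ of $G_i$ ``uses'' one of its three boundary edges as a piece of the current Hamiltonian structure on $G_i$ — more precisely, I would track for each $G_i$ a Hamiltonian cycle $C_i$ of $G_i$ together with, for each active face $uvw$ of $G_i$, a designated edge of $\{uv, vw, wu\}$ that lies on $C_i$.

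The inductive step inserts $v_i$ into a face $f = \mathrm{F}(v_i) = uvw$. Because $\mathrm{gd}(f) \le 2$, at most two of the three child faces $uvv_i$, $vwv_i$, $wuv_i$ will ever become active. Choose the edge $e$ of $f$ that is designated for $C_{i-1}$; say $e = uv$ (so $uv \in C_{i-1}$). Reroute $C_{i-1}$ by replacing the edge $uv$ with the path $u\,v_i\,v$, obtaining a Hamiltonian cycle $C_i$ of $G_i$ that now uses the two edges $uv_i$ and $v_iv$ and no longer uses $uv$. The face $f$ is now subdivided; I need to re-designate edges for whichever of its (at most two active) child faces are active. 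The child face $uvv_i$ is bounded by $uv$ (not on $C_i$), $uv_i$ (on $C_i$), $v_iv$ (on $C_i$) — so it has a $C_i$-edge available; similarly $vwv_i$ is bounded by $vw$ (its status in $C_{i-1}$ may vary), $wv_i$ (not yet on $C_i$), $v_iv$ (on $C_i$), and $wuv_i$ is bounded by $wu$, $uv_i$ (on $C_i$), $v_iw$. So each of the three child faces contains at least one edge incident to $v_i$ that lies on $C_i$, and I can designate such an edge; the point of the grand-degree bound is to make sure the designations are consistent and that we never need a fourth edge — but actually the key subtlety is different, see below.

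The real content, and the main obstacle, is that a face $uvw$ that is \emph{not} $\mathrm{F}(v_i)$ for the current step can still later receive a face vertex, and when that happens the designated edge must still be on the cycle. The danger is that inserting $v_i$ into $f$ destroys the edge $uv$ which was the designated edge of some \emph{other} active face $f'$ sharing that edge. I would handle this by a careful choice rule: when we insert $v_i$ into $f$ and must pick which edge of $f$ to ``split,'' we should be forced to split the edge already designated for $f$ — and the invariant must be strong enough that the designated edge of $f$ is never also the designated edge of a neighbouring active face, or, if it is, that we can simultaneously re-designate for both. The grand-degree-$\le 2$ hypothesis is exactly what buys this: if $f$ had grand-degree $3$, all three of its edges would eventually need to be ``available'' for its three active children, forcing conflicts with neighbours; with grand-degree $\le 2$ there is always a free edge to give up. I would set this up by choosing, at the moment a face $f$ first becomes active (i.e., when we learn which $\le 2$ of its child faces will be active), a designated edge of $f$ that avoids the edge shared with the sibling faces in the same insertion that are also going to be active, and then prove by a short case analysis on $\mathrm{gd}(f) \in \{0,1,2\}$ that the re-designation after splitting is always possible and never conflicts.

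Concretely I would phrase the induction hypothesis as: $G_i$ has a Hamiltonian cycle $C_i$ such that every active face of $G_i$ has at least one boundary edge on $C_i$, and moreover these ``charged'' edges can be chosen so that no edge of $C_i$ is charged to more than one active face. The base case $G_3 = v_1v_2v_3$ has $C_3$ the triangle itself and the single (root) face charged to, say, $v_1v_2$; its grand-degree is $\le 2$ by hypothesis so at most two of its children are active and the step above applies. For the inductive step I replace the charged edge $e$ of $f$ by the length-two path through $v_i$, verify $C_i$ is again a Hamiltonian cycle, and re-charge: the two new $C_i$-edges $uv_i, v v_i$ (using $e = uv$) are available, plus the old charges of all faces other than $f$ and the sibling children are untouched since $C_i$ differs from $C_{i-1}$ only in the two-edge neighbourhood of $e$, and $e$ was charged only to $f$. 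The main obstacle — showing that the at-most-two active children of $f$ can all be re-charged without collision, including against the faces outside $f$ that previously were charged to $uv_i$ or $vv_i$ (which is impossible since those edges did not exist in $G_{i-1}$) — reduces to the observation that the only faces of $G_i$ using edges $uv_i$ or $vv_i$ are child faces of $f$, so there is no external competition. Once $C_n$ is obtained, $G = G_n$ is Hamiltonian, hence (being planar) subhamiltonian, hence by the Bernhard--Kainen result it admits a plane proper arc diagram, which is what we wanted.
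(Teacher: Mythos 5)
Your approach is correct and genuinely different from the paper's. The paper proves \cref{thm:gd2} directly at the level of the arc diagram: it maintains the geometric invariant that every active face of $G_i$ is a \emph{drop} (its two short edges lie on opposite pages, so the face contains a segment of the spine), observes that when $v_i$ is inserted into a drop at most two of the three new faces need to be drops (since $\mathrm{gd}(f)\le 2$ forces one child face to be a leaf of the dual tree), and notes that two drops can always be achieved by a local construction (Figure~\ref{fig:drops}). You instead prove the purely combinatorial statement that $G$ is Hamiltonian and then invoke the Bernhard--Kainen characterization. Your charge invariant (an injective assignment of active faces to boundary edges on the current Hamiltonian cycle $C_i$) plays exactly the role that the drop invariant plays in the paper; in fact, a drop in the paper's construction is precisely a face one of whose two short edges belongs to the Hamiltonian cycle induced by the spine order, so the two invariants are close cousins. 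The small case analysis you need is sound: splitting the charged edge $e$ of $f$ into $u v_i, v v_i$ produces two fresh cycle edges that together touch all three child faces, each child touches at least one of them, and since at most two children are active and the two fresh edges border only child faces of $f$, a conflict-free re-assignment always exists; all other charges survive unchanged because $e$ was charged only to $f$ by injectivity. What your route buys is a cleaner separation of concerns (combinatorics first, then a black-box invocation of Bernhard--Kainen) and the stronger byproduct that planar $3$-trees with $\mathrm{gd}\le 2$ are actually Hamiltonian, not merely subhamiltonian; what the paper's route buys is an explicit arc diagram with an explicit geometric invariant, which is the language the rest of their analysis of $3$-trees (Theorem~\ref{thm:3tree}) is carried out in.
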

\begin{proof}
  We start by drawing the face~$v_1v_2v_3$ as a \emph{drop}, that is, a face where the two short edges are proper arcs on different sides of the spine; see \cref{fig:drops}. Then we iteratively insert the vertices~$v_i$, for~$i=4,\ldots,n$, such that every face that corresponds to an internal vertex of the dual tree~$\mathcal{T}$ is a drop in the diagram~$D_i$ for~$G_i$. This can be achieved because by assumption at least one of the three faces of~$D_i$ created by inserting~$v_i$ is a leaf of~$\mathcal{T}$, which need not be realized as a drop. But we can always realize the two other faces as drops, as shown in \cref{fig:drops}. In this way we obtain a diagram for~$G$ without any biarc. 
\end{proof}

\begin{figure}[htbp]
  \centering\includegraphics[page=1]{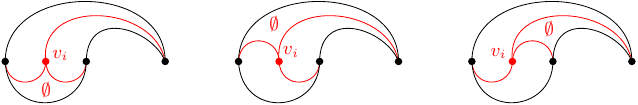}
  \caption{Insert a vertex~$v_i$ into a drop 
  s.t.~any chosen two of the faces created are drops.\label{fig:drops}}
\end{figure}

As~$\mathcal{T}$ is a tree, we can relate the number of internal vertices to the number of leaves.

\begin{lemma}\label{lem:inactivenodes}
  Let $f_d$ denote the number of faces in~$\mathcal{T}$ with grand-degree exactly~$d$, and 
  let $n_{\mathrm{inact}}$ denote the number of face vertices that create inactive faces only. Then~$n_{\mathrm{inact}} \geq 2f_3+f_2$.
\end{lemma}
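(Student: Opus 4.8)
The plan is to exploit the tree structure of $\mathcal{T}$ together with the fact that every internal node of $\mathcal{T}$ has exactly three children. Let $a$ denote the number of internal (active) faces in $\mathcal{T}$, i.e.\ faces with grand-degree $\geq 1$. Since $\mathcal{T}$ is a tree in which every internal node has exactly three children, a standard counting argument gives that the number of leaves equals $2a+1$: each of the $a$ internal nodes contributes $3$ child slots, all but the root are themselves children, so the number of nodes is $3a+1$, of which $a$ are internal, hence $2a+1$ are leaves. (One has to be slightly careful that the root $v_1v_2v_3$ is itself either internal or a leaf; if $n=3$ the statement is trivial, so assume $n\geq 4$ and the root is internal.)

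Next I would set up the bookkeeping between face vertices and nodes of $\mathcal{T}$. Each face vertex $v_i$, for $i\in\{4,\ldots,n\}$, is the face vertex $\mathrm{v}(\mathrm{F}(v_i))$ of exactly one face of $G_{i-1}$, and that face becomes an internal node of $\mathcal{T}$ with $v_i$ as its "apex". So there is a bijection between face vertices $v_4,\ldots,v_n$ and internal nodes of $\mathcal{T}$ other than the root — wait, more carefully: the root $v_1v_2v_3$ has face vertex $v_4$ (assuming $n\geq 4$), and every other internal node of $\mathcal{T}$ is one of the three child faces created when its parent's face vertex was inserted, and its own face vertex is some $v_j$. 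Thus the face vertices $v_4,\ldots,v_n$ are in bijection with the $a$ internal nodes of $\mathcal{T}$. A face vertex $v_i$ "creates inactive faces only" precisely when all three children of its node are leaves of $\mathcal{T}$, i.e.\ its node has grand-degree $0$ as measured by active children — but note $\mathrm{gd}(f)$ is defined for the face $f$ itself, not for its face vertex; a gd-$d$ vertex is the face vertex of a gd-$d$ face. So $n_{\mathrm{inact}}$ counts internal nodes $f$ of $\mathcal{T}$ all of whose three children are leaves; equivalently, $f$ is an internal node none of whose children is internal.

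Now I would finish by a discharging/counting argument on the internal nodes of $\mathcal{T}$, classified by how many of their three children are internal: let $a_0,a_1,a_2,a_3$ be the number of internal nodes with exactly $0,1,2,3$ internal children, so $a=a_0+a_1+a_2+a_3$ and $n_{\mathrm{inact}}=a_0$. Counting (internal node, internal child) incidences two ways: $\sum_j j\,a_j = a-1$ (every internal node except the root is the internal child of a unique internal parent). The key observation linking this to grand-degree is that $f_d$ — the number of faces of grand-degree exactly $d$ — relates to how many children of a node are active: a face $f$ with $\mathrm{gd}(f)=d$ has exactly $d$ of its three children active (this needs the observation in the text that $\mathrm{gd}(f)$ is the max number of active child faces and, by the nested structure of the $G_i$'s, this max is attained and equals the number of children that are ever active, i.e.\ internal in $\mathcal{T}$). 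Hence $a_j$ (nodes with $j$ internal children) is exactly $f_j$ restricted to internal nodes — but every face with $\mathrm{gd}(f)\geq 1$ is internal, and faces with $\mathrm{gd}(f)=0$ are leaves, so $a_j=f_j$ for $j\in\{1,2,3\}$ and... here I must be careful: an internal node with $0$ internal children still has $\mathrm{gd}(f)\geq 1$ if it has at least one active child — but "active child" = "child that is a face vertex of some later vertex" = "child that is an internal node of $\mathcal{T}$". So an internal node with $0$ internal children has all three children inactive, giving $\mathrm{gd}(f)=0$?? That contradicts $f$ being internal. The resolution: $f$ is internal in $\mathcal{T}$ iff $f$ itself is active iff $\mathrm{gd}(f)\geq 1$, and its children being internal is a different question. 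I think the correct reading is that a node's grand-degree counts its own active children, so $a_0$ corresponds to $f_0$ among... no. Given this subtlety, the cleanest route is: identify $f_d = a_d$ for $d=1,2,3$ directly from "gd$(f)=d$ $\iff$ exactly $d$ children of $f$ are active $\iff$ exactly $d$ children of $f$ are internal in $\mathcal{T}$", and note a face of grand-degree $0$ is a leaf so contributes nothing to any $f_d$ with $d\geq 1$ nor to $a$. Then $n_{\mathrm{inact}}=a_0=$ (internal nodes all of whose children are leaves), and from $\sum_{j\geq 1} j\,f_j = \sum_{j} j\, a_j = a-1$ together with $a_0 = a - f_1-f_2-f_3$ we get $n_{\mathrm{inact}} = a_0 = a-1 - (2f_2+3f_3+f_1) + (f_1+f_2+f_3)\cdot 0$... let me just write it as: $a_0 = a - (f_1+f_2+f_3)$ and $f_1+2f_2+3f_3 = a-1$, so $a = a_0 + f_1+f_2+f_3$ gives $a_0 = a_0 + f_1+f_2+f_3 - (f_1+f_2+f_3)$, a tautology; instead substitute $a = a_0+f_1+f_2+f_3$ into $f_1+2f_2+3f_3=a-1$ to obtain $f_1+2f_2+3f_3 = a_0+f_1+f_2+f_3-1$, i.e.\ $n_{\mathrm{inact}}=a_0 = f_2+2f_3+1 \geq 2f_3+f_2$.

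\textbf{Main obstacle.} The whole argument hinges on the two structural facts: (i) $\mathcal{T}$ is a ternary tree whose internal nodes are in bijection with the face vertices $v_4,\ldots,v_n$, and (ii) $\mathrm{gd}(f)$ equals the number of children of $f$ in $\mathcal{T}$ that are themselves internal — equivalently, that once a child face becomes active it stays active and the "maximum over $G_3,\ldots,G_n$" in the definition of grand-degree is attained simultaneously for all three children. Pinning down (ii) rigorously from the monotone nature of the construction (faces, once subdivided, vanish; a child face is active iff some later vertex falls in it) is the one place where care is needed; everything after that is the elementary leaf-counting identity for ternary trees. The extra "$+1$" I obtain ($n_{\mathrm{inact}}=2f_3+f_2+1$) is consistent with, and slightly stronger than, the claimed bound, coming from the root not having a parent.
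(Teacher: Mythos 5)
Your proof is correct and follows essentially the same route as the paper: pass to the subtree of $\mathcal{T}$ spanned by internal nodes (which the paper calls $\mathcal{T}'$, obtained by deleting all leaves), observe that $\mathrm{gd}(f)$ equals the number of children of $f$ that survive in this subtree, and apply the elementary edge/node count $\sum_f \mathrm{gd}(f)=|\mathcal{T}'|-1$ to extract $n_{\mathrm{inact}}=a_0=f_2+2f_3+1\ge 2f_3+f_2$. Your version is substantially more detailed than the paper's one-line sketch and in fact establishes the slightly stronger equality with the $+1$; the only stylistic remark is that the mid-proof detour about whether $a_0$ equals $f_0$ is unnecessary, since your final computation never uses $f_0$ and relies only on $a_d=f_d$ for $d\in\{1,2,3\}$.
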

\begin{proof}
  Consider the rooted tree~$\mathcal{T}'$ obtained by removing all leaves of~$\mathcal{T}$, and observe that the grand-degree in~$\mathcal{T}$ corresponds to the vertex degree in~$\mathcal{T}'$.
\end{proof}

%

We are now ready to describe our drawing algorithm for general planar $3$-trees.

\threetrees*
\begin{proof}
    Our algorithm is iterative and draws $G$ in the sequence prescribed by $\mathcal{T}$. Namely, at each step of our algorithm, we select an arbitrary already drawn face $uvw$ and insert its face vertex $\mathrm{v}(uvw)$, possibly together with the face vertex of a child face. We will consider faces of a particular shape mostly. Consider a face $f=uvw$ such that~$u,v,w$ appear in this order along the spine and~$uw$ forms the upper envelope of~$f$. (There is a symmetric configuration, obtained by a rotation by an angle of~$\pi$ where~$uw$ forms the lower envelope of~$f$.) We say that~$f$ is \emph{ottifant-shaped}\footnote{An \emph{ottifant} is a cartoon abstraction of an elephant designed and popularized by the artist Otto Waalkes. Use of the term \emph{ottifant} with kind permission of Ottifant Productions GmbH.} if it contains a region bounded by a down-up biarc between~$u$ and~$w$, a down-up biarc between~$u$ and~$v$ and a mountain between~$v$ and~$w$; see \cref{fig:ottifant:1}. Note the word ``contains'' in the definition of ottifant-shaped, which allows the actual face to be larger. For instance, the top boundary could be a mountain, but we treat it as if it was a biarc for the purposes of drawing edges; that is, we only connect to~$u$ from below the spine. 
    
    To control the number of biarcs drawn we maintain a charge~$\mathrm{ch}(v)$ for each vertex~$v$. We require additional flexibility from the edge~$vw$ of an ottifant-shaped face~$f=uvw$, which we call the \emph{belly} of~$f$. To this end, we call a mountain~$vw$ \emph{transformable} if it can be redrawn as a down-up biarc for at most~$3/2$ units of charge. (Note that every edge can be drawn as a biarc for only one credit. But in some cases redrawing an edge as a biarc requires another adjacent edge to be redrawn as a biarc as well. Having an extra reserve of half a credit turns out sufficient to cover these additional costs, as shown in the analysis below.)

    %
\ifmhotti
  More specifically, we maintain the following invariants:
    \begin{enumerate}[label=(O\arabic*),left=\labelsep]
        \item\label{inv:otti:1} Each internal active face is ottifant-shaped.
        \item\label{inv:otti:2} If the belly of an active face is a mountain, it is transformable.
        \item\label{inv:otti:3} The sum of the charges of all vertices is at least the number of biarcs drawn.
        \item\label{inv:otti:4} For each vertex $v$ we have $\mathrm{ch}(v)\le\frac34$.
    \end{enumerate}

    \noindent It is easy to see that a drawing $D$ of $G$ has at most $\lfloor \frac34n \rfloor$ biarcs if the invariants hold for $D$. 

\begin{figure}[htbp]
  \centering
  {\captionsetup{singlelinecheck=true}
  \begin{subfigure}[b]{.32\textwidth}
    \centering
    \includegraphics[page=2]{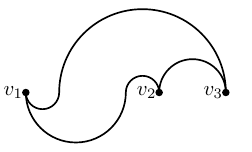}
    \subcaption{}  
    \label{fig:ottifant:1}  
  \end{subfigure}\hfill
  \begin{subfigure}[b]{.32\textwidth}
    \centering
    \includegraphics[page=19]{ottifanten2}
    \subcaption{}  
    \label{fig:ottifant:2}  
  \end{subfigure}
    \hfill
     \begin{subfigure}[b]{.32\textwidth}
    \centering
    \includegraphics[page=4]{ottifanten2}
    \subcaption{}  
    \label{fig:ottifant:3}  
    \end{subfigure}\hfill}
    \caption{(a)~
    An ottifant-shaped face $uvw$, where the long edge is on the top page (green edges are transformable). (b)~Drawing of the initial face $v_1v_2v_3$. (c)~Insertion of a gd-$1$ vertex~$x=\mathrm{v}(uvw)$.}
    \label{fig:ottifant}
\end{figure}

  \subparagraph{Initialization.} We put~$v_1v_2v_3$ on the spine in this order and draw the edges~$v_1v_2$ and~$v_2v_3$ as pockets and~$v_1v_3$ as a mountain; see \cref{fig:ottifant:2}. The invariants \ref{inv:otti:1}--\ref{inv:otti:4} hold.
  

  \subparagraph{Charging rights.} Typically we charge a vertex when it is added to the drawing. But different vertices have different needs. Specifically, we will see that no biarc/charge is used when inserting a gd-$0$ vertex. Therefore, for each gd-$0$ vertex~$v$ we distribute the rights to use the charge of~$v$ among two targets: (1)~the \emph{parent} of~$v$ (i.e., the vertex~$\mathrm{v}(f)$ of the parent~$f$ of~$\mathrm{F}(v)$ in~$\mathcal{T}$)---if it exists---may assign a charge of~$\le 1/4$ to~$v$ and (2)~the so-called \emph{preferred ancestor}~$p(v)$ may assign a charge of~$\le 1/2$ to~$v$.
  %
   Preferred ancestors are determined by selecting an arbitrary surjective map~$p$ from the set of gd-$0$ vertices to the set of gd-$2$ and gd-$3$ vertices. According to \cref{lem:inactivenodes} there exists such a map such that every gd-$2$ is selected at least once and every gd-$3$ vertex is selected at least twice as a preferred ancestor.

  \subparagraph{Iterative step.} We select an arbitrary active face~$f=uvw$, which is  ottifant-shaped by \ref{inv:otti:1}, and insert its face vertex $x:=\mathrm{v}(f)$ into~$f$. 
  Assume w.l.o.g. (up to rotation by an angle of~$\pi$) that~$uw$ forms the top boundary of~$f$. 
  We make a case distinction based on~$\mathrm{gd}(f)$.

  \subparagraph{Case 1: $\mathrm{gd}(f)=0$.} Then all child faces of~$f$ are inactive so that \ref{inv:otti:1} and \ref{inv:otti:2} hold trivially. We insert~$x$ inside~$f$ between~$u$ and~$v$ on the spine, draw the edge~$ux$ as a pocket and~$xv$ and~$xw$ as mountains; see \cref{fig:ottifant:3}. No biarcs are created, so \ref{inv:otti:3}--\ref{inv:otti:4} hold.
  
  \subparagraph{Case 2: $\mathrm{gd}(f)\ge 2$.} We insert~$x$ as in Case $1$, except that~$xv$ is drawn as a biarc rather than as a mountain; see \cref{fig:ottifant:4}. 
  All created child faces are ottifant-shaped \ref{inv:otti:1} and all bellies are transformable \ref{inv:otti:2}. We created one biarc. So to establish \ref{inv:otti:3}--\ref{inv:otti:4} it suffices to set~$\mathrm{ch}(x)=\frac34$ and add a charge of~$\frac14$ to one of the (at least one) gd-$0$ vertices in~$p^{-1}(x)$. 
  
 \begin{figure}[htbp]
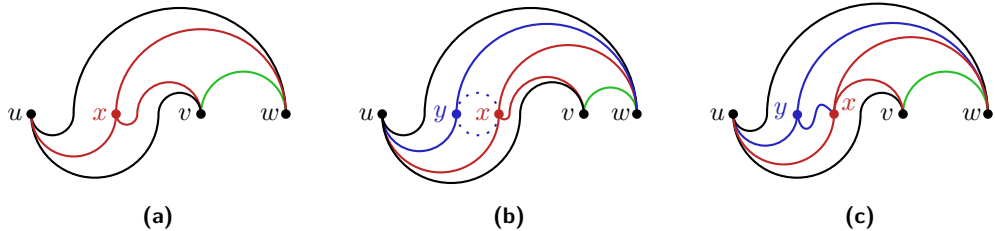

  \centering
  {\captionsetup{singlelinecheck=true}
    \centering 
     \begin{subfigure}[b]{.32\textwidth}
    \centering
    \includegraphics[page=3]{ottifanten2}
    \subcaption{}  
    \label{fig:ottifant:4}  
  \end{subfigure}\hfill
  \begin{subfigure}[b]{.32\textwidth}
    \centering
    \includegraphics[page=12]{ottifanten2}
    \subcaption{}  
    \label{fig:ottifant:9}  
  \end{subfigure}\hfill
  \begin{subfigure}[b]{.32\textwidth}
    \centering
    \includegraphics[page=17]{ottifanten2}
    \subcaption{}  
    \label{fig:ottifant:10}  
  \end{subfigure}\hfill}
  \caption{Insertion of (a)~a gd-$2$ vertex~$x$; (b)~a gd-$1$ vertex~$y$; (c)~a gd-$2$ vertex~$y$.\label{fig:ottifant-steps:1}}
  \end{figure}

  \subparagraph{Case 3: $\mathrm{gd}(f)=1$.} Then only one of the three child faces of~$f$ is active. If~$uvx$ is the active child face, then we use the same drawing as for a gd-$0$ vertex (see \cref{fig:ottifant:3}) and all invariants hold. However, if one of the other child faces is active, then we cannot use this drawing because~$xw$ is not transformable and~$xvw$ is not ottifant-shaped. 

  So we also consider the face vertex~$y$ of the unique child face~$f'$ of~$x$ and insert both~$x$ and~$y$ into the drawing together. We consider two subcases, according to~$f'$.

  \subparagraph{Case 3A: $f'=uxw$.} If~$\mathrm{gd}(f')=0$, then we can once again use the drawing for a gd-$0$ vertex (see \cref{fig:ottifant:3}) because~$f'$ is ottifant-shaped and none of its child faces are active. 
  
  If~$\mathrm{gd}(f')=1$, then we add first~$x$ as described for a gd-$2$ vertex above (see \cref{fig:ottifant:4}). Then we add~$y$ into~$f'$ and draw all incident edges as proper arcs; the edge~$yx$ can be drawn either as a mountain (if~$uxy$ is the active child face of~$f'$) or as a pocket (otherwise); see \cref{fig:ottifant:9}. In either case, invariants \ref{inv:otti:1}--\ref{inv:otti:2} hold. We added one biarc ($xv$). To establish \ref{inv:otti:3}--\ref{inv:otti:4} we set~$\mathrm{ch}(x)=\mathrm{ch}(y)=\frac12<\frac34$.
  
  Otherwise, we have~$\mathrm{gd}(f')\ge 2$. We first add~$x$ as described above for a gd-$0$ vertex and then~$y$ as a gd-$2$ vertex; see \cref{fig:ottifant:10}. Invariant \ref{inv:otti:1} holds. To establish \ref{inv:otti:2} we have to make the bellies~$xw$ and $uy$ of~$yxw$ and $uyx$, respectively, transformable. To this end, we put~$1/2$ units of charge aside so that both~$xv$ and~$xw$ could be redrawn as biarcs for~$3/2$ units of charge, as required. Moreover, we observe that $uy$ can be transformed into a biarc for $1$ units of charge if necessary as there is no other edge that must be transformed in this scenario. We also added a biarc, namely, $yx$. To establish \ref{inv:otti:3}--\ref{inv:otti:4} we set~$\mathrm{ch}(x)=\mathrm{ch}(y)=\frac34$.

   \subparagraph{Case 3B: $f'=xvw$.} We consider several subcases according to~$\mathrm{gd}(f')$. 
  If~$\mathrm{gd}(f')=0$, we first insert~$x$ as described above for a gd-$2$ vertex and then~$y$ as a gd-$0$ vertex; see \cref{fig:ottifant:5}. Invariants \ref{inv:otti:1}--\ref{inv:otti:2} hold trivially. We used one biarc ($xv$). To establish \ref{inv:otti:3}--\ref{inv:otti:4}, we set~$\mathrm{ch}(x)=\frac34$ and increase~$\mathrm{ch}(y)$ by~$\frac14$. The latter is allowed because~$x$ is the parent of~$y$. 

  \begin{figure}[htbp]
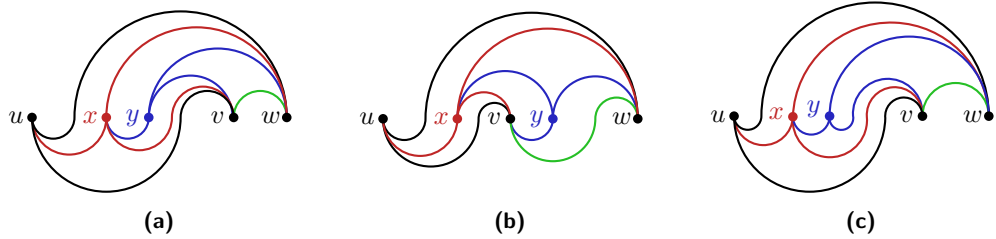

    \centering {\captionsetup{singlelinecheck=true}
     \begin{subfigure}[b]{.32\textwidth}
    \centering
    \includegraphics[page=14]{ottifanten2}
    \subcaption{}  
    \label{fig:ottifant:5}  
  \end{subfigure}\hfill
  \begin{subfigure}[b]{.32\textwidth}
    \centering
    \includegraphics[page=13]{ottifanten2}
    \subcaption{}  
    \label{fig:ottifant:6}  
  \end{subfigure}\hfill
  \begin{subfigure}[b]{.32\textwidth}
    \centering
    \includegraphics[page=18]{ottifanten2}
    \subcaption{}  
    \label{fig:ottifant:7}  
  \end{subfigure}\hfill}
  \caption{Insertion of (a)~a gd-$2$ vertex~$x$; (b)~a gd-$1$ vertex~$y$; (c)~a gd-$2$ vertex~$y$.\label{fig:ottifant-steps:2}}
  \end{figure}

  We use the same drawing if~$\mathrm{gd}(f')=1$ and the (only) active child face of~$f'$ is~$xvy$ or~$xyw$. If~$xvy$ is active, then we set~$\mathrm{ch}(x)=\mathrm{ch}(y)=\frac12<\frac34$ to  establish \ref{inv:otti:3}--\ref{inv:otti:4}. If~$xyw$ is active, then we put~$1/2$ units of charge aside to make~$yw$ transformable and establish \ref{inv:otti:2}. Then we set~$\mathrm{ch}(x)=\mathrm{ch}(y)=\frac34$ to establish \ref{inv:otti:3}--\ref{inv:otti:4}. 

  If~$\mathrm{gd}(f')=1$, then it remains to consider the case that the (only) active child face of~$f'$ is~$yvw$. We transform~$vw$ into a biarc, then  insert~$x$ between~$u$ and~$v$, and finally insert~$y$ between~$v$ and~$w$ on the spine inside~$f$. All edges incident to~$x$ and~$y$ are drawn as proper arcs; see \cref{fig:ottifant:6}. The only active (grand)child face of~$f$ is~$yvw$, and \ref{inv:otti:1}--\ref{inv:otti:2} hold. We have spent~$3/2$ units of charge to transform~$vw$, and we did not create any biarc. Thus, it suffices to set~$\mathrm{ch}(x)=\mathrm{ch}(y)=\frac34$ to establish \ref{inv:otti:3}--\ref{inv:otti:4}. 

  If~$\mathrm{gd}(f')\ge 2$, then we first insert~$x$ between~$u$ and~$v$ and then~$y$ between~$x$ and~$v$ on the spine inside~$f$. Then we draw~$xv$ and~$yv$ as biarcs and the remaining edges as proper arcs such that~$xy$ is a pocket; see \cref{fig:ottifant:7}. Invariants \ref{inv:otti:1}--\ref{inv:otti:2} hold. We created two biarcs ($xv$ and~$yv$). To establish \ref{inv:otti:3}--\ref{inv:otti:4}, we set~$\mathrm{ch}(x)=\mathrm{ch}(y)=\frac34$ and we increase the charge of a vertex in~$p^{-1}(y)$ by~$1/2$.
\else 
  Throughout the algorithm, we maintain the following invariants:
    \begin{enumerate}[label=(O\arabic*),left=\labelsep]
        \item\label{inv:otti:1} Each internal active face is ottifant-shaped.
        \item\label{inv:otti:2} The belly of each active face is transformable or drawn as a biarc.
        \item\label{inv:otti:3} The sum of charges of the drawn vertices is equal to the number of biarcs drawn.
        \item\label{inv:otti:4} For each vertex $v$ we have $\mathrm{ch}(v) \leq \frac{3}{4}$.
    \end{enumerate}

    \noindent It is easy to see that a drawing $D$ of $G$ has at most $\lfloor \frac{3}{4}n \rfloor$ biarcs if the invariants hold for $D$. 

\begin{figure}[htbp]
  \begin{minipage}[b]{.32\textwidth}
    \centering
    \includegraphics[page=2]{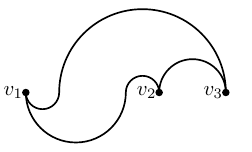}
    \subcaption{}  
    \label{fig:ottifant:1}  
  \end{minipage}\hfil
  \begin{minipage}[b]{.32\textwidth}
    \centering
    \includegraphics[page=1]{figures/ottifanten.pdf}
    \subcaption{}  
    \label{fig:ottifant:2}  
  \end{minipage}
    \hfil
     \begin{minipage}[b]{.32\textwidth}
    \centering
    \includegraphics[page=4]{figures/ottifanten.pdf}
    \subcaption{}  
    \label{fig:ottifant:3}  
    \end{minipage}
    \caption{(a)~Schematic representation of an ottifant-shaped face $uvw$ 
    , where the long edge is on the top page (green edges are transformable). (b)~Drawing of the initial face $v_1v_2v_3$. (c)~Insertion of $x=\mathrm{v}(uvw)$ with $\textrm{gd}(uvw)=0$.}
    \label{fig:ottifant}
\end{figure}

    \subparagraph{Initial Drawing of $G_3$.} Here, we simply draw $v_1v_2v_3$ with $v_1 \prec v_2 \prec v_3$ where $v_1$ and $v_2$ are down-up biarcs and $v_3$ is a mountain; see Fig.~\ref{fig:ottifant:2}. Clearly, $v_1v_2v_3$ is ottifant-shaped \ref{inv:otti:1} and its belly $v_2v_3$ is transformable \ref{inv:otti:2}.
    As we create two biarcs, we have to assign a total charge of $2$ to $v_1$, $v_2$, $v_3$ \ref{inv:otti:3} which is achieved by $ch(v_1)=ch(v_2)=ch(v_3)=\frac{2}{3} < \frac{3}{4}$ \ref{inv:otti:4}.

    \subparagraph{Iterative step.} Here, we select an arbitrary active face $uvw$ which is necessarily ottifant-shaped (unless it is the outer face in which case it is heart-shaped) and insert its face vertex $x:= \mathrm{v}(uvw)$ in the interior of $uvw$. In the process, we may also transform its belly into a down-up biarc. Note that insertion into the interior is always possible by \cref{lem:3treeOnlyInternal}, i.e., we will have that $v_1v_3v_2$ is the outer face. We assume w.l.o.g. that $uw$ is the top boundary of $uvw$ and  must maintain that \ref{inv:otti:1} and \ref{inv:otti:2} hold for each newly created internal child face. Finally, in each step we have to assign charges so to establish \ref{inv:otti:3} and \ref{inv:otti:4}. We make a case distinction based on  $\mathrm{gd}(uvw)$ and on whether $uvw$ is an internal face.

    \subparagraph{Case 1: $\mathrm{gd}(uvw)=0$.} Here, we use the fact that the new child faces are inactive, i.e., we do not have to establish \ref{inv:otti:1} and \ref{inv:otti:2}. Instead, if $uvw$ is an internal face, we insert $x$ between the crossing of $uw$ with the spine and the crossing of $uv$ with the spine. We also draw $ux$ as a pocket and $xv$ and $xw$ as mountains; see \cref{fig:ottifant:3}. 
    Since we do not create biarcs, we do not have to assign additional charges so to establish \ref{inv:otti:3} and \ref{inv:otti:4}.

    Instead, we will make use of the possibility of assigning charges to $x$ as follows:
    \begin{claim}
    \label{clm:otti}
    Each face $f$ with $\mathrm{gd}(f) \geq 2$ can be assigned a face vertex $x_f\neq v_f$ that only induces inactive faces. 
    This assignment is such that $x_f \neq x_{f'}$ for $f \neq f'$. Moreover, $x_f$ is not assigned any charge when it is inserted as a face vertex.
    \end{claim}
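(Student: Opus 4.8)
Proof proposal for Claim~\ref{clm:otti}.

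The plan is to reduce the claim to a matching statement about the tree $\mathcal{T}'$ from the proof of \cref{lem:inactivenodes}, i.e.\ the tree obtained from $\mathcal{T}$ by deleting all leaves. Recall that the internal nodes of $\mathcal{T}'$ are exactly the active faces $f$ with $\mathrm{gd}(f)\ge 1$, the leaves of $\mathcal{T}'$ are exactly the active faces with $\mathrm{gd}(f)=0$ (equivalently, the faces whose face vertex induces only inactive faces, i.e.\ the gd-$0$ vertices), and the number of children of a node $f$ in $\mathcal{T}'$ equals $\mathrm{gd}(f)$. So I want an injection $f\mapsto\ell_f$ from the set $H$ of nodes of $\mathcal{T}'$ with at least two children (the gd-$\ge 2$ faces) into the set $L$ of leaves of $\mathcal{T}'$; taking $x_f$ to be the face vertex of $\ell_f$ then settles everything. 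Indeed, $x_f$ is then a gd-$0$ vertex, hence handled by Case~1, which assigns it no charge at insertion; distinct faces have distinct face vertices, so the assignment is injective; and $x_f\ne v_f$ since $\mathrm{gd}(f)\ge 2>0$.

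For the bare claim such an injection exists by cardinality alone: \cref{lem:inactivenodes} gives $|L|=n_{\mathrm{inact}}\ge 2f_3+f_2\ge f_2+f_3=|H|$. The charging argument wants a little more, namely that $\ell_f$ lies in the subtree $\mathcal{T}'(f)$ rooted at $f$ (so that $x_f$ is a gd-$0$ vertex below $f$), and I would obtain this via Hall's theorem applied to the bipartite graph on $H\cup L$ that joins $f\in H$ to each leaf of $\mathcal{T}'(f)$. The estimate driving Hall's condition is the standard leaf-counting identity: in any rooted tree, the number of leaves of the subtree rooted at a node $v$ equals $1+\sum_u(\deg^{+}(u)-1)$, the sum taken over the internal nodes $u$ of that subtree. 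Since each summand is nonnegative and is $\ge 1$ exactly when $u$ has at least two children, the subtree $\mathcal{T}'(v)$ contains at least $1+|H\cap\mathcal{T}'(v)|$ leaves.

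To check Hall's condition, given $\emptyset\ne S\subseteq H$ let $M\subseteq S$ be the set of ancestor-maximal elements of $S$ (those with no proper ancestor in $S$). The subtrees $\mathcal{T}'(m)$, $m\in M$, are pairwise disjoint and together contain all of $S$, and the neighbourhood of $S$ is the disjoint union over $m\in M$ of the leaf sets of the $\mathcal{T}'(m)$; by the estimate above its size is at least $\sum_{m\in M}\bigl(1+|S\cap\mathcal{T}'(m)|\bigr)=|M|+|S|\ge 1+|S|$. So Hall's condition holds with room to spare, a matching saturating $H$ exists, and reading off $x_f$ from it finishes the proof.

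I expect the only genuine work to be the bookkeeping that identifies the three descriptions ``$\mathrm{gd}(f)=0$'', ``leaf of $\mathcal{T}'$'', and ``face vertex inducing only inactive faces'', plus the routine observation that the gd-$0$ insertion rule (Case~1) never assigns charge at insertion time; the Hall argument itself is mechanical. The one place to be careful is if some gd-$0$ vertices are forced to receive charge at their own insertion (for instance a gd-$0$ face vertex inserted jointly with its parent in a two-vertex step), in which case one excludes those vertices from $L$ and reruns the Hall estimate — the slack $|L|-|H|\ge f_3$ from \cref{lem:inactivenodes}, and a slightly sharper count of $\mathcal{T}'$'s leaves if needed, leave enough room for this.
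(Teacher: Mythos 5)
The paper disposes of this claim in one line: it is a ``direct consequence of \cref{lem:inactivenodes} and the previous analysis,'' i.e., the cardinality inequality $n_{\mathrm{inact}}\ge 2f_3+f_2\ge f_2+f_3$ from \cref{lem:inactivenodes} yields the injection, and the Case~1 description shows that gd-$0$ vertices receive no charge at insertion. Your proof contains this argument (your ``bare claim by cardinality alone'' paragraph) and then adds a genuinely different layer: via Hall's theorem you additionally arrange that each $x_f$ is the face vertex of a leaf in the subtree of $\mathcal{T}'$ rooted at $f$. That descent condition is correct and your Hall verification (ancestor-maximal elements $M$, disjoint subtrees, the $1+\sum(\deg^+-1)$ leaf count) is sound, but it is not something the claim as stated asks for, nor is it what the paper's proof relies on. So your route is a strict strengthening proved by a heavier tool; what it buys is a cleaner ``charge is banked for a not-yet-drawn descendant'' interpretation of the scheme, at the cost of complexity the authors evidently wanted to avoid.

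One caution on your closing caveat. You are right that the scheme co-inserts some gd-$0$ vertices with a gd-$1$ parent and then charges them directly, so not every gd-$0$ vertex is available as an $x_f$. But the remedy you sketch---exclude those from $L$ and rerun Hall using the slack $|L|-|H|\ge f_3$---does not go through as written: the number of gd-$0$ vertices co-inserted with a gd-$1$ parent is bounded by (a fraction of) $f_1$, not by $f_3$, and $f_3\ge f_1$ need not hold. To actually carry this out one would have to refine the leaf count of the subtrees (tracking gd-$3$ nodes separately, since each contributes $2$ to the excess), or observe that only one of the three orientation subcases of Case~3 with $\mathrm{gd}(f_a)=0$ charges $y$ at all. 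As it stands, this step is a gap in your argument, though it is a gap the paper's one-liner does not even acknowledge.
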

    \begin{claimproof}
        The statement is a direct consequence of \cref{lem:inactivenodes} and the previous analysis.
    \end{claimproof}

      \subparagraph{Case 2:  $\mathrm{gd}(uvw)\geq 2$.}
    We insert $x$ at the same position as in Case $1$. Again, we realize $ux$ as a pocket and $xw$ as a mountain, however, $xv$ is drawn as a biarc in this case; see \cref{fig:ottifant:4}. 
    It is straight-forward to verify that all created internal child faces are ottifant-shaped \ref{inv:otti:1}. Moreover, both new bellies $ux$ and $xw$ are clearly transformable whereas the remaining belly $vw$ was already transformable before \ref{inv:otti:2}. It remains to assign at least $1$ charge for the creation of $1$ biarc to some vertices \ref{inv:otti:3}. More precisely, we set $ch(x)=ch(x_{uvw})=\frac{1}{2} <\frac{3}{4}$ 
    \ref{inv:otti:4}. Note that these assignments are possible according to \cref{clm:otti}. We can summarize our result as follows:

    \begin{claim}\label{clm:otti3}
        If  $\mathrm{gd}(uvw) \geq 2$, $x$ can be inserted while maintaining \ref{inv:otti:1} to \ref{inv:otti:4} by setting  $ch(x)=ch(x_{uvw})=\frac{1}{2}$. 
    \end{claim}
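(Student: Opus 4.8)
The plan is to observe that \cref{clm:otti3} merely repackages the construction already carried out in Case~2 above, so the entire task is to re-check invariants \ref{inv:otti:1}--\ref{inv:otti:4} for that drawing. First I would recall the insertion: place $x$ at the spine position used for a gd-$0$ vertex (between the spine crossing of $uw$ and the spine crossing of $uv$), draw $ux$ as a pocket and $xw$ as a mountain, and --- the only deviation from the gd-$0$ case --- draw $xv$ as a down-up biarc; see \cref{fig:ottifant:4}. Since $uvw$ is active it is an internal face, so this position lies inside $uvw$ as required.

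Next I would verify \ref{inv:otti:1} and \ref{inv:otti:2}. For \ref{inv:otti:1}, one reads off from the drawing that each child face among $uvx$, $vwx$, $wux$ that is internal and active contains a region bounded by a down-up biarc as its long top edge, a down-up biarc as its short edge, and a mountain as its belly, i.e.\ it is ottifant-shaped. For \ref{inv:otti:2}, the two bellies created in this step, $ux$ and $xw$, are transformable: each is a single proper arc that can be redrawn as a down-up biarc for one credit without forcing a neighbouring edge to change, well within the reserve of $3/2$; the remaining belly $vw$ is unchanged and was already transformable before this step by \ref{inv:otti:2}.

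Finally I would handle the charging. Exactly one new biarc, namely $xv$, is created, so by \ref{inv:otti:3} the total charge must increase by $1$ while staying at most $3/4$ per vertex by \ref{inv:otti:4}. I would set $\mathrm{ch}(x)=\tfrac{1}{2}$ and, applying \cref{clm:otti} to the face $f=uvw$ (which has $\mathrm{gd}(f)\ge 2$), obtain a distinct face vertex $x_{uvw}\neq x$ that induces only inactive faces and carries no charge from its own insertion, and set $\mathrm{ch}(x_{uvw})=\tfrac{1}{2}$; both values are at most $3/4$ and together they pay for the new biarc. The step I expect to require the most care is exactly this last bit of bookkeeping: $x_{uvw}$ is the face vertex of a descendant face of $f$, hence it is inserted only later in the construction sequence, so one has to check that committing half a credit to it now is consistent. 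It is, because \cref{clm:otti} makes the assignment $f\mapsto x_{uvw}$ injective, so each gd-$0$ vertex absorbs this half credit from at most one Case~2 step, and inserting a gd-$0$ vertex itself (Case~1) neither creates a biarc nor spends charge, so the reserved half credit still fits under the $3/4$ cap once $x_{uvw}$ is drawn. The geometric verification of \ref{inv:otti:1}--\ref{inv:otti:2} is otherwise routine given \cref{fig:ottifant:4}.
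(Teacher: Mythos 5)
Your proposal reproduces the paper's own Case~2 argument essentially verbatim: same drawing ($ux$ pocket, $xw$ mountain, $xv$ down-up biarc at the gd-$0$ position), same invariant checks for \ref{inv:otti:1}--\ref{inv:otti:2}, and the same charge assignment $\mathrm{ch}(x)=\mathrm{ch}(x_{uvw})=\tfrac12$ justified via \cref{clm:otti}. The extra paragraph on why committing half a credit to a not-yet-drawn $x_{uvw}$ is consistent is a sound elaboration of what the paper compresses into ``these assignments are possible according to \cref{clm:otti},'' so this is the same route, just spelled out a bit more.
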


    \subparagraph{Case 3: $\mathrm{gd(uvw)=1}$.} Here, only one of the child faces has to be such that \ref{inv:otti:1} and \ref{inv:otti:2} are maintained. If $uxv$ is the active child face, we can actually use the same drawing as in the case for $\mathrm{gd}(uvw)=0$ (see \cref{fig:ottifant:3}), obtaining a valid drawing \ref{inv:otti:1}\ref{inv:otti:2} without assigning additional charges \ref{inv:otti:3}\ref{inv:otti:4}. However, if one of the other child faces is active, this drawing is not usable as $xw$ is not transformable and $xvw$ is not ottifant-shaped. Since the drawing in \cref{fig:ottifant:3} is the only one not requiring an additional biarc, in this scenario, we will also insert vertex $y$ which is the face vertex of the active child face $f_a$ at the same time as $x$ to be able to charge both $x$ and $y$. If $\mathrm{gd}(f_a) \geq 2$, we can actually use the drawing we used for $\mathrm{gd}(uvw) \geq 2$; see \cref{fig:ottifant:4} and treat $y$ according to the analysis in Case 2. This procedure guarantees Invariants~\ref{inv:otti:1} and~\ref{inv:otti:2}. In order to also satisfy \ref{inv:otti:3}, we have to pay $1$ additional credit for the biarc $xv$. Note that by \cref{clm:otti3}, we already have $ch(y)=ch(x_{f_a})=\frac{1}{2}$. Here, we set $ch(x)=\frac{2}{3} \leq \frac{3}{4}$ and increase the charges of $y$ and $x_{f_a}$ by $\frac{1}{6}$ (for a total charge increase of $1$) to $\frac{1}{2} + \frac{1}{6} = \frac{2}{3} \leq \frac{3}{4}$ \ref{inv:otti:4}. We can incorporate this notion into a refinement of \cref{clm:otti3}.

    \begin{claim}\label{clm:otti3:refine}
    If $\mathrm{gd}(uvw) \geq 2$, $x$ and $x_{uvw}$ may be charged by the parent face $f_p$ if $uvw$ in $\mathcal{T}$ with face vertex $x_p$ if $\mathrm{gd}(f_p)=1$. In this scenario, $x_p$ and $x$ can be inserted while maintaining \ref{inv:otti:1} to \ref{inv:otti:4} by setting $ch(x)=ch(x_{uvw})=ch(x_p)=\frac{2}{3}$.
    \end{claim}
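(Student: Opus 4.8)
The plan is to recognize that Claim~\ref{clm:otti3:refine} is a bookkeeping repackaging of the sub-case already treated within Case~3: when $\mathrm{gd}(uvw)\ge 2$ and the parent $f_p$ of $uvw$ in $\mathcal{T}$ satisfies $\mathrm{gd}(f_p)=1$, the face $uvw$ is the unique active child face of $f_p$, so it plays the role of the ``$f_a$'' of that discussion, with $x=\mathrm{v}(uvw)$ in the role of ``$y$'', $x_p=\mathrm{v}(f_p)$ in the role of the gd-$1$ vertex, and $x_{uvw}$ in the role of ``$x_{f_a}$''. So I would first record this dictionary, note that $uvw$ is an internal face and hence $f_p$ is ottifant-shaped by~\ref{inv:otti:1} (or by the initial drawing, if $f_p$ is the root $v_1v_2v_3$), and recall from Claim~\ref{clm:otti} that the spare vertex $x_{uvw}\neq x=\mathrm{v}(uvw)$ exists and is left uncharged upon insertion.

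Next I would spell out the two nested insertions. First insert $x_p$ into $f_p$ using the Case~2 drawing of \cref{fig:ottifant:4} (possibly up to the rotation/relabeling permitted for ottifant-shaped faces, so that the active child $uvw$ receives the ottifant structure); this is legitimate precisely because $f_p$ has a single active child, it creates exactly one biarc (incident to $x_p$), and --- as established in Case~2 --- it leaves every internal child face ottifant-shaped with a transformable or biarc belly, in particular restoring \ref{inv:otti:1} and \ref{inv:otti:2} at $uvw$. Since now $\mathrm{gd}(uvw)\ge 2$, insert $x=\mathrm{v}(uvw)$ into $uvw$ exactly as in Case~2; by Claim~\ref{clm:otti3} this maintains \ref{inv:otti:1} and \ref{inv:otti:2} for the children of $uvw$, uses the spare vertex $x_{uvw}$, creates one further biarc, and provisionally assigns $ch(x)=ch(x_{uvw})=\tfrac12$.

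Finally I would close the charge accounting. Two biarcs were created; one is already paid for by the $\tfrac12+\tfrac12$ above, and for the biarc incident to $x_p$ I set $ch(x_p)=\tfrac23$ and raise each of $ch(x)$ and $ch(x_{uvw})$ by $\tfrac16$, so the additional charge sums to $\tfrac23+\tfrac16+\tfrac16=1$, restoring \ref{inv:otti:3}; every affected vertex then carries $\tfrac23\le\tfrac34$ and no other charge changes, so \ref{inv:otti:4} survives, yielding $ch(x)=ch(x_{uvw})=ch(x_p)=\tfrac23$ as claimed. The one step needing genuine care is verifying that the two successive Case~2 insertions are geometrically legal and that together they create no active face that fails to be ottifant-shaped with a transformable (or biarc) belly --- the same figure-level check that underlies Cases~2 and~3 --- while the charge arithmetic is immediate.
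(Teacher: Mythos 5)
Your proposal is correct and follows the paper's own route: you read the claim as the sub-case of Case~3 with $\mathrm{gd}(f_a)\ge 2$, you identify the translation of names ($x\leftrightarrow y$, $x_p\leftrightarrow x$, $x_{uvw}\leftrightarrow x_{f_a}$), perform the Case~2 insertion for $x_p$ into $f_p$ and then for $x$ into $uvw$ (invoking \cref{clm:otti3}), and settle \ref{inv:otti:3}--\ref{inv:otti:4} with the same $\tfrac23+\tfrac16+\tfrac16=1$ redistribution used in the text. This matches the paper's argument.
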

    

 \begin{figure}[htbp]
    \centering 
     \begin{minipage}[b]{.32\textwidth}
    \centering
    \includegraphics[page=3]{figures/ottifanten.pdf}
    \subcaption{}  
    \label{fig:ottifant:4}  
  \end{minipage}\hfil
  \begin{minipage}[b]{.32\textwidth}
    \centering
    \includegraphics[page=12]{figures/ottifanten.pdf}
    \subcaption{}  
    \label{fig:ottifant:9}  
  \end{minipage}\hfil
  \begin{minipage}[b]{.32\textwidth}
    \centering
    \includegraphics[page=13]{figures/ottifanten.pdf}
    \subcaption{}  
    \label{fig:ottifant:10}  
  \end{minipage}

    \begin{minipage}[b]{.32\textwidth}
    \centering
    \includegraphics[page=14]{figures/ottifanten.pdf}
    \subcaption{}  
    \label{fig:ottifant:11}  
  \end{minipage}
   \hfil
    \begin{minipage}[b]{.32\textwidth}
    \centering
    \includegraphics[page=15]{figures/ottifanten.pdf}
    \subcaption{}  
    \label{fig:ottifant:11.5}  
  \end{minipage}
   \hfil
    \begin{minipage}[b]{.32\textwidth}
    \centering
    \includegraphics[page=16]{figures/ottifanten.pdf}
    \subcaption{}  
    \label{fig:ottifant:12}  
  \end{minipage}
    \caption{(a)~Insertion of $x$ with $\mathrm{gd}(uvw)\geq 2$. (b)~Insertion of $x$ with $\mathrm{gd}(uvw)\geq 1$ where $uwx$ is the active child face. (c), (d) and (e)--(f)~Insertion of $x$ with $\mathrm{gd}(uvw) = 1$  where $xvw$ is the active child face with face vertex $y$ and $yvw$, $yxv$ and $yxw$ are the active child face of $yvw$, respectively.}
    \label{fig:ottifant-steps}
  \end{figure}

    It remains to consider the cases where the active child face $f_a$ is either $xuw$ or $xvw$ and $\mathrm{gd}(f_a) \leq 1$. First assume that $f_a=xuw$. In this scenario, if $\mathrm{gd}(f_a)=1$, we still use the drawing used for $\mathrm{gd}(uvw) \geq 2$, insert $y$ between the spine crossing of $uw$ and $x$ and realize $uy$ as a pocket and $uw$ as a mountain; see \cref{fig:ottifant:9}. This makes $xw$ transformable. We now select the shape of $xy$ to be a mountain if $uxy$ is the active child face of $f_a$ and as a pocket, otherwise. As a result, the child face of $f_a$ will be ottifant-shaped \ref{inv:otti:1} and its belly will be transformable \ref{inv:otti:2}. Since we create only one biarc $xv$, we have to increase the total charge by $1$ to guarantee \ref{inv:otti:3}. To this end, we set $ch(x)=ch(y)=\frac{1}{2} < \frac{3}{4}$ \ref{inv:otti:4}. If $\mathrm{gd}(f_a)=0$, we can instead realize $xv$ as a mountain to obtain a valid drawing as here $xw$ does not need to be transformable \ref{inv:otti:1}\ref{inv:otti:2}. Then, we do not have to assign any additional charge as no biarc was created \ref{inv:otti:3}\ref{inv:otti:4}. Note that this may be required, as vertex $y$ may already by charged according to \cref{clm:otti3} or \cref{clm:otti3:refine}.
    We can summarize this subcase as follows:

    \begin{claim}\label{clm:otti1}
      Let $uvw$ be an internal face with $\mathrm{gd}(uvw)=1$ and let $uwx$ be the active child face with face vertex $y$. If $\mathrm{gd}(uwx)=1$, $x$ and $y$   can be inserted while maintaining \ref{inv:otti:1} to \ref{inv:otti:4} by setting  $ch(x)=ch(y)=\frac{1}{2}$.
      If $\mathrm{gd}(uwx)=0$, $x$ and $y$   can be inserted while maintaining \ref{inv:otti:1} to \ref{inv:otti:4} by setting  $ch(x)=0$ and not altering $ch(y)$.
    \end{claim}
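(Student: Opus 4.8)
The plan is to observe that \cref{clm:otti1} simply records the two subcases analyzed immediately above for the active child face $f_a=uwx$, so the proof is a short verification of the invariants in each case rather than a new construction. I would split on $\mathrm{gd}(uwx)$.

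Suppose first that $\mathrm{gd}(uwx)=1$. I would insert $x$ exactly as in the case $\mathrm{gd}(uvw)\ge 2$ (\cref{clm:otti3}): draw $ux$ as a pocket, $xw$ as a mountain, and $xv$ as a down-up biarc, so that $uwx$ becomes ottifant-shaped with $uw$ on the top page and with transformable belly $xw$. Then I would insert the face vertex $y$ of $uwx$, drawing $uy$ as a pocket and $uw$ as a mountain (as in \cref{fig:ottifant:9}), and choosing $xy$ to be a mountain if $uxy$ is the active grandchild face of $uwx$ and a pocket otherwise. The verification then reduces to checking that, for each possible choice of the active grandchild face, that face is ottifant-shaped \ref{inv:otti:1} with a transformable belly \ref{inv:otti:2}, and that $xv$ is the only biarc created; since we add exactly one biarc, \ref{inv:otti:3} and \ref{inv:otti:4} are re-established by setting $ch(x)=ch(y)=\tfrac{1}{2}<\tfrac{3}{4}$.

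Suppose next that $\mathrm{gd}(uwx)=0$. Now every child face of $uwx$ is inactive, so $uwx$ need not remain ottifant-shaped and $xw$ need not be transformable; I would therefore keep $ux$ a pocket and $xw$ a mountain but draw $xv$ as a mountain rather than a biarc, exactly as in the $\mathrm{gd}(uvw)=0$ insertion. No biarc is created, so \ref{inv:otti:1}--\ref{inv:otti:2} hold vacuously for the new inactive faces and \ref{inv:otti:3}--\ref{inv:otti:4} persist with $ch(x)=0$ and $ch(y)$ unchanged. The point here that needs care is precisely that $ch(y)$ must be left untouched: $y$ is a gd-$0$ vertex whose charge budget is reserved for a later assignment through \cref{clm:otti3} or \cref{clm:otti3:refine}, so we cannot afford to spend any of it now.

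I do not expect a genuine obstacle. The only slightly delicate part is the bookkeeping in the first subcase --- confirming the ottifant shape and belly-transformability of whichever grandchild face of $uwx$ is active, uniformly over the (at most) three possibilities --- together with the sanity check in the second subcase that no hidden charge is added to $y$.
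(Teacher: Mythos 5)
Your proposal mirrors the paper's own argument in both subcases: for $\mathrm{gd}(uwx)=1$ you insert $x$ via the $\mathrm{gd}\ge 2$ rule ($xv$ a biarc) and then $y$ with $uy$ a pocket and $xy$ chosen mountain/pocket according to which grandchild is active, paying one biarc with $\mathrm{ch}(x)=\mathrm{ch}(y)=\tfrac12$; for $\mathrm{gd}(uwx)=0$ you draw $xv$ as a mountain, create no biarc, and crucially leave $\mathrm{ch}(y)$ untouched, exactly matching the paper's remark that $y$ may already be (or later be) charged via \cref{clm:otti3} or \cref{clm:otti3:refine}.
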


    It now only remains to consider the case where $f_a=vwx$. First, consider the scenario where $\mathrm{gd}(f_a)=0$. Here, we transform $vw$ into a biarc if necessary and place $x$ between the spine crossing of $uw$ and $uv$ and $y$ between $v$ and the spine crossing of $vw$. Then, we realize $xu$ and $vy$ as pockets and edges $xw$, $xv$, $xy$, $yw$ as mountains; see \cref{fig:ottifant:10}.  Since no created face is active, we satisfy \ref{inv:otti:1} and \ref{inv:otti:2}. In order to satisfy \ref{inv:otti:3}, we must assign $1$ additional charge to vertices. Namely, we assign $ch(x)=\frac{3}{4}$ and hence still have to assign $\frac{1}{4}$ credits. If $ch(y)=0$ so far, we simply set $ch(y)=\frac{3}{4}$. Otherwise, $y$ has been assigned either $\frac{1}{2}$ charge according to \cref{clm:otti3} or $\frac{2}{3}$ charge according to \cref{clm:otti3:refine}. In the former case, we increase the charge of $y$ to $\frac{1}{2}+\frac{1}{4}=\frac{3}{4}$. In the latter case, we find two additional vertices $y'$ and $y''$ which are already placed face vertices with $ch(y)=ch(y')=c(y'')=\frac{2}{3}$ according to \cref{clm:otti3:refine}. Then, we can increase the charge of $y$, $y'$ and $y''$ by $\frac{1}{12}$ each to $\frac{2}{3}+ \frac{1}{12}=\frac{9}{12}=\frac{3}{4}$. Thus, we can establish \ref{inv:otti:4} in either case. 
    
    Thus, in the following assume that  $\mathrm{gd}(f_a)=1$. We distinguish three cases based on the active child face $f_a'$ of $f_a$. In all cases we will position $x$ as in Case 2 and realize $xu$ as a pocket and $xw$ as a mountain. However, the position of $y$ and the realization of the remaining edges will be vastly different. If $f_a'=yvw$, we use the drawing described in the case where $\mathrm{gd}(f_a)=0$; see \cref{fig:ottifant:10}. In this scenario, $yvw$ is ottifant-shaped \ref{inv:otti:1} and $vy$ is transformable \ref{inv:otti:2}. Since we again create only one biarc $vw$, we have to increase the total charge by $1$ to guarantee \ref{inv:otti:3} which we can achieve with $ch(x)=ch(y)=\frac{1}{2} < \frac{3}{4}$ \ref{inv:otti:4}.
    
    Otherwise, we realize $xv$ as a biarc and place $y$ between $x$ and the spine crossing of $xv$. Moreover, we realize $xy$ as a pocket and $yw$ as a mountain; see \cref{fig:ottifant:11,fig:ottifant:12}. The remaining edge drawings depend on the active child face $f_a'$. If $f_a'=yxv$, we can  realize $yv$ as a mountain; see \cref{fig:ottifant:11}. Then, $yxv$ is ottifant-shaped \ref{inv:otti:1} and $xy$ is transformable \ref{inv:otti:2}. Once more, we create only one biarc $vw$ and increase the total charge by $1$ to guarantee \ref{inv:otti:3} by setting $ch(x)=ch(y)=\frac{1}{2} < \frac{3}{4}$ \ref{inv:otti:4}. Finally, if $f_a'=yxw$, we distinguish based on $\mathrm{gd}(yxw)$. Here, let $z$ denote the face vertex of $f_a'$. 
    
    If $\mathrm{gd}(yxw) = 1$, we can insert $y$ and $z$ according to the previous analysis (see  \cref{clm:otti1}). Then, we have guaranteed \ref{inv:otti:1} and \ref{inv:otti:2} but must increase the charge by one for biarc $vw$ \ref{inv:otti:3}. To this end, we set $ch(x)=\frac{2}{3}< \frac{3}{4}$. By \cref{clm:otti1}, we have either $ch(y)=ch(z)=\frac{1}{2}$. Thus, we can increase the charges of $y$ and $z$ by $\frac{1}{6}$ each to $\frac{2}{3} < \frac{3}{4}$ \ref{inv:otti:4}.

    If $\mathrm{gd}(yxw)=0$, we can insert $x$ and $y$ as we did in the case, where $yxv$ was the active child face and put $z$ between $x$ and $y$ realizing all edges incident to $z$ as mountains; see \cref{fig:ottifant:11.5}. Since the created child faces are inactive, we have  \ref{inv:otti:1} and \ref{inv:otti:2}. On the other hand, we create one additional biarc $xv$ and must increase the total charge by $1$ so to guarantee \ref{inv:otti:3}. To do so, we can set $ch(x)=ch(y)=\frac{1}{2} < \frac{3}{4}$ \ref{inv:otti:4}.
    
    Otherwise, $\mathrm{gd}(yxw)\geq 2$ and we realize $yv$ as a biarc; see \cref{fig:ottifant:12}. As a result $yxw$ is ottifant-shaped and $yw$ is transformable. Hence, we can insert the face vertex $z$ of $yxw$ according to the previously analyzed cases to guarantee \ref{inv:otti:1} and \ref{inv:otti:2}. In order to also guarantee \ref{inv:otti:3}, we have to assign two more charge as we have inserted two more biarcs. To this end, 
    we will set $ch(x)=ch(y)=\frac{3}{4}$, which leaves us with $\frac{1}{2}$ charge that we still have to assign elsewhere. By \cref{clm:otti3}, we have that $ch(z)=ch(x_{yxw}) = \frac{1}{2}$. In particular, note that we can use here the constraints given by \cref{clm:otti3} as \cref{clm:otti3:refine} assumes that we have already placed $y$ and $z$ previously. Thus, we can increase the charges of $z$ and $x_{yxw}$ by $\frac{1}{4}$ each to $\frac{3}{4}$ to yield \ref{inv:otti:4}.
    \medskip
    
\fi
%
  It follows that \ref{inv:otti:1}--\ref{inv:otti:4} hold after each step .
\end{proof}

\section{Conclusions}
\label{sec:conclusions}

We proved the first upper bound of the form~$c \cdot n$, with~$c<1$, for the number of monotone biarcs in arc diagrams of planar graphs.
In our analysis, only some cases require $\chi\le 1/5$, indicating a possibility to further refine the analysis to achieve an even better bound.
It remains open whether there exists a ``monotonicity penalty'' in this problem, but we ruled out the probably most prominent class of non-Hamiltonian maximal planar graphs, the Kleetopes, as candidates to exhibit such a phenomenon. It would be very interesting to close the gap between upper and lower bounds, both in the monotone and in the general settings.

\todosc{Side question (for another time): How fast can the optimization problem of biarc minimization be solved for planar graphs of bounded treewidth? E.g., can the approach given here be made optimal?}



\bibliographystyle{plainurl}
\bibliography{abbrv,arcs}



\appendix

\section{Proof of \cref{lem:extend}}

\lemextend*
\begin{proof}\label{Plemextend}  
  The~$\Rightarrow$ direction is a direct consequence of~\ref{co:3}. For the proof of the other implication, let~$v_1,\ldots,v_i$ be a canonical ordering for~$G_i$, let~$\mathcal{C}_i=\{c_e\colon e\in P_\circ(G_i)\}\ne\emptyset$, and let~$v$ be a minimal element of~$\mathcal{C}_i$ (w.r.t.~$\prec$). We claim that~$v_{i+1}:=v$ is eligible. To see this it suffices to show that~$v_1,\ldots,v_{i+1}$ is a canonical ordering for~$G_{i+1}$ with~$V\setminus V_{i+1}\subset F_\circ(G_{i+1})$. Then the claim and the lemma follow by induction on~$n-i$.

  \ref{co:2} trivially holds for all permutations of~$V$ that start with~$v_1,v_2$, where~$v_1v_2$ is an edge of~$C_\circ(G)$. 
  To prove \ref{co:1} and \ref{co:3} we use that~$v$ is a minimal element of~$\mathcal{C}_i$ and our assumption~$v\in V\setminus V_i\subset F_\circ(G_i)$.  Note that~$d_i(v)\ge 2$ (because of the edge~$e\in P_\circ(G_i)$ for which~$v=c_e$). Therefore, the region~$R_i(v)$ is bounded by a cycle of the plane graph~$G$ through~$v$ and~$P_\circ(G_i)$. We claim that~$R_i(v)\cap V=\emptyset$.

  Suppose to the contrary that there exists a vertex~$w\in R_i(v)\cap V$. Then~$w\notin V_i$ because~$G_i$ is biconnected (so~$C_\circ(G_i)$ is a cycle), $w\in F_\circ(G_i)$, and~$G$ is plane. Thus, while~$w\in F_\circ(G_i)$ by the assumption of the implication, $w$~lies in a bounded face~$f$ of~$G_{i+1}$. Then there exists an edge~$xy\in P_\circ(G_i)$ on the boundary~$\partial f$ of~$f$ in~$G_{i+1}$. But~$f$ is not a face of~$G$ because~$w\in f$. So we have~$z=c_{xy}\in V\setminus V_{i+1}$ which, as~$G$ is plane, implies~$R_i(z)\subset R_i(v)$, in contradiction to~$v$ being a minimal element of~$\mathcal{C}_i$. Therefore, there exists no such vertex~$w$ and~$R_i(v)\cap V=\emptyset$, as claimed.

  As~$G$ is plane, $G_{i+1}$ is an induced subgraph, and $R_i(v)\cap V=\emptyset$, all faces of~$G_{i+1}$ in~$R_i(v)$ are also bounded faces of~$G$. Thus, \ref{co:1} holds for~$v_1,\ldots,v_{i+1}$ because~$G$ is internally triangulated. The additional condition~$V\setminus V_{i+1}\subset F_\circ(G_{i+1})$ is implied by~$F_\circ(G_{i+1})=F_\circ(G_i)\setminus\mathrm{cl}(R_i(v_{i+1}))$ and~$R_i(v)\cap V=\emptyset$, where~$\mathrm{cl}(A)$ denotes the closure of~$A$.
\end{proof}

\section{Omitted proofs from \cref{sec:default}}

\lemdefaultApproachone*
\label{PlemdefaultApproach1}
\begin{proof}
  We place~$v_i$ into the rightmost pocket~$p_{\ell}p_r$ it covers and draw all
  edges incident to~$v_i$ as proper arcs. The path $p_{\ell}v_{i}p_r$ is drawn
  as two pockets, all other new edges are drawn as mountains; see
  \cref{fig:naive}. As the pocket~$p_{\ell}p_r$ is not on~$C_\circ(G_i)$, we can
  take and spend the~$\chi$ credits on it. 
  If~$d_i=2$, then we place~$\chi$
  credits on each of the two pockets incident to~$v_i$ so as to
  establish~\ref{i:pocketMoney}, for a cost of~$\chi\le 1-\chi$. It is easily
  checked that the invariants are maintained, which completes the proof in this
  case.

  \begin{figure}[htbp]
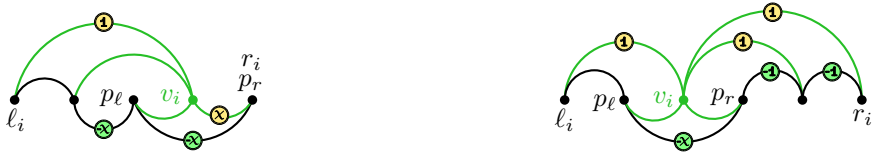

    \centering%
    \begin{minipage}[b]{.45\textwidth}
      \centering
      \includegraphics[page=3]{arcDiagramsFigures}
    \end{minipage}\hfill
    \begin{minipage}[b]{.45\textwidth}
      \centering
      \includegraphics[page=1]{arcDiagramsFigures}
    \end{minipage}
    \caption{Inserting a vertex $v_i$ into a pocket, using $1-\chi$ credits
      (\cref{lem:defaultApproach1}).\label{fig:naive}}
  \end{figure}

  It remains to consider the case $d_i\ge 3$. Here, we describe how to assign credits to the edges incident to $v_i$. First, consider edges $v_iu$ with $u \notin \{p_\ell,p_r\}$. Note that the edge~$v_iu$ is drawn as a mountain. First, assume that  $u$ lies
  to the right of~$p_r$ on~$P_\circ(G_{i-1})$, then~$v_i$ covers the edge~$e_u$
  of~$P_\circ(G_i)$ whose right endpoint is~$u$. By the choice of~$p_{\ell}p_r$
  (as the rightmost pocket covered by~$v_i$), the edge~$e_u$ is a mountain,
  which by \ref{i:mountainMoney} carries one credit. As~$e_u$ is not
  on~$C_\circ(G_i)$, we can transfer this credit to the edge~$v_iu$, so as to
  satisfy~\ref{i:mountainMoney} for~$v_iu$ since for two such edges $uv_i$ and $u'v_i$ with $u$ and $u'$ to the right of $p_r$ we have that $e_u \neq e_{u'}$. Second,  consider case where the vertex~$u$ lies to
  the left of~$p_\ell$ on~$P_\circ(G_{i-1})$. Note that the left endpoint
  of~$uv_i$ is not on~$C_\circ(G_i)$, unless~$u=\ell_i$. Therefore, it suffices
  to pay one credit in total and place it on~$\ell_iv_i$ to
  establish~\ref{i:mountainMoney} for the resulting diagram.

  As no biarc is created by the insertion of~$v_i$, the only remaining possible
  sources of costs are pockets of~$P_\circ(G_i)$ incident
  to~$v_i$. If~$\ell_i=p_\ell$, then there is such a pocket to the left
  of~$v_i$, and if~$r_i=p_r$, then there is such a pocket to the right of~$v_i$
  on~$P_\circ(G_i)$. As~$d_i\ge 3$, we face at most one of these pockets. To pay
  the~$\chi$ credits for this pocket (if it exists) to
  establish~\ref{i:pocketMoney} we can use the~$\chi$ credits from the
  pocket~$p_{\ell}p_r$, which is covered by~$v_i$.

  Overall, we pay at most one credit to insert~$v_i$, which proves the first statement
  of the lemma. To prove the second statement, we need to argue how to
  save~$\chi$ credits if~
  $\mathrm{pr}(v_i)\ne\;\frown\smile$. If there is no pocket incident to~$v_i$ in~$P_\circ(G_i)$, then we save
  the~$\chi$ credits that we accounted for such a pocket, which completes the
  proof in this case.  Thus, it remains to consider the two
  cases~$\ell_i=p_\ell$ and~$r_i=p_r$ only.

  If~$\ell_i=p_\ell$, then we save the one credit that we accounted for the
  mountain~$\ell_iv_i$ in the previous analysis as $\ell_iv_i$ is actually a pocket here. So, the overall costs are
  zero in this case. Otherwise, we have~$r_i=p_r$. If~$d_i=3$, then we
  have~$\mathrm{pr}(v_i)=\;\smile\smile$ as we explicitly exclude the profile $\frown\smile$. So~$v_i$ covers two pockets and we can
  take the~$\chi$ credits from both, whereas we spend only~$\chi$ credits on the
  pocket~$v_ip_r$. Thus, the overall costs are~$1-\chi$, which completes the
  proof in this case. The situation is similar in the remaining case~$d_i\ge 4$
  because~$v_i$ covers at least two edges of~$P_\circ(G_i)$ to the left
  of~$p_\ell$. In particular, at least one edge~$e$ to the left of~$p_\ell$ is
  covered by~$v_i$ such that the left endpoint of~$e$ is not
  on~$C_\circ(G_i)$. Either~$e$ is a mountain, in which case we can take the one
  credit it carries, or it is a pocket, and we can take the~$\chi$ credits it
  carries. Either way, we gain at least~$\chi$ credits, for overall costs of at
  most~$1-\chi$.
\end{proof}

\lemdefaultApproachtwo*\label{PlemdefaultApproach2}
\begin{proof}
  We push down the rightmost mountain~$mr_i$ in~$\mathrm{pr}(v_i)$ and
  place~$v_i$ above it. By \emph{push down} we mean that each mountain
  in~$G_{i-1}$ with left endpoint~$m$ (there is only one such mountain
  on~$P_\circ(G_{i-1})$, but there may be many more underneath) is transformed
  into a down-up biarc; see \cref{fig:mountains}. The costs for these biarcs, so
  as to maintain~\ref{i:biarcMoney}, are covered by the credits that each
  mountain whose left endpoint is on~$C_\circ(G_i)$ carries according to
  \ref{i:mountainMoney}.

  The insertion of~$v_i$ creates a new pocket and~$d_i-2$ new mountains. Out of
  these new edges only two mountains, namely~$\ell_iv_i$ and~$v_ir_i$, appear
  on~$P_\circ(G_i)$. Therefore, two credits suffice to establish the
  invariants. As~$v_i$ covers~$d_i-1$ mountains, it covers~$d_i-2$ left
  endpoints of mountains from~$P_\circ(G_{i-1})$. One of these mountains is
  pushed down, consuming the credit it carries. But the at least~$d_i-3$ credits
  on the remaining~$d_i-3$ mountains are now free to be used. Thus, the overall
  costs of inserting~$v_i$ as described are at most~$2-(d_i-3)=5-d_i$.
\end{proof}
 
\begin{figure}[htbp]
  \centering%
  \includegraphics[page=7]{arcDiagramsFigures}
  \caption{Inserting a vertex~$v_i$ into mountains, using $5-d_i$ credits
    (\cref{lem:defaultApproach2}).\label{fig:mountains}}
\end{figure}

\section{\cref{lem:main,lem:mainadapt} hold for~$i=n$}\label{sec:lemiisn}

A special case arises if~$v=v_n$ is the last vertex of the canonical
ordering. Then~$i=n$, $d_i\ge 3$, and~$v_n$ is the only vertex
in~$\mathcal{E}_{i-1}$. To complete the proof of \cref{lem:main} in this case,
we insert~$v$ as shown in \cref{fig:lastvertex} and observe that the insertion
costs are at most~$1+\chi$ in all cases. The extra costs of at most~$2\chi$
compared to the regular costs of~$1-\chi$ per vertex are taken care of by the~ 
$+\xi$ term in the costs bound of \cref{lem:main}. These are the only cases where we need~$\xi>0$; that is, we actually prove the following, stronger version of \cref{lem:main}.

\begin{lemma}\label{lem:mainstrong}
  \cref{lem:main} holds with~$\xi=0$ or~$D$ can be extended to an extensible arc
  diagram~$D'$ for~$G\setminus\{v_n\}$
  with~$\mathrm{cost}(D')\le\mathrm{cost}(D)+(n-i-1)(1-\chi)$ such that~$v_n$ is
  problematic for~$D'$.
\end{lemma}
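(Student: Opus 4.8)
The plan is to re-run, essentially unchanged, the inductive construction of \cref{lem:main} (carried out jointly with \cref{lem:mainadapt} by induction on~$n$, with the recursive calls that handle subgraphs inside separating triangles going through \cref{lem:mainadapt}, which carries no additive term), and to isolate the single place where the slack~$\xi$ is consumed. First I would record the bookkeeping fact that, apart from the last-vertex insertion of \cref{sec:lemiisn}, every step inserts a batch of $t\ge 1$ vertices for at most $t(1-\chi)$ credits. Indeed, a non-problematic eligible vertex has a profile that either contains a pocket---so \cref{lem:defaultApproach1} applies with at most $1-\chi$ credits, the exceptional profile $\frown\smile$ being itself problematic---or consists solely of mountains, in which case the profile has length $d_i-1\ge 4$, so $d_i\ge 5$ and \cref{lem:defaultApproach2} applies at cost $5-d_i\le 0$; and when all eligible vertices are problematic with $i<n$, the steps of \cref{sec:nondefault}---\cref{lem:degreetwo} and \cref{lem:samepivot} ($t=2$, cost $\le 2(1-\chi)$), \cref{lem:manyleft} ($t=h\ge 3$, cost $\le h(1-\chi)$), and the region insertion of \cref{app:regions} ($t=1+|\mathcal{V}_u|$, recursive calls included)---all respect the $1-\chi$-per-vertex budget.

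With this in hand, I would run the algorithm to extend $v_1,\dots,v_i$ one batch at a time and distinguish two outcomes. If all of $v_{i+1},\dots,v_n$ get inserted, the resulting diagram $D'$ is an extensible arc diagram for $G$ and, summing the batch costs, $\mathrm{cost}(D')\le\mathrm{cost}(D)+(n-i)(1-\chi)$, so \cref{lem:main} holds with $\xi=0$. Otherwise the algorithm reaches a configuration in which $v_n$ is the only vertex not yet inserted and the default insertion of \cref{sec:default} fails to place it within $1-\chi$ credits; by \cref{lem:defaultApproach1,lem:defaultApproach2} this forces $v_n$ to be problematic. I would then simply stop: the current diagram $D'$ is an extensible arc diagram for $G_{n-1}=G\setminus\{v_n\}$ (extensibility, in particular \ref{i:contour}, is maintained after every batch), the $n-i-1$ vertices $v_{i+1},\dots,v_{n-1}$ have each been inserted for at most $1-\chi$ credits, whence $\mathrm{cost}(D')\le\mathrm{cost}(D)+(n-i-1)(1-\chi)$, and $v_n$ is problematic for $D'$; this is the second alternative.

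The only genuine work is the bookkeeping claim itself: re-reading the cost bounds of \cref{lem:defaultApproach1,lem:defaultApproach2,lem:degreetwo,lem:samepivot,lem:manyleft} and of the case analysis in \cref{app:regions}, and checking that, under the standing hypothesis $\chi\le 1/5$, each leaves at least the stated slack---routine, since those statements are already phrased with the needed margins. The one point that needs a moment of care is that $v_n$, being fixed by $C_\circ(G)=v_1v_2v_n$, is necessarily the last vertex the algorithm ever inserts, so that a diagram obtained by stopping right before $v_n$ really is a diagram for $G\setminus\{v_n\}$ on which ``$v_n$ is problematic'' makes sense; I expect no difficulty beyond that.
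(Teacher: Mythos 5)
Your proposal is correct and is essentially the paper's own argument made explicit. The paper never gives a freestanding proof of \cref{lem:mainstrong}; it merely observes in \cref{sec:lemiisn} that the additive slack $\xi$ is consumed only in the final singleton insertion of a problematic $v_n$ (cost up to $1+\chi$ instead of $1-\chi$), and then declares that this amounts to proving the dichotomy in \cref{lem:mainstrong}. You reconstruct exactly that reasoning: every batch (default insertion, \cref{lem:degreetwo}, \cref{lem:samepivot}, \cref{lem:manyleft}, and the region processing in \cref{app:regions}) respects the $1-\chi$-per-vertex budget and preserves extensibility; $v_n$ is forced to be the last vertex of any canonical ordering starting from the edge $v_1v_2$ of $C_\circ(G)=v_1v_2v_n$; so if $v_n$ lands within a batch or is a non-problematic singleton, $\xi=0$ holds, and otherwise stopping at $G_{n-1}$ yields the second alternative. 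Your classification of non-problematic eligible vertices (any pocket in the profile $\Rightarrow$ \cref{lem:defaultApproach1} within budget except for $\frown\smile$, which is problematic; all-mountain $\Rightarrow d_i\ge 5$, so \cref{lem:defaultApproach2} costs $\le 0$) is also right and is the standing implicit case analysis that justifies the ``at most $1-\chi$ per vertex'' bookkeeping throughout \cref{sec:default,sec:nondefault}. No gap.
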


\begin{figure}[htbp]
  \centering
  {\captionsetup{singlelinecheck=true}
  \begin{minipage}[b]{.29\textwidth}
    \centering
    \includegraphics[page=32]{open}
    \subcaption{$\mathcal{T}(3,\frown^2)$}  
  \end{minipage}\hfill
  \begin{minipage}[b]{.37\textwidth}
    \centering
    \renewcommand{\thelinenumber}{}%
    \includegraphics[page=33]{open}
    \subcaption{$\mathcal{T}(4,\frown^3)$}  
  \end{minipage}\hfill
  \begin{minipage}[b]{.29\textwidth}
    \centering
    \renewcommand{\thelinenumber}{}%
    \includegraphics[page=34]{open}
    \subcaption{$\mathcal{T}(3,\frown\smile)$}  
  \end{minipage}\hfill}
  \caption{Inserting a final problematic vertex~$v_n$ for a cost of~$\le 1+\chi$.\label{fig:lastvertex}}
\end{figure}

To complete the proof of \cref{lem:mainadapt} for~$i=n$, we insert~$v$ as shown
in \cref{fig:lastvertexadapt}.

\begin{figure}[htbp]
  \centering
  {\captionsetup{singlelinecheck=true}
  \begin{minipage}[b]{.29\textwidth}
    \centering
    \includegraphics[page=36]{open}
    \subcaption{$\mathcal{T}(3,\frown^2)$}  
  \end{minipage}\hfill
  \begin{minipage}[b]{.37\textwidth}
    \centering
    \renewcommand{\thelinenumber}{}%
    \includegraphics[page=37]{open}
    \subcaption{$\mathcal{T}(4,\frown^3)$}  
  \end{minipage}\hfill
  \begin{minipage}[b]{.29\textwidth}
    \centering
    \renewcommand{\thelinenumber}{}%
    \includegraphics[page=38]{open}
    \subcaption{$\mathcal{T}(3,\frown\smile)$}  
  \end{minipage}\hfill}
  \caption{Inserting a final problematic vertex~$v_n$ for a cost of~$\le 1-\chi$.\label{fig:lastvertexadapt}}
\end{figure}

\section{Omitted Proofs from \texorpdfstring{\cref{sec:nondefault}}{Section 4}}

\lemu*
\begin{proof}\label{Plemu}
  We know that~$\mathcal{E}_{i-1}\ne\emptyset$ and that all vertices
  in~$\mathcal{E}_{i-1}$ are problematic. Assume for the sake of a contradiction
  that for every~$v\in\mathcal{E}_{i-1}$ we
  have~$\mathrm{pc}(v)\in\mathcal{E}_{i-1}$. Then there exists a cyclic
  sequence~$u_0,\ldots,u_k$ of eligible vertices, for~$k\ge 1$, such
  that~$\mathrm{pc}(u_j)=u_{(j+1)\,\mathrm{mod}\,k}$, for all~$0\le j\le k$. We may assume that all vertices~$u_0,\ldots,u_k$ have left pivot type as otherwise we can apply
  \cref{lem:samepivot}. 

  Every edge of~$P_\circ(G_{i-1})$ is covered by at most one vertex
  from~$\mathcal{E}_{i-1}$, and conversely every vertex in~$\mathcal{E}_{i-1}$
  covers some subpath of at least one consecutive edge(s)
  of~$P_\circ(G_{i-1})$. Thus, we can order the vertices in~$\mathcal{E}_{i-1}$
  from left to right according to the part of~$P_\circ(G_{i-1})$ they
  cover. Without loss of generality let~$u_0$ be the leftmost vertex
  among~$u_0,\ldots,u_k$, and let~$\ell$ be the leftmost neighbor of~$u_0$
  on~$P_\circ(G_{i-1})$. Then by the left-to-right order the edges
  of~$P_\circ(G_{i-1})$ covered by~$u_1$ are to the right of the edges
  of~$P_\circ(G_{i-1})$ covered by~$u_0$. At the same time~$u_1$ is adjacent
  to~$\ell$ because~$u_1=\mathrm{pc}(u_0)$. It follows
  that~$R_{i-1}(u_1)\supset R_{i-1}(u_0)$, which by \cref{lem:eligible} is in
  contradiction to~$u_1\in\mathcal{E}_{i-1}$.
\end{proof}

\lemonepivot*
\begin{proof}\label{Plemonepivot}
  Let~$v\in X_j\cap\mathcal{E}_{i-1}$ with~$\mathrm{pc}(v)=u$. Then~$u$ is
  adjacent to~$\mathrm{p}(v)$ in~$G$. As~$u$ has only two neighbors
  on~$P_\circ(G_{i-1})\cap\partial X_j$, 
  we have~$\mathrm{p}(v)\in\{w_j,w_{j+1}\}$. So, if~$v$ has left pivot type,
  then~$\mathrm{p}(v)=w_j$ and~$v$ is the unique vertex that covers the edge
  of~$P_\circ(G_{i-1})$ whose left endpoint is~$w_j$. Else~$v$ has right pivot
  type, $\mathrm{p}(v)=w_{j+1}$, and $v$ is the unique vertex that covers the
  edge of~$P_\circ(G_{i-1})$ whose right endpoint is~$w_{j+1}$.
\end{proof}

\lemrightpivot*
\begin{proof}\label{Plemrightpivot}
  For every~$v\in X_j$, we have~$\mathrm{pc}(v)\in X_j\cup\{u\}$ by
  planarity. Therefore, by the choice of~$u$ as a minimal element
  of~$\mathcal{U}$, we
  have~$\mathrm{pc}(v)\in\mathcal{E}_{i-1}\cup\{u\}$. If~$v$ has right pivot
  type, then by \cref{lem:samepivot} we
  have~$\mathrm{pc}(v)\notin\mathcal{E}_{i-1}$ and, therefore,
  $\mathrm{pc}(v)=u$. Now the statement follows from \cref{lem:onepivot}.
\end{proof}

\lemleftpivot*
\begin{proof}\label{Plemleftpivot}
  For every~$x\in Q$, we have~$\mathrm{pc}(x)\in X_j\cup\{u\}$ by
  planarity. Thus, by the choice of~$u$ (as a minimal element of~$\mathcal{U}$)
  either~$\mathrm{pc}(x)=u$ or~$x'=\mathrm{pc}(x)\in\mathcal{E}_{i-1}$. By
  \cref{lem:onepivot} the former case applies to at most one vertex of~$Q$. In
  the latter case we may assume that~$x'\in
  Q$ as otherwise we can apply \cref{lem:samepivot}. Each~$y\in\mathcal{E}_{i-1}$ covers a subpath~$\sigma(y)$
  of~$P_\circ(G_{i-1})$ and has no other neighbors
  on~$P_\circ(G_{i-1})$. As~$x'=\mathrm{pc}(x)$, we know that~$x'$ is adjacent
  to~$\mathrm{p}(x)$, which is the left endpoint of~$\sigma(x)$;
  thus~$\mathrm{p}(x)$ is also the right endpoint of~$\sigma(x')$. Therefore, we
  can order the vertices in~$Q$ from left to right, according to the order of
  the corresponding paths~$\sigma(\cdot)$ on~$P_\circ(G_{i-1})$. For the
  leftmost vertex~$x_1$ in this order, we must have~$\mathrm{pc}(x_1)=u$.
\end{proof}

\lemregions*
\begin{proof}\label{Plemregions}
  Assume for a contradiction that~$c_e\ne u$ and~$c_e\notin\mathcal{E}_{i-1}$.
  As~$c_e\ne u$, by planarity~$c_e\in X_j$ and, theferore,
  $R_{i-1}(c_e)\subsetneq R_{i-1}(u)$ and~$c_e\prec
  u$. As~$c_e\notin\mathcal{E}_{i-1}$, by \cref{lem:eligible} there exists a
  vertex~$v\in R_{i-1}(c_e)\cap\mathcal{E}_{i-1}$. By
  planarity~$v'=\mathrm{pc}(v)\in R_{i-1}(c_e)\cup\{c_e\}\subsetneq R_{i-1}(u)$,
  and by the choice of~$u$ (as a minimal element of~$\mathcal{U}$) we
  have~$v'\in\mathcal{E}_{i-1}$. In particular, as~$c_e\notin\mathcal{E}_{i-1}$,
  we have~$v'\ne c_e$. By \cref{lem:rightpivot} both~$v$ and~$v'$ have left
  pivot type. Thus, by \cref{lem:leftpivot} there is a sequence~$x_1,\ldots,x_q$
  of eligible vertices, with~$x_{q-1}=v'$ and~$x_q=v$, such
  that~$x_h=\mathrm{pc}(x_{h+1})$, for all~$1\le h\le q-1$,
  and~$\mathrm{pc}(x_1)=u$. In particular, we have~$x_1\notin R_{i-1}(c_e)$
  because~$u\notin R_{i-1}(c_e)\cup\{c_e\}$. Let~$h\ge 1$ be maximal such
  that~$x_h\notin R_{i-1}(c_e)$, and note that~$1\le h\le q-2$.
  Then~$x_{h+1}\in R_{i-1}(c_e)$ and, therefore,
  $x_h=\mathrm{pc}(x_{h+1})\in R_{i-1}(c_e)\cup\{c_e\}$. It follows
  that~$x_h=c_e$, which, in particular, implies that~$c_e\in\mathcal{E}_{i-1}$,
  a contradiction.
\end{proof}

\section{Processing regions}
\label{app:regions}

Using the insights on the type and structure of eligible vertices within the
regions covered by our selected ``minimally noneligible''
vertex~$u\in\mathcal{U}$ that we have developed in \cref{sec:nondefault} 
we can
now describe how to handle the generic case. It consists of processing~$u$ along
with all regions~$X_1,\ldots,X_{k-1}$ covered by~$u$, thereby
adding~$\nu:=|R_{i-1}(u)\cap V|+1$ vertices to the diagram. So our main goal is
to extend the given extensible arc diagram for~$G_{i-1}$ to an extensible (for
\cref{lem:main}) or at least valid (for \cref{lem:mainadapt}) arc diagram
for~$G_{i-1+\nu}$.

We can classify the regions covered by~$u$ into four different types. Each
region is either \emph{empty}, \emph{left pivot}, \emph{right pivot}, or
\emph{both pivot}---depending on whether it contains no vertices of~$G$, or at
least one eligible vertex of left, right, or both pivot types, respectively. An
empty region has a unique edge of~$P_\circ(G_{i-1})$ on its boundary; depending
on whether this edge is a pocket or mountain we call the corresponding region an
\emph{empty pocket} or an \emph{empty mountain}, respectively.

We proceed in several steps. As a general rule, we process~$X_{k-1},\ldots,X_1$
in this order from right to left. When processing~$X_j$ we assume that~$X_j$ is
not empty and that~$u$ and all edges and vertices inside or on the boundary
of~$X_h$, for all~$h>j$, are placed already; specifically, the edge~$uw_{j+1}$,
which is shared between~$X_{j+1}$ and~$X_j$, is drawn already, and it is already
paid for. 

\begin{enumerate}[resume*=inv]\setlength{\itemindent}{\labelsep}
\item\label{i:uw} The region~$X_j$ is not empty. If~$uw_{j+1}$ is a mountain,
  then it carries~$1-\chi$ credits, and if~$uw_{j+1}$ is a pocket, then it
  carries~$2\chi$ credits.
\end{enumerate}

As an initialization we process some regions~$X_{j+1},\ldots,X_{k-1}$ so as to
establish \ref{i:uw} for~$X_j$. Note that there exists a region~$X_j$,
with~$1\le j<k$, that is nonempty because~$u\notin\mathcal{E}_{i-1}$. Moreover, in the following procedure, if all regions $X_{h'}$ with $h' \in \{j-1,\ldots,j'+1\}$ are empty for some $j'$, we process $X_j,X_{j-1},\ldots,X_{j'+1}$ together so that the next region $X_{j'}$ to be processed is non-empty again.

\subsection{Initialization: Placing~u and
  selecting~X${}_\mathbf{j}$}\label{sec:placement}

\subparagraph{Special case in the proof of \cref{lem:mainadapt}: $u=v_n$.} In this case,  the placement
of~$u$ is determined, as~$u$ must be the rightmost vertex on the
spine. As~$w_k=v_2$, we have to ensure that the edge~$uw_k$ is not drawn as a pocket. 
Let~$X_j$ be the rightmost region that is not empty. We
place~$u$ as the rightmost vertex on the spine and draw all
edges~$uw_k,\ldots,uw_{j+1}$ as mountains and put~$1-\chi$ credits, paid by the
new vertex~$u$, on~$uw_{j+1}$ so as to establish \ref{i:uw} for~$X_j$.


\subparagraph{General cases.} In all other cases, we have to place~$u$ somewhere between~$v_1$ and~$v_2$ on
the spine. To this end, we will select a region~$X_j$, for some~$1\le j<k$, and
place~$u$ as a part of processing~$X_j$.
%
%
We start with the rightmost region~$X_{k-1}$ and work our way from there to the
left. We may suppose without loss of generality that~$X_{k-1}$ is not an empty
mountain. To see this, suppose that~$X_{k-1}$ is an empty mountain. Then we
continue as if~$X_{k-2}$ was the rightmost region. Once all regions are
processed, we add the edge~$uw_k$ to the diagram as a mountain. The costs can be
paid for by a mountain in~$X_{k-1}$ whose left endpoint is covered by~$u$.


If~$X_{k-1}$ is an empty pocket, then we place~$u$ into this pocket. Let~$X_j$
be the rightmost region that is not empty. We pay~$\chi$ credits for the
pocket~$uw_k$, which can be paid for using the~$\chi$ credits on the pocket
of~$X_{k-1}$. Then to establish \ref{i:uw} we have to pay~$1-\chi$ credits
for~$uw_{j+1}$, which is exactly what the new vertex~$u$ provides. So it remains
to consider the case~$X_{k-1}\cap V\ne\emptyset$ only.  We distinguish three
subcases according to the type of~$X_{k-1}$. In all of them, we consider the
plane graph~$G'=G[V_{i-1}\cup X_{k-1}\cup\{u\}]$.

\subparagraph{X$_{\mathbf{k-1}}$ is a left pivot region.}
Then, by \cref{lem:manyleft} we may assume that at most one face of~$G'$ may contain other vertices of~$G$, namely the
triangle~$\Delta=uc_sw_k$ (shaded in figures).

If~$s=1$, then we insert~$c_1$ by pushing down the leftmost mountain it covers
and place~$u$ into the pocket to the left of~$c_1$, see
\cref{fig:allleft}~(left). The figure shows the case that~$c_1$
is~$\mathcal{T}(2,\frown)$; if~$c_1$ covers more mountains to the right, then
the additional mountain(s) at~$c_1$ can be paid for using the credit(s) on the
mountains that are covered, see \cref{fig:allleft}~(middle).

\begin{figure}[htbp]
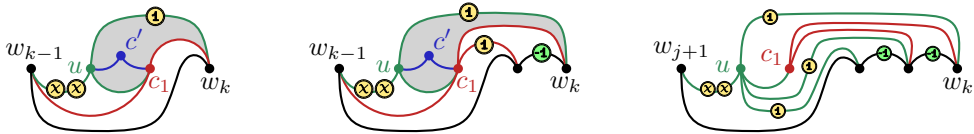

  \begin{minipage}[b]{.3\textwidth}
    \centering\includegraphics[page=23]{open}
  \end{minipage}\hfill
   \begin{minipage}[b]{.31\textwidth}
    \centering\includegraphics[page=20]{open}
  \end{minipage}\hfill
   \begin{minipage}[b]{.38\textwidth}
    \centering\includegraphics[page=17]{open}
  \end{minipage}
  \caption{Only one vertex in~$X_{k-1}$ and it has left pivot
    type.\label{fig:allleft}}
\end{figure}

If~$\Delta\cap V=\emptyset$, then the costs are~$1+2\chi\le 2(1-\chi)$,
for~$\chi\le 1/4$, which the two new vertices~$u$ and~$c_1$ can pay. This
suffices to establish \ref{i:uw} for~$X_{k-2}$ if~$X_{k-2}$ is
nonempty. If~$X_{k-2}$ is empty and~$\Delta\cap V=\emptyset$, then we use the
following diagram instead. Let~$j$ be minimal such that all
of~$X_{j+1},\ldots,X_{k-2}$ are empty. We place~$u$ into the unique edge of~$X_{j+1}$
on~$P_\circ(G_{i-1})$, pushing it down if it is a mountain, and add all
edges~$uw_{j+1},\ldots,uw_{k-1}$ as proper arcs. Then we push down all mountains
with left endpoint~$u$ and place~$c_1$ to the right of~$u$. Finally, add~$uw_k$
as a mountain; see \cref{fig:allleft}~(right). Since by assumption, $c_1$ has left pivot type, $u$ covers only mountains. Thus,  the costs for the biarcs at~$u$
can be paid for by the mountains covered by~$u$, if~$j+1<k-2$, plus by the at
least one mountain covered by~$c_1$. So to establish \ref{i:uw} for~$X_j$ we
only have to pay~$2\chi$ credits for the pocket to the left of~$u$ and one
credit for~$uw_k$. This amounts to~$1+2\chi\le 2(1-\chi)$, for~$\chi\le 1/4$,
which the two new vertices~$u$ and~$c_1$ can pay. This approach also works in
case~$j=0$, we even paid~$\chi$ credits too much for~$uw_1$.

Else we have~$\Delta\cap V\ne\emptyset$, and using \cref{lem:mainadapt} we
inductively obtain a valid diagram~$D$ for the subgraph of~$G$ induced by
taking~$\Delta$ as an outer triangle together with all vertices inside,
with~$uc_1$ as a starting edge and~$w_k$ as a last vertex. Then we plug~$D$
into~$\Delta$. If the edge~$c_1w_k$ is drawn as a biarc in~$D$, then we push
down all mountains with left endpoint~$c_1$ (if any exist) to make room. This is
where we need the credits on these mountains if~$d_{i-1}(c_1)>2$. All mountains
of~$D$ with left endpoint~$u$ carry a credit by \ref{i:mountainMoney}
for~$D$. As for the costs, let~$c'$ be the vertex that covers~$uc_1$ in~$D$. We
pay~$2\chi$ credits to initialize the pockets incident to~$c'$ in~$D$. We also
have to account for the fact that~$w_k$ is considered to contribute~$1-\chi$
credits to~$D$, whereas we had already accounted for~$w_k$ in~$G_{i-1}$. In
return the edge~$uw_k$ is paid for as a part of~$D$. Finally, we have to
place~$2\chi$ credits on~$w_{k-1}u$ to establish \ref{i:uw} for~$X_{k-2}$
if~$X_{k-2}$ is nonempty. Otherwise, let~$j$ be minimal such that all
of~$X_{j+1},\ldots,X_{k-2}$ are empty and add all
edges~$uw_{j+1},\ldots,uw_{k-1}$ as mountains. This costs one credit, for the
leftmost mountain~$uw_{j+1}$. So in any case the costs are at
most~$2\chi+(1-\chi)+1=2+\chi\le 3(1-\chi)$, for~$\chi\le 1/4$, which the three
new vertices~$u,c_1,c'$ can pay. Either we have established \ref{i:uw} for
some~$X_j$ or, if~$j=0$, that is, if~$X_{k-1}$ is the only nonempty region, then
this step is complete.

It remains to consider the case~$s>1$. If~$s\ge 3$, then for each vertex~$c_h$,
with~$h\ne s-1$, we push down the leftmost mountain it covers. Then we place
first~$c_{s-1}$ and then~$u$ into the pocket to the left of~$c_s$; see
\cref{fig:allleftmore}~(left).  If~$s=2$, then we push down the leftmost
mountain covered by~$c_1$ and place first~$u$ and then~$c_2$ into the pocket to
the left of~$c_1$; see \cref{fig:allleftmore}~(right). By \ref{i:mountainMoney}
the costs for the biarcs created can be paid for by using the credits on the
mountains that are pushed down. If there are any mountains with left
endpoint~$c_s$ other than~$c_sw_k$, they can be paid for using the credits on
the mountains covered by~$c_s$ to the right. We pay at most one credit for the
edge~$uw_{k-1}$.

\begin{figure}[htbp]
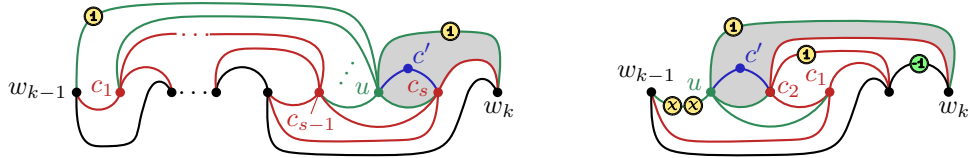

  \begin{minipage}[b]{.56\textwidth}
    \centering\includegraphics[page=21]{open}
  \end{minipage}\hfill
   \begin{minipage}[b]{.43\textwidth}
    \centering\includegraphics[page=22]{open}
  \end{minipage}
  \caption{All $s\ge 2$ vertices have left pivot type
    in~$X_{k-1}$.\label{fig:allleftmore}}
\end{figure}

If~$\Delta\cap V=\emptyset$, then we also pay one credit for~$uw_k$, for overall
costs of at most~$2\le 3(1-\chi)\le(s+1)(1-\chi)$, for~$\chi\le 1/3$. Else we
have~$\Delta\cap V\ne\emptyset$, and using \cref{lem:mainadapt} we inductively
obtain a valid diagram~$D$ for the subgraph of~$G$ induced by taking~$\Delta$ as
an outer triangle together with all vertices inside, with~$uc_s$ as a starting
edge and~$w_k$ as a last vertex. Then we plug~$D$ into~$\Delta$. Regarding the
costs we argue as above in the case~$s=1$ to bound them
by~$2\chi+(1-\chi)+1=2+\chi\le 4(1-\chi)\le(s+2)(1-\chi)$, for~$\chi\le 2/5$.

Note that in all cases above we accounted for a cost of one credit for the
edge~$uw_{k-1}$ (even though for~$s=2$ we would have to pay~$2\chi$ credits
only). Therefore, for any sequence~$X_{k-2},\ldots,X_{j+1}$ of empty regions, we
can afford to add the edges~$uw_{k-2},\ldots,uw_{j+2}$ and put one credit
on~$uw_{j+2}$ so as to establish \ref{i:uw} for~$X_j$ if it is nonempty, or even
complete this step in case~$j=0$ (that is, if~$X_{k-1}$ is the only nonempty
region).

\subparagraph{X$_{\mathbf{k-1}}$ is a right pivot region.}
If~$k=2$ or if~$X_{k-2}$ is nonempty, then we push down the mountain covered by~$c_1$ and then place~$u$ into the pocket to the left of~$c_1$, see \cref{fig:onlyright}~(left).  If~$\Delta\cap V=\emptyset$, then the costs to
either finish this step (if~$k=2$) or establish \ref{i:uw} for~$X_{k-2}$ are~$1+\chi\le 2(1-\chi)$, for~$\chi\le 1/3$,
which the two new vertices~$u$ and~$c_1$ can pay. Otherwise, using
\cref{lem:mainstrong} we inductively obtain an extensible diagram~$D$ for the
subgraph~$G_\Delta$ of~$G$ induced by taking~$\Delta$ as an outer triangle
together with all vertices inside, with~$w_{k-1}c_1$ as a starting edge and~$u$
as a last vertex. Then we plug~$D$ into~$\Delta$ and add the
edge~$uw_k$. Let~$c'$ be the vertex that covers~$w_{k-1}c_1$ in~$D$. The costs
are~$2\chi$ credits to initialize the two pockets incident to~$c'$ in~$D$, one
credit for the mountain~$uw_k$, and possibly an additional~$\chi$ credits if the
edge~$w_{k-1}u$ is a pocket in~$D$. To compensate we may take the~$\chi$ credits
on the pocket covered by~$c_1$. Thus, by \cref{lem:mainstrong} the costs to
add~$c_1$ and~$c'$ are at most~$1+2\chi\le 2(1-\chi)$, for~$\chi\le 1/4$, or
there exists an appropriate extensible diagram~$D'$ for~$G_\Delta\setminus\{u\}$
for which~$u$ is problematic. In the latter case, we just plug~$D'$
into~$\Delta$. Then, we distinguish two cases.

If~$u$ has left pivot type in~$G_\Delta$, we push down the leftmost arc covered
by~$u$ in~$D'$ to place~$u$ there. In this case we pay~$2\chi$ credits to initialize the pockets incident to~$c'$. Then we put~$2\chi$ credits on the pocket~$w_{k-1}u$ and two credits on mountains with left endpoint~$u$. There could be three or four mountains with left endpoint~$u$, but any but the first and the last can be paid using the credit on a corresponding mountain covered by the insertion of~$u$. Finally, we can take the~$\chi$ credits on the pocket covered by~$c_1$. So the costs to add~$c_1,c',u$ and establish \ref{i:uw}
for~$X_{k-2}$ are at most~$4\chi+2-\chi=2+3\chi$. This is too much by~$\chi$ because in order to be upper bounded by~$3(1-\chi)$ we would need~$\chi\le 1/6$. However, recall that either~$k=2$ or~$X_{k-2}$ is nonempty by assumption. If~$k=2$, then there is no need to place~$2\chi$ credits on~$w_{k-1}u$ and we can take the missing~$\chi$ credits from there. Otherwise, we undo the insertion of~$u$ but keep the drawing~$G_\Delta\setminus\{u\}$. Next, we pretend that~$X_{k-2}$ is the rightmost region and process it accordingly, as described in this section. Doing so also places~$u$, somewhere to the left of~$w_{k-1}$. Finally, in order to incorporate~$X_{k-1}$ we add the missing edges to~$u$ as mountains and put one credit on each of them. The credits for those mountains that cover~$G_\Delta\setminus\{u\}$ can be taken from the mountains of~$G_\Delta\setminus\{u\}$ that are covered by~$u$. We need to pay one credit for the mountain~$uw_k$ only. In addition, to insert~$c_1$ and~$c'$ we pay~$2\chi$ credits to initialize the pockets incident to~$c'$, but we can take the~$\chi$ credits from the pocket covered by~$c_1$. Thus, these costs are~$1+2\chi-\chi=1+\chi$, and we can afford to pay another~$\chi$ credits, to cover the missing~$\chi$ credits in case that we end up in this very same case when processing~$X_{k-2}$. So in total we account for~$1+2\chi\le 2(1-\chi)$ credits to insert~$c_1$ and~$c'$, for~$\chi\le 1/4$.

Otherwise, the vertex~$u$ has right pivot type in~$G_\Delta$ and we just place it into the pocket of~$D'$ it covers. We pay~$2\chi$ credits to initialize the pockets incident to~$c'$, which can be paid by the~$\chi$ credits each from the pockets covered by~$c_1$ and~$u$. Then we put~$1-\chi$ credits on the mountain~$w_{k-1}u$ and one credit on~$uw_k$. So the costs to add~$c_1,c',u$ and establish \ref{i:uw}
for~$X_{k-2}$ are at most~$2-\chi\le 3(1-\chi)$, for~$\chi\le 1/4$.

\begin{figure}[htbp]
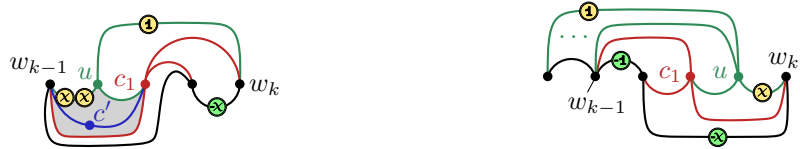

  \hfill\begin{minipage}[b]{.49\textwidth}
    \centering\includegraphics[page=28]{open}
  \end{minipage}\hfill
  \begin{minipage}[b]{.49\textwidth}
    \centering\includegraphics[page=31]{open}
  \end{minipage}\hfill
  \caption{Exactly one eligible vertex in~$X_{k-1}$ and it has right pivot
    type.\label{fig:onlyright}}
\end{figure}

It remains to consider the case that~$X_{k-2}$ is empty. Select~$j$ to be minimal such that all of~$X_{j+1},\ldots,X_{k-2}$ are empty. If~$\Delta\cap V=\emptyset$, then we place~$c_1$ into the pocket it covers and then place~$u$ into the pocket to the right of~$c_1$, see \cref{fig:onlyright}~(right). We place one credit on the mountain~$uw_{j+1}$ and~$\chi$ credits on the pocket~$uw_k$. As we can take~$1+\chi$ credits from the edges covered by~$c_1$, the costs are zero to establish \ref{i:uw} for~$X_j$, or end this step if~$j=0$. Otherwise, we have~$\Delta\cap V\ne\emptyset$ and proceed exactly as described above for the case that~$X_{k-2}$ is nonempty---except that we also include all of the regions~$X_{j+1},\ldots,X_{k-2}$ in the induction. To formally obtain a triangulation, we add virtual edges~$c_1w_{j+1},\ldots,c_1w_{k-2}$, which we immediately remove from the resulting drawing again. The analysis remains unchanged.

\subparagraph{X$_{\mathbf{k-1}}$ is a both pivot region.} 
By \cref{lem:rightpivot} there is exactly one vertex~$c_s$ of right pivot type
in~$X_j\cap\mathcal{E}_{i-1}$ and we have~$\mathrm{pc}(c_s)=u$. All vertices
in~$(X_j\cap V)\setminus\mathcal{E}_{i-1}$ (if any exist) are in the open
quadrilateral~$\Box=c_{s-1}w'c_su$.

If~$\Box\cap V=\emptyset$, then for each~$c_h$, with~$1\le h\le s$, we push down
the leftmost mountain covered by~$c_h$ and place~$c_h$ there. Then we place~$u$
into the pocket to the left of~$c_s$; see \cref{fig:bothpivot}~(left), which
shows the case~$s=2$. The costs are~$2-\chi\le 3(1-\chi)$, for~$\chi\le 1/2$,
which the at least three new vertices~$u,c_{s-1},c_s$ can pay. As~$w_{k-1}u$ is
a mountain that carries one credit, we can add more mountains from~$u$ to the
left in case there are empty regions there and then move the credit to the
leftmost such mountain. So we either establish \ref{i:uw} for some~$X_j$,
with~$1\le j\le k-2$, or we finish this step if~$X_{k-1}$ is the only nonempty
region.

\begin{figure}[htbp]
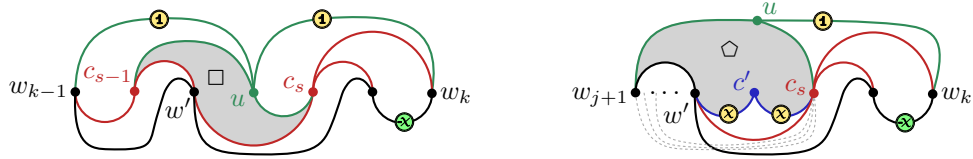

  \begin{minipage}[b]{.51\textwidth}
    \centering\includegraphics[page=30]{open}
  \end{minipage}\hfill
  \begin{minipage}[b]{.48\textwidth}
    \centering\includegraphics[page=40]{open}
  \end{minipage}
  \caption{There are eligible vertices of both pivot types
    in~$X_{k-1}$.\label{fig:bothpivot}}
\end{figure}

Otherwise, let~$j$ be minimal such that all of~$X_{j+1},\ldots,X_{k-2}$ are
empty. Note that we may have~$j=k-2$ if~$X_{k-2}$ is nonempty or~$j=0$
if~$X_{k-1}$ is the only nonempty region. Let~$c'$ be the vertex inside~$\Box$
that forms a triangle with~$w'c_s$ in~$G$, and note that~$w'$ is the only
neighbor of~$c'$ on~$P_\circ(G_{i-1})$. We place~$c_s$ by pushing down the
mountain it covers. Let~$\pentagon$ be the open region bounded the
path~$w'c'c_suw_{j+1}$ together with the part of~$P_\circ(G_{i-1})$
between~$w_{j+1}$ and~$w'$ in~$G$, and let~$G_{\pentagon}$ be the graph obtained
by adding the virtual edges~$w_{j+1}c_s,\ldots,w_{k-1}c_s$ (which are not
in~$G$) to the subgraph of~$G$ induced by the cycle~$\partial\pentagon$ together
with all vertices inside; see \cref{fig:bothpivot}~(right). Using
\cref{lem:main} we inductively obtain an extensible diagram~$D$
for~$G_{\pentagon}$, with~$w_{j+1}c_s$ as a starting edge, the profile $(w_{j+1},w',c',c_s)$ shown in
\cref{fig:bothpivot}~(right), and~$u$ as a last vertex. Then we remove the
virtual edges from~$D$, plug the resulting diagram into~$\pentagon$, add the
edge~$uw_k$ as a mountain, and place one credit on it. We also pay~$2\chi$
credits to initialize the two pockets incident to~$c'$ in~$D$ and
another~$2\chi$ credits for \cref{lem:main}. But we can take~$\chi$ credits from
the pocket covered by~$c_s$. So the costs to add~$c_s$ and~$c'$ are at
most~$1+3\chi\le 2(1-\chi)$, for~$\chi\le 1/5$, to either establish \ref{i:uw}
for~$X_j$ or finish this step.

\subsection{Processing the remaining regions}
If the initialization described in the previous section does not complete
processing of~$u$ and its regions already, then it establishes \ref{i:uw} for
some region~$X_j$, with~$1\le j\le k-1$. Denote the current working diagram
(for~$G[V_{i-1}\cup\bigcup_{h=j+1}^{k-1}X_h\cup\{u\}]$) by~$\Gamma$. As~$X_j$ is
nonempty by \ref{i:uw} the region~$X_j$ is either left, right, or both
pivot. These three different cases are discussed below. In all cases the
edge~$uw_j$ is drawn as a mountain and we place one credit on it. Therefore, any
number of empty regions~$X_h,\ldots,X_{j-1}$, for~$1\le h\le j$, are easy to
handle: Just add the edges~$uw_h,\ldots,uw_{j-1}$ and move the credit
from~$uw_j$ to~$uw_h$, to establish \ref{i:uw} for~$X_{h-1}$ or finish this step
if~$h=1$.

\subparagraph{X$_{\mathbf{j}}$ is a left pivot region.}
If~$s=1$ and~$\Delta\cap V=\emptyset$, then we place~$c_1$ by pushing down the
leftmost mountain it covers. We pay one credit for the mountain~$w_ju$, but we
can take the credits on~$w_{j+1}u$; see \cref{fig:allleftj}~(left). So by
\ref{i:uw} we pay at most~$1-2\chi$ credits, which the new vertex~$c_1$ is happy
to supply. If~$s=1$ and~$\Delta\cap V\ne\emptyset$, then we distinguish two
cases.

\begin{figure}[htbp]
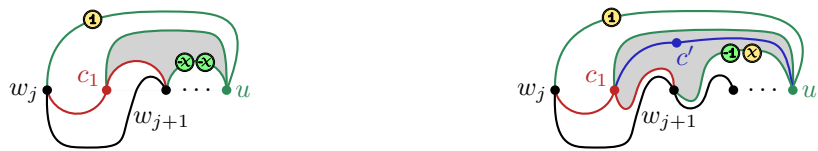

  \begin{minipage}[b]{.49\textwidth}
    \centering\includegraphics[page=26]{open}
  \end{minipage}\hfill
  \begin{minipage}[b]{.49\textwidth}
    \centering\includegraphics[page=25]{open}
  \end{minipage}
  \caption{Exactly one eligible vertex in~$X_j$, for~$j<k$, and it has left pivot
    type. Also observe that the solutions for subproblems reinserted into the gray shaded regions are actually rotated by $\pi$.}
    \label{fig:allleftj}
\end{figure}

If the edge~$w_{j+1}u$ is a mountain, then we push down the leftmost mountain
covered by~$c_1$. Using \cref{lem:mainstrong} we inductively obtain an
extensible diagram~$D$ for the subgraph~$G_\Delta$ of~$G$ induced by
taking~$\Delta$ as an outer triangle together with all vertices inside,
with~$uc_1$ as a starting edge and~$w_{j+1}$ as a last vertex. Note that~$D$
appears upside down compared to~$\Gamma$. But this~$180^\circ$ rotation is fine because
down-up biarcs remain down-up biarcs when turned upside down. Let~$c'$ be the
vertex that covers~$uc_1$ in~$D$.

The statement of \cref{lem:mainstrong} specifies two options. First we consider
the case that a valid diagram~$D$ for~$G_\Delta$ is obtained. In order to
plug~$D$ into~$\Delta$ in~$\Gamma$, we push down all mountains with left
endpoint~$u$ or~$w_{j+1}$ in~$D$ and all mountains with left endpoint~$w_{j+1}$
in~$\Gamma$; see \cref{fig:allleftj}~(right). By \ref{i:mountainMoney} for~$D$
and~$\Gamma$ these biarcs can be paid for using the corresponding mountain
credits. We pay~$2\chi$ to initialize the two pockets incident to~$c'$ in~$D$
and~$1-\chi$ for~$w_{j+1}$, which is part of~$\Gamma$ already. Further, as we
pay for~$uw_{j+1}$ as a part of~$D$, we get a refund for the~$1-\chi$ credits
that according to \ref{i:uw} are placed on~$uw_{j+1}$. We also need to place one
credit on the mountain~$w_ju$. So the costs to add~$c_1$ and~$c'$
are~$2\chi+(1-\chi)-(1-\chi)+1=1+2\chi\le 2(1-\chi)$, for~$\chi\le 1/4$.

The other option in \cref{lem:mainstrong} is that we obtain an extensible
diagram~$D'$ for~$G_\Delta\setminus\{w_{j+1}\}$ such that~$w_{j+1}$ is
problematic for~$D'$. Then we complete~$D'$ to a valid diagram~$D$
for~$G_\Delta$ as follows. As~$w_{j+1}$ has degree at least three in~$G_\Delta$,
there are three possibilities. If~$w_{j+1}$ is~$\mathcal{T}(3,\frown\smile)$
in~$D'$, then we just insert~$w_{j+1}$ into the pocket it covers, for a cost of
one (for the incident mountain). Then we proceed exactly as described above for
the first option of \cref{lem:mainstrong}, noting that we can save the~$\chi$
credits from the pocket incident to~$w_{j+1}$ in~$D$, so that in fact the costs
for inserting~$w_{j+1}$ into~$D'$ are~$1-\chi$, as they should be. Otherwise,
the vertex~$w_{j+1}$ is~$\mathcal{T}(z,\frown^{z-1})$ in~$D'$,
for~$z\in\{3,4\}$. We complete~$D'$ to~$D$ by placing~$w_{j+1}$ as the rightmost
vertex on the spine and drawing the edges incident to~$w_{j+1}$ as mountains, in
a similar fashion as for \cref{lem:mainadapt}. Next, we plug~$D$ into~$\Delta$,
making room by pushing down all mountains with left endpoint~$u$ in~$D$ and all
mountains with left endpoint~$w_{j+1}$ in~$\Gamma$; see
\cref{fig:allleftj2}~(left). As for the costs, we pay~$2\chi$ credits to
initialize the two pockets incident to~$c'$ in~$D$ and~$\chi$ credits for the
biarc~$uw_{j+1}$, which already carries~$1-\chi$ credits by \ref{i:uw}. We also
pay one credit each for the mountains~$w_jc_1$ and~$w_ju$, and we can take one
credit from one of the mountains covered by~$w_{j+1}$ in~$D$ that is not
incident to~$u$. So the costs to insert~$c_1$ and~$c'$ are at most~$2\chi+\chi+2-1=1+3\chi\le 2(1-\chi)$,
for~$\chi\le 1/5$.

\begin{figure}[htbp]
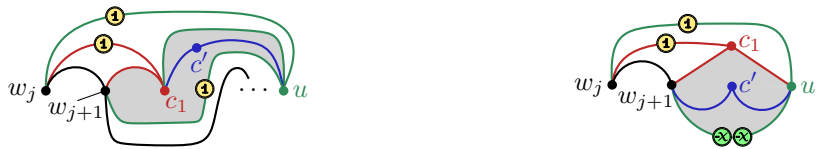

  \begin{minipage}[b]{.49\textwidth}
    \centering\includegraphics[page=35]{open}
  \end{minipage}\hfill
   \begin{minipage}[b]{.49\textwidth}
    \centering\includegraphics[page=24]{open}
  \end{minipage}
  \caption{Exactly one eligible vertex in~$X_j$, for~$j<k$, and it has left pivot
    type.\label{fig:allleftj2}}
\end{figure}

Otherwise, the edge~$w_{j+1}u$ is a pocket. Using \cref{lem:mainstrong} we
inductively obtain an extensible diagram~$D$ for the subgraph~$G_\Delta$ of~$G$
induced by taking~$\Delta$ as an outer triangle together with all vertices
inside, with~$w_{j+1}u$ as a starting edge and~$c_1$ as a last vertex; see
\cref{fig:allleftj2}~(right). Let~$c'$ be the vertex that covers~$w_{j+1}u$
in~$D$. Then we plug~$D$ into~$\Delta$. The~$2\chi$ credits to initialize the
two pockets incident to~$c'$ in~$D$ can be paid for by the credits that are
on~$w_{j+1}u$ by \ref{i:uw}. Then we need to pay one credit each for the
mountains~$w_jc_1$ and~$w_ju$. But in return we can take the at least~$1+\chi$
credits on the upper envelope of~$D$ because its edges are not incident to the
outer face anymore after adding the edge~$uw_j$. So by \cref{lem:mainstrong} the
costs to add~$c'$ are~$2-(1+\chi)=1-\chi$, or there exists an appropriate
extensible diagram~$D'$ for~$G_\Delta\setminus\{c_1\}$ for which~$c_1$ is
problematic. In the latter case, we just plug~$D'$ into~$\Gamma$ and then push
down~$w_jw_{j+1}$ to place~$c_1$ and draw all its edges to~$D'$ as
mountains. None of these edges remain on the outer face after adding~$w_ju$. So
the costs are one for~$w_ju$ and~$1\le 2(1-\chi)$, for~$\chi\le 1/2$, for the
two new vertices~$c_1$ and~$c'$.

It remains to consider the case~$s>1$. We place each vertex~$c_h$,
with~$1\le h<s$, by pushing down the leftmost mountain it covers, and
handle~$c_s$ in exactly the same way as in the case~$s=1$ described above. See
\cref{fig:manyleftslarge} for an example of the case
where~$\Delta\cap V\ne\emptyset$ and~$w_{j+1}u$ is a mountain. The edge~$w_jc_1$
is always a pocket, the only mountain that remains on the outer face is~$w_ju$,
and no additional biarc is created. Therefore, the same bounds on the costs as
for~$s=1$ also hold for~$s>1$, and in fact decrease by~$(s-1)(1-\chi)$, due to
the larger number of vertices added.

\begin{figure}[htbp]
    \centering\includegraphics[page=27]{open}
    \caption{All $\ge 2$ eligible vertices in~$X_j$, for~$j<k$, have left pivot
      type.\label{fig:manyleftslarge}}
\end{figure}

\subparagraph{X$_{\mathbf{j}}$ is a right pivot region.}
By \cref{lem:rightpivot} there is exactly one vertex~$c_1$
in~$X_j\cap\mathcal{E}_{i-1}$ and~$\mathrm{pc}(c_1)=u$. Let~$\Delta$ denote the
open triangle~$w_kc_1u$. We push down the leftmost mountain covered by~$c_1$,
see \cref{fig:onlyright2}. If~$\Delta\cap V=\emptyset$, then the costs are at
most~$1-3\chi$, which the new vertex~$c_1$ can pay. Otherwise, we
have~$\Delta\cap V\ne\emptyset$ and using \cref{lem:mainadapt} we inductively
obtain an extensible diagram~$D$ for the subgraph of~$G$ induced by
taking~$\Delta$ as an outer triangle together with all vertices inside,
with~$w_jc_1$ as a starting edge and~$u$ as a last vertex. Then we plug~$D$
into~$\Delta$. Let~$c'$ be the vertex that covers~$w_jc_1$ in~$D$. The at
least~$2\chi$ credits that are on~$uw_{j+1}$ by \ref{i:uw} can pay the
initialization of the two pockets incident to~$c'$. We can also take the~$\chi$
credits from the pocket covered by~$c_1$. We pay~$1-\chi$ 
credits for~$u$ in~$D$. So the total costs
to insert~$c_1$ and~$c'$ are at most~$1-2\chi<2(1-\chi)$.

\begin{figure}[htbp]
  \begin{minipage}[b]{\textwidth}
    \centering\includegraphics[page=29]{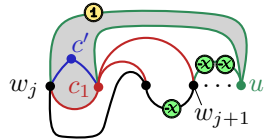}
  \end{minipage}
  \caption{All eligible vertices in~$X_j$, for~$j<k$, have right pivot
    type.\label{fig:onlyright2}}
\end{figure}

\subparagraph{X$_{\mathbf{j}}$ is a both pivot region.} 
By \cref{lem:rightpivot} there is exactly one vertex~$c_s$ of right pivot type
in~$X_j\cap\mathcal{E}_{i-1}$ and we have~$\mathrm{pc}(c_s)=u$. All vertices
in~$(X_j\cap V)\setminus\mathcal{E}_{i-1}$ are in the open
quadrilateral~$\Box=c_{s-1}w'c_su$. We argue in the same way as above in
\cref{sec:placement}, except that now~$u$ is placed to the right of~$w_k$ and so
we use \cref{lem:mainadapt} for the induction; see \cref{fig:bothpivot2}.

\begin{figure}[htbp]
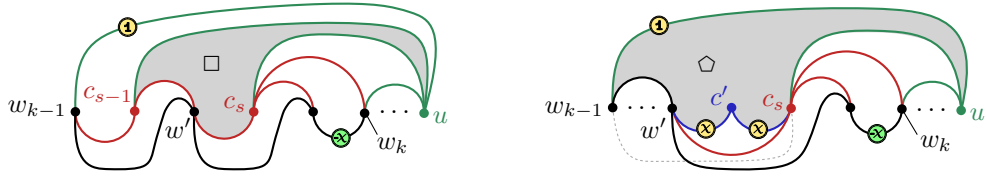

  \begin{minipage}[b]{.49\textwidth}
    \centering\includegraphics[page=41]{open}
  \end{minipage}\hfill
  \begin{minipage}[b]{.49\textwidth}
    \centering\includegraphics[page=42]{open}
  \end{minipage}
  \caption{There are eligible vertices of both pivot types in~$X_j$,
    for~$j<k$.\label{fig:bothpivot2}}
\end{figure}

\section{Omitted Proofs from \cref{sec:kleetopes}}
\label{app:kleetopes}

\thmdegreethree*

\begin{proof}
\label{p:degreethree}
  Let~$G$ be as in the statement, and let~$T$ denote the triangulation that results from removing all degree three vertices from~$G$. Then~$T$ has~$k=n-d$ vertices. We proceed in two steps. 
  
  In the first step, we obtain a monotone plane arc diagram for~$T$ with at most~$k-4=n-d-4$ biarcs, where every biarc is down-up and such that every triangle~$t$ in the diagram crosses the spine, that is, the interior of~$t$ intersects the spine in a line segment. In the second step, we place all degree three vertices of~$G\setminus T$ in the drawing, each vertex on the spine segment of the triangle in~$T$ that contains it in~$G$ and connect it to each of the three vertices of the triangle by a proper arc. As no biarcs are created in the second step, it suffices to argue how to obtain a diagram for~$T$ fulfilling Invariants \ref{i:biarcTypes}--\ref{i:biarcMoney} and \ref{i:contour} in the first step.

  We choose any canonical ordering~$w_1,\ldots,w_k$ for~$T$. Then we start off by drawing the edge~$w_1w_2$ as a pocket, into which we insert~$w_3$ and draw the edge~$w_1w_3$ as a pocket and the edge~$w_3w_2$ as a mountain, onto which we place one credit. It is easily verified that this diagram satisfies \ref{i:biarcTypes}--\ref{i:biarcMoney} and \ref{i:contour} and that each triangle in the diagram crosses the spine. Then we insert the vertices~$w_4,\ldots,w_k$ one by one while maintaining an arc diagram for~$T_i=T[\{v_1,\ldots,v_i\}]$ that satisfies \ref{i:biarcTypes}--\ref{i:biarcMoney} and \ref{i:contour}, for~$i=4,\ldots,k$, along with the property that all triangles cross the spine. Note that for each biarc both incident triangles cross the spine. Hence, pushing down a mountain maintains the spine crossing property. When inserting a new vertex~$v_i$ we distinguish two cases.

\begin{figure}[htbp]
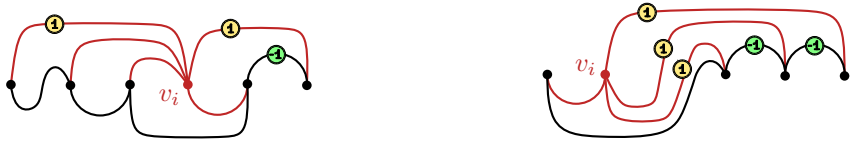

  \begin{minipage}[b]{.49\textwidth}
    \centering\includegraphics[page=44]{open}
  \end{minipage}\hfill
  \begin{minipage}[b]{.49\textwidth}
    \centering\includegraphics[page=43]{open}
  \end{minipage}
  \caption{Insert vertices so as to make every triangle cross the spine.\label{fig:klee:app}}
\end{figure}

  If~$v_i$ covers at least one pocket, then we place it into the rightmost pocket it covers; see \cref{fig:klee:app}~(left). All edges to vertices to the left of~$v_i$ are drawn as mountains. Only the mountain to the leftmost neighbor requires a credit so as to establish \ref{i:mountainMoney}. Every mountain covered by~$v_i$ to the left of~$v_i$ we push down, transforming it into a biarc. This transformation is paid for by the credit that is on the mountain by \ref{i:mountainMoney}. In this way we ensure that all triangles to the left of~$v_i$ cross the spine.
  The edge to the immediate neighbor of~$v_i$ to the right is drawn as a pocket. If there are further neighbors of~$v_i$ to the right, then we draw the edge to the rightmost neighbor as a mountain and all other edges in between as biarcs. The costs for each such mountain or biarc are paid using the credit on the mountain underneath whose left endpoint is covered by~$v_i$ (all underneath edges are mountains because we insert~$v_i$ into the rightmost pocket it covers). In this way we ensure that all triangles to the right of~$v_i$ cross the spine. Overall, the insertion of~$v_i$ costs one credit in this case.

  Otherwise, all edges covered by~$v_i$ are mountains; see \cref{fig:klee:app}~(right). We push down the leftmost such mountain to place~$v_i$ there. The edge to the immediate neighbor of~$v_i$ to the left is drawn as a pocket. The edge to the rightmost neighbor of~$v_i$ is drawn as a mountain, on which we place one credit. All other edges (which are to vertices in between that are to the right of~$v_i$) are drawn as biarcs. To pay for such a biarc we use the credit on the mountain underneath whose left endpoint is covered by~$v_i$. In this way we ensure that all triangles incident to~$v_i$ cross the spine. Overall, the insertion of~$v_i$ costs one credit in this case.

  It is easily checked that the Invariants \ref{i:biarcTypes}--\ref{i:biarcMoney} and \ref{i:contour} are maintained by the algorithm described above. Inserting each of~$v_3,\ldots,v_k$ costs one credit, which is~$k-2$ credits in total. Furthermore, we can (1)~take the credit spent to insert~$v_k$ because this mountain remains a mountain in the final diagram and (2)~observe that the edge on the outer face that is incident to~$v_2$ is a mountain in the diagram for~$T_j$, for all~$3\le j\le k$. In particular, as~$v_k$ has degree at least three in~$T$, its insertion covers the mountain on the outer face of the diagram for~$T_{k-1}$, and so we can also take back the credit on this mountain. Therefore, no more than~$k-4$ credits are spent in total. As by \ref{i:biarcMoney} every biarc in the diagram corresponds to a credit, the theorem follows.
\end{proof}

\end{document}